\newtheorem{lemma}{Lemma}
\newtheorem{theorem}{Theorem}
\newtheorem{corollary}{Corollary}
\newtheorem{observation}{Observation}
\newtheorem{example}{Example}
\newenvironment{proof}{\noindent\textit{Proof}.}{\hfill$\Box$}
\newcommand{\hide}[1]{}
\title{Minimum algorithm sizes for the gathering 
and related problems of autonomous mobile robots}
\author{Yuichi Asahiro and Masafumi Yamashita}
\date{August 28, 2023}
\begin{document}

\maketitle

\begin{abstract}
This paper investigates a swarm of autonomous mobile robots in the Euclidean plane,
under the semi-synchronous ($\cal SSYNC$) scheduler.
Each robot has a target function to determine a destination point 
from the robots' positions.
All robots in the swarm take the same target function conventionally.
We allow the robots to take different target functions,
and investigate the effects of the number of distinct target functions 
on the problem-solving ability,
regarding target function as a resource to solve a problem like time.
Specifically, we are interested in how many distinct target functions
are necessary and sufficient to solve a problem $\Pi$.
The number of distinct target functions necessary 
and sufficient to solve $\Pi$ is called 
the {\em minimum algorithm size} (MAS) for $\Pi$.
The MAS is defined to be $\infty$, 
if $\Pi$ is unsolvable even for the robots with unique target functions.

We show that the problems form an infinite hierarchy with respect to their MASs;
for each integer $c > 0$ and $\infty$,
the set of problems whose MAS is $c$ is not empty,
which implies that target function is a resource irreplaceable, e.g., with time.
We propose MAS as a natural measure to measure the complexity of a problem.

We establish the MASs for solving the gathering and related problems 
from any initial configuration, i.e., in a self-stabilizing manner.
For example, the MAS for the gathering problem is 2.
It is 3, for the problem of gathering {\bf all non-faulty} robots at a single point, 
regardless of the number $(< n)$ of crash failures.
It is however $\infty$,
for the problem of gathering {\bf all robots} at a single point, 
in the presence of at most one crash failure.
\end{abstract}

\section{Introduction}
\label{Sintroduction}

Swarms of anonymous oblivious mobile robots have been attracting 
many researchers for three decades, e.g., 
\cite{AP04,AOSY99,BDT13,CFPS12,CP05,CDFH11,DFSY15,
DPT19,DP09,Flocchini19,FPS12,FPSW99,ISKI12,Katreniak11,Prencipe19,
SY99,Viglietta19,YS10,YUKY17}.
An anonymous oblivious mobile robot, 
which is represented by a point that moves in the Euclidean space, 
looks identical and indistinguishable, 
lacks identifiers (i.e., anonymous) and communication devices,
and operates in Look-Compute-Move cycles:
When a robot starts a cycle,
it identifies the multiset of the robots' positions,
computes the destination point using a function
called {\em target function} as in \cite{AP04,CP05,CDFH11},
based only on the multiset identified,
and then moves towards the destination point.

All the papers listed above assume that all anonymous robots 
in a swarm take the same target function.
It makes sense:
Roughly speaking, 
robots taking different target functions can behave,
as if they had different identifiers,
regarding the target functions as their identifiers (which may not be unique).
On the other hand, 
robots with different identifiers can behave,
as if they had different target functions,
even when they have the same target function.
They investigate a variety of problems from simple ones like
the convergence and the gathering problems (e.g., \cite{AP04,SY99})
to hard ones like the formation problem of a sequence of patterns 
and the gathering problem in the presence of Byzantine failures 
(e.g., \cite{DFSY15,Flocchini19}),
and show that swarms of anonymous oblivious robots are powerful enough 
to solve certain sufficiently hard problems.
At the same time, however, 
we have also recognized limitation of their problem-solving ability,
e.g., the gathering problem for two robots is not solvable \cite{YS10}.

A natural and promising approach to increase the problem-solving ability 
of a swarm is to allow robots to take different target functions,
or equivalently, to give robots different identifiers,
since almost all artificial distributed systems enjoy having unique identifiers, 
e.g., serial numbers or product numbers, to solve hard problems.
We take this approach,
and investigate the effects of the number of distinct target functions
on the problem-solving ability.
That is, we regard target function as a resource like time,
and investigate the number of distinct target functions 
necessary and sufficient to solve a problem.

Let $\cal R$ and $\Phi$ be a swarm of $n$ robots $r_1, r_2, \ldots , r_n$,
and a set of target functions such that $|\Phi| \leq n$, respectively.
A (target function) assignment ${\cal A}: {\cal R} \rightarrow \Phi$ is 
a {\bf surjection} from $\cal R$ to $\Phi$,
i.e., every target function is assigned to at least one robot.
We call $\Phi$ an {\em algorithm}\footnote{
Here, we abuse a term ``algorithm'' in two ways:
First, all robots are often assumed to take the same target function,
so that an algorithm is usually a target function that solves a problem.
Second and more importantly,
an algorithm must have a finite description.
However, a target function (and hence a set of target functions) may not,
as defined in Section 2.
To compensate this abuse,
when we will show the existence of an algorithm,
we insist on giving a finite procedure to compute it.}
of $\cal R$ to solve a problem $\Pi$,
if $\cal R$ solves $\Pi$, 
no matter which assignment $\cal A$ that $\cal R$ takes.
(Thus, we cannot assume a particular assignment 
when designing target functions.)
The size $|\Phi|$ of an algorithm $\Phi$ is
the number of target functions composing $\Phi$.
The {\em minimum algorithm size} (MAS) for $\Pi$ is the algorithm size
necessary and sufficient to solve $\Pi$.
The MAS for $\Pi$ is defined to be $\infty$,
if $\Pi$ is unsolvable even for the robots with unique target functions.

Under the semi-synchronous ($\cal SSYNC$) scheduler defined in Subsection~\ref{SSmodel},
we establish the MASs of {\bf self-stabilizing} algorithms for solving 
the gathering and related problems from {\bf any} initial configuration.
Note that the papers given in the first paragraph propose a variety of algorithms
for anonymous oblivious robots for a variety of problems,
but most of them are not self-stabilizing algorithms;
they solve problems only from initial configurations satisfying some conditions.
In what follows, an algorithm means a self-stabilizing algorithm,
unless otherwise stated.

\medskip
\noindent
{\bf Motivations.}
As pointed out, the MAS for a problem (of an anonymous swarm)
is equal to the number of identifiers (of a homonymous swarm)
necessary and sufficient to solve the problem.
We claim that target function (and hence identifier) are irreplaceable resources,
and propose that MAS is a natural complexity measure worth investigating.
To this end, 
we shall prove that there is a problem with MAS being $c$ for each integer $c > 0$ and $\infty$,
i.e., the problems form an infinite hierarchy with respect to their MASs,
which implies that target function (and hence identifier) are resources
not substitutable, e.g., with time.
We will mainly establish the MASs for the gathering and related problems,
hoping that we can reach deep understanding of their similarity and/or difference,
e.g., the reason why any algorithm for a problem needs more target functions than others.
(Indeed, the proofs are instantiations of the understanding.)

Through the research of anonymous mobile robot systems,
we have continued to pursuit a purely distributed algorithm (of size 1)
which does not rely on a central controller, as in {\bf Related works} below.
However, if it is impossible, 
it makes sense for us to pursuit an algorithm whose algorithm size is as small as possible,
since the identifier management of a homonymous swarm becomes a highly centralized task,
as the number of identifiers (i.e., the algorithm size) increases.

\medskip
\noindent
{\bf Related works -- Homonymous distributed systems.~}
A distributed system is said to be {\em homonymous},
if some processing elements (e.g., processors, processes, agents, or robots)
may have the same identifier.
Two extreme cases are anonymous systems 
and systems whose processing elements have unique identifiers.

Maintaining the system elements of a distributed system to have unique identifiers is 
a promising strategy to efficiently solve distributed problems.
All artificial distributed systems enjoy unique identifiers 
called product number, serial number, and so on, besides few exceptions.
As a matter of course,
distributed system models such as message-passing and shared memory models
assume the processing elements with unique identifiers \cite{Lynch96}.

Angluin \cite{Angluin80} started investigation on anonymous computer network in 1980,
and a few researchers (e.g., \cite{ASW88,MA89,YK88}) followed her,
to pursuit a purely distributed algorithm which does not rely on a central controller, 
in the spirit of the minimalist.
Their main research topic was symmetry breaking;
they searched for a condition symmetry breaking becomes possible
in terms of the network topology.
(Later, \cite{BV01} and \cite{YK96} characterized the solvable cases.)
A rough conclusion established is that, 
symmetry breaking is impossible in general, 
but the probability that it is possible approaches to 1,
as the number of processors increases, 
provided that the network topology is random.

Another popular problem in the early years was the function computation problem
\cite{AS91,ASW88,BV97,KKB90,YK96b}.
They characterized functions computable on anonymous networks,
and analyzed their computational complexities.
Since then, many articles have appeared on anonymous computing.

Yamashita and Kameda \cite{YK89} investigated the leader election problem
on homonymous computer networks in 1989,
and showed that the identifiers, even if they are not unique, are frequently
crucial information to solve the problem by characterizing 
when the problem becomes solvable.
The leader election problem on homonymous computer networks 
(under different problem settings) has also been investigated, 
e.g., in \cite{ADDDL20,DP04,DFT14,YK99}.

Other research topics on homonymous computer networks include
failure detectors \cite{AAIJR15} and the Byzantine agreement problem \cite{DFGKRT13}.
In \cite{AAIJR15}, the authors introduced a failure detector class suitable 
for a homonymous system prone to crash faults,
where processes are partially synchronous and links are eventually timely, 
and show how to implement and use it in such a system,
without assuming the number of processes as initial information.
In \cite{DFGKRT13}, the authors showed that the Byzantine agreement problem is solvable
if and only if $\ell \geq 3f + 1$ in the synchronous case,
and it is solvable only if $\ell > \frac{n + 3f}{2}$ in the partially synchronous case,
where $\ell$ is the number of distinct identifiers and $f$ is an upperbound on
the number of faulty processors.
Thus, the MAS of the Byzantine agreement problem is $3f+1$ in the synchronous case,
and it is greater than $\frac{n + 3f}{2}$ in the partially synchronous case.

The anonymous mobile robot model was introduced in 1996 \cite{SY96}.
The model was constructed, inspired by the development of distributed robot systems
such as multi-robot systems, drone networks, satellite constellations, and so on.
Unlike the anonymous computer network model,
system elements called robots reside in the Euclidean space,
and change their positions, responding to the current positions of other robots.
Some research works are cited in the first paragraph of this section.

The anonymous mobile agent model introduced in 2001 is another model of 
anonymous distributed system \cite{DFPS01}.
It models software agents which repeatedly migrate from a computer to another 
through communication links.
In the anonymous mobile agent model, anonymous agents move in 
a given anonymous finite graph.
The black-hole problem \cite{DFPS01} and the problem of searching for
an intruder \cite{BFFS02} were investigated in early days.

The anonymous mobile robot and the anonymous mobile agent models share many
problems such as the gathering and the pattern formation problems.
Solving such problems in a self-stabilizing manner while keeping the spirit of 
minimalist as much as possible is one of their main issues
(see surveys \cite{CSN19,Ilcinkas19,Luna19}).
For the literature of self-stabilization in general, see, e.g., \cite{Dolve00}.
Sometimes concepts and techniques introduced in the research 
of anonymous computer network (e.g., the view \cite{SY99,YK88}) 
are also applied to solve these problems.

There are a few papers that treat homonymous swarms, e.g., \cite{ASY22,AY23,CDFH11,LYKY18}.
Team assembly of heterogeneous robots, each dedicated to solve a subtask,
is discussed in \cite{LYKY18} (see also \cite{LKAV21}).
In \cite{ASY22}, robots' identifiers are used to specify and evaluate 
the quality of the robots' trajectories.
The compatibility of target functions is discussed \cite{AY23,CDFH11}:
A set $\Phi$ of target functions is compatible with respect to a problem $\Pi$,
if a swarm of robots always solves $\Pi$,
as long as each robot takes its target function from $\Phi$.
Thus, two swarms whose robots take target functions from $\Phi$ 
can freely merge to form a larger swarm that solves $\Pi$.

Since most of natural distributed systems consist of anonymous entities,
and do not have a central controller,
natural distributed systems have inspired researchers to introduce 
other anonymous distributed system models.
One of the models is the population protocol model,
which was introduced in 2006 \cite{AADFP06}.
It is a model for a collection of agents,
which are identically programmed finite state machines.
The amoebot model is another model introduced in 2014 \cite{DDGRS14}
(see also a survey \cite{DHRS19}).
Like the anonymous mobile agent model,
anonymous particles of the amoebot model move in a graph,
but unlike the anonymous mobile agent model,
they move in an anonymous infinite graph.

Finally, the concept of MAS can be introduced to each of the anonymous systems
after allowing processes, robots, agents or amoebots to have different identifiers.

\medskip
\noindent
{\bf Contributions.}
This paper investigates the MAS for various self-stabilizing gathering 
and related problems,
which are asked to solve problems from {\bf any} initial configuration.
Throughout the paper,
the scheduler is assumed to be {\bf semi-synchronous} ($\cal SSYNC$),
i.e., on a robot $r$, a Look-Compute-Move cycle starts 
at an integral time instant $t$,
and ends before (not including) $t+1$.
More carefully, $r$ observes the robots' positions at time $t$,
and has reached its destination before the cycle ends.

The $c$-{\em scattering problem} ($c$SCT) is the problem of forming a
configuration in which robots are distributed at least $c$ different positions.
The {\em scattering problem}, sometimes called the {\em split problem}, 
is the $n$SCT.
The $c$-{\em gathering problem} ($c$GAT) is the problem of forming a
configuration in which robots are distributed at most $c$ different positions.
The {\em gathering problem} (GAT) is thus 1GAT.
The {\em pattern formation problem} (PF) for a pattern $G$ is the problem of
forming a configuration $P$ similar\footnote{
Throughout this paper, 
we say that one object is similar to another,
if the latter is obtained from the former by a combination of
scaling, translation, and rotation (but not using a reflection).}
to $G$.

We also investigate problems in the presence of {\em crash failures}:
A faulty robot can stop functioning at any time,
becoming permanently inactive.
A faulty robot may not cause a malfunction, forever.
We cannot distinguish such a robot from a non-faulty one.

The {\em f-fault tolerant c-scattering problem} ($f$F$c$S) is
the problem of forming a configuration in which robots 
are distributed at $c$ (or more) different positions,
as long as at most $f$ robots have crashed.
The {\em f-fault tolerant gathering problem} ($f$FG) is
the problem of gathering {\bf all non-faulty robots} at a point,
as long as at most $f$ robots have crashed.
The {\em f-fault tolerant gathering problem to f points} ($f$FGP)
is the problem of gathering {\bf all robots} (including faulty ones)
at $f$ (or less) points, as long as at most $f$ robots have crashed.

Table~\ref{T0010} summarizes main results.

\begin{table}
\caption{
For each self-stabilizing problem $\Pi$,
the MAS for $\Pi$, an algorithm for $\Pi$ achieving the MAS
(and the theorem/corollary/observation citation number 
establishing the result in parentheses) are shown.
$c$SCT is the $c$-scattering problem,
$c$GAT is the $c$-gathering problem,
PF is the pattern formation problem,
$f$F$c$S is the fault tolerant $c$-scattering problem
in the presence of at most $f$ faulty robots,
$f$FG is the fault tolerant gathering problem
in the presence of at most $f$ faulty robots, and
$f$FGP is the fault tolerant problem of gathering 
all robots (including a faulty one) at $f$ (or less) points
in the presence of at most $f$ faulty robots.
$f$F$c$S (for some values of $f$ and $c$) and
$f$FGP are unsolvable, thus their MASs are $\infty$.}
\label{Table0010} 

\smallskip

\centering
\begin{tabular}{|c|c|c|}
\hline
problem $\Pi$                & MAS      & algorithm  \\ \hline \hline
$c$SCT ($1 \leq c \leq n$)   & $c$      & $c$SCTA (Thm.~\ref{T0010})\\
$c$GAT ($2 \leq c \leq n$)   & 1        & 2GATA (Cor.~\ref{C1010})\\
GAT (= 1GAT)                 & 2        & GATA (Thm.~\ref{T1010}) \\
PF                           & $n$      & PFA (Thm.~\ref{T3020})\\ 
$f$F1S ($1 \leq f \leq n-1$) & 1        & 1SCTA (Obs.~\ref{Ofs0010})\\
$f$F2S ($1 \leq f \leq n-2$) & $f+2$        & $(f+2)$SCTA (Thm.~\ref{Tfs0010})\\
$(n-1)$F2S                   & $\infty$ & --~~ (Thm.~\ref{Tfs0010})\\
$f$F$c$S ($c \geq 3$, $c+f-1 \leq n$) & $c+f-1$ & $(c+f-1)$SCTA (Thm.~\ref{Tfs0010})\\
$f$F$c$S ($c \geq 3$, $c+f-1 > n$) & $\infty$ & --~~ (Thm.~\ref{Tfs0010})\\
$f$FG  ($1 \leq f \leq n-1$) & 3        & SGTA (Thm.~\ref{T4040})\\
$f$FGP ($1 \leq f \leq n-1$) & $\infty$ & --~~ (Thms.~\ref{T4050},\ref{T4070})\\
\hline 
\end{tabular}

\end{table}

\medskip
\noindent
{\bf Organization.}
After introducing the robot model and several measures we will use in this paper 
in Section~\ref{Sintro},
we first establish the MAS of the $c$-scattering problem in Section~\ref{Sscattering}.
Then the MASs of the $c$-gathering and the pattern formation problems are
respectively investigated in Sections~\ref{Sgathering} and \ref{Spatternformation}.
Sections~\ref{Sfsct} and \ref{Sfgp} consider the MASs of the fault tolerant
scattering and the gathering problems, respectively.
Finally, 
we conclude the paper by giving several open problems in Section~\ref{Sconclusion}.

\section{Preliminaries}
\label{Sintro}

\subsection{The model}
\label{SSmodel}

Consider a swarm $\cal R$ of $n$ robots $r_1, r_2, \ldots , r_n$.
Each robot $r_i$ has its own unit of length and a local compass,
which define an $x$-$y$ local coordinate system $Z_i$:
$Z_i$ is right-handed, 
and the origin $(0,0)$ always shows the position of $r_i$, 
i.e., it is self-centric.
Robot $r_i$ has the strong multiplicity detection capability,
and can count the number of robots resides at a point.

A {\em target function} $\phi$ is a function from $(R^2)^n$ 
to $R^2 \cup \{ \bot \}$ for all $n \geq 1$ such that $\phi(P) = \bot$,
if and only if $(0,0) \not\in P$.
Here, $\bot$ is a special symbol to denote that $(0,0) \not\in P$;
since $Z_i$ is self-centric,
$(0,0) \not\in P$ means an error of eye sensor,
which we assume will not occur.
Given a target function $\phi_i$,
$r_i$ executes a Look-Compute-Move cycle when it is activated:
\begin{description}
 \item[Look:]  
$r_i$ identifies the multiset $P$ of the robots' positions in $Z_i$.

\item[Compute:] 
$r_i$ computes $\bm{x}_i = \phi_i(P)$.
(Since $(0,0) \in P$, $\phi_i(P) \not= \bot$.
In case $\phi_i$ is not computable,
we simply assume that $\phi_i(P)$ is given by an oracle.)

\item[Move:] 
$r_i$ moves to $\bm{x}_i$.
(We assume that $r_i$ always reaches $\bm{x}_i$ before this Move phase ends.)
\end{description}

We assume a discrete time $0, 1, \ldots$.
At each time $t \geq 0$, 
the scheduler nondeterministically activates some (possibly all) robots.
Then activated robots execute a cycle which starts at 
$t$ and ends before (not including) $t+1$,
i.e., the scheduler is semi-synchronous (${\cal SSYNC}$).

Let $Z_0$ be the $x$-$y$ global coordinate system.
It is right-handed.
The coordinate transformation from $Z_i$ to $Z_0$ is denoted by $\gamma_i$.
We use $Z_0$ and $\gamma_i$ just for the purpose of explanation.
They are not available to any robot $r_i$.

The position of robot $r_i$ at time $t$ in $Z_0$ is denoted by $\bm{x}_t(r_i)$.
Then $P_t = \{ \bm{x}_t(r_i) : 1 \leq i \leq n \}$ is a multiset representing
the positions of all robots at time $t$,
which is called the {\em configuration} of $\cal R$ at $t$.

Given an initial configuration $P_0$,
an assignment $\cal A$ of a target function $\phi_i$ to each robot $r_i$,
and an $\cal SSYNC$ schedule,\footnote{
An $\cal SSYNC$ schedule is an activation schedule produced by the $\cal SSYNC$ scheduler.}
the execution is a sequence 
${\cal E}: P_0, P_1, \ldots , P_t, \ldots$ of configurations starting from $P_0$.
Here, for all $r_i$ and $t \geq 0$,
if $r_i$ is not activated at $t$, then $\bm{x}_{t+1}(r_i) = \bm{x}_t(r_i)$.
Otherwise, if it is activated,
$r_i$ identifies $Q^{(i)}_t = \gamma^{-1}_i(P_t)$ in $Z_i$, 
computes $\bm{y} = \phi_i(Q^{(i)}_t)$, 
and moves to $\bm{y}$ in $Z_i$.
(Since $(0,0) \in Q^{(i)}_t$, $\bm{y} \not= \bot$.)
Then $\bm{x}_{t+1}(r_i) = \gamma_i(\bm{y})$.
We assume that the scheduler is fair:
It activates every robot infinitely many times.
Throughout the paper, 
we regard the scheduler as an adversary.

The ${\cal SSYNC}$ scheduler is said to be {\em fully synchronous} ($\cal FSYNC$),
if every robot $r_i$ is activated every time instant $t = 0, 1, 2, \ldots$.
The scheduler which is not $\cal SSYNC$ is said to be {\em asynchronous} ($\cal ASYNC$).
Throughout the paper, we assume that the scheduler is $\cal SSYNC$.

\subsection{Orders and symmetries}
\label{SSconcepts}

We use three orders $<$, $\sqsubset$, and $\succ$
(besides the conventional order $<$ on $R$).
Let $<$\footnote{
We use the same notation $<$ to denote the lexicographic order on $R^2$
and the order on $R$ to save the number of notations.}
be a lexicographic order on $R^2$.
For distinct points $\bm{p} = (p_x, p_y)$ and $\bm{q} = (q_x, q_y)$,
$\bm{p} < \bm{q}$, if and only if either (i) $p_x < q_x$, 
or (ii) $p_x = q_x$ and $p_y < q_y$ holds.
Let $\sqsubset$ be a lexicographic order on $(R^2)^n$.
For distinct multisets of $n$ points
$P = \{ \bm{p}_1, \bm{p}_2, \ldots, \bm{p}_n \}$
and $Q = \{ \bm{q}_1, \bm{q}_2, \ldots, \bm{q}_n \}$,
where for all $i = 1, 2, \ldots , n-1$, 
$\bm{p}_i \leq \bm{p}_{i+1}$ and $\bm{q}_i \leq \bm{q}_{i+1}$ hold,
$P \sqsubset Q$, if and only if there is an $i (1 \leq i \leq n-1)$
such that (i) $\bm{p}_j = \bm{q}_j$ for all $j = 1, 2, \ldots, i-1$,\footnote{
We assume $\bm{p}_0 = \bm{q}_0$.}
and (ii) $\bm{p}_i < \bm{q}_i$.

Let $P = \{\bm{p}_1, \bm{p}_2, \ldots , \bm{p}_n \} \in (R^2)^n$.
The set of distinct points of $P$ is denoted by 
$\overline{P} = \{\bm{q}_1, \bm{q}_2, \ldots , \bm{q}_m \}$,
where $|P| = n$ and $|\overline{P}| = m$.
We denote the multiplicity of $\bm{q}$ in $P$ by $\mu_P(\bm{q})$,
i.e., $\mu_P(\bm{q}) = |\{ i: \bm{p}_i = \bm{q} \in P \}|$.
We identify $P$ with a pair $(\overline{P},\mu_P)$,
where $\mu_P$ is a labeling function to associate 
label $\mu_P(\bm{q})$ with each element $\bm{q} \in \overline{P}$.

Let $G_P$ be the rotation group $G_{\overline{P}}$ of $\overline{P}$ 
about $\bm{o}_P$ preserving $\mu_P$,
where $\bm{o}_P$ is the center of the smallest enclosing circle of $P$.
The order $|G_P|$ of $G_P$ is denoted by $k_P$.
We assume that $k_P = 0$, 
if $|\overline{P}| = 1$, i.e., if $\overline{P} = \{ \bm{o}_P \}$.

The symmetricity of $P$ is $\sigma(P) = GCD(k_P, \mu_P(\bm{o}_P))$,
i.e., the greatest common divisor of $k_P$ and $\bm{o}_P$ \cite{SY99}.
Here, $GCD(\ell,0) = GCD(0, \ell) = \ell$.

\begin{example}
\label{Ex0010}
Let $P_1 = \{ \bm{a}, \bm{b}, \bm{c} \}$,
$P_2 = \{ \bm{a}, \bm{a}, \bm{b}, \bm{b}, \bm{c} \}$, and
$P_3 = \{ \bm{a}, \bm{a}, \bm{b}, \bm{b}, \bm{c}, \bm{c} \}$,
where a triangle $\bm{abc}$ is equilateral.
Then $k_{P_1} = \sigma(P_1) = 3$,
$k_{P_2} = \sigma(P_2) = 1$,
and $k_{P_3} = \sigma(P_3) = 3$.

Let $P_4 = \{ \bm{a}, \bm{b}, \bm{c}, \bm{o}, \bm{o} \}$, 
$P_5 = \{ \bm{a}, \bm{b}, \bm{c}, \bm{o}, \bm{o}, \bm{o} \}$, 
and $P_6 = \{ \bm{o}, \bm{o}, \bm{o} \}$, 
where $\bm{o}$ is the center of the smallest enclosing circle of triangle $\bm{abc}$.
Then $k_{P_4} = 3$, $\sigma(P_4) = 1$,
$k_{P_5} = \sigma(P_5) = 3$,
$k_{P_6} = 0$, and $\sigma(P_6) = 3$.
(See 
Figure~\ref{Fsym} for an illustration.)
\end{example}

\begin{figure}[t]
\centering
  \includegraphics[width=0.4\hsize]{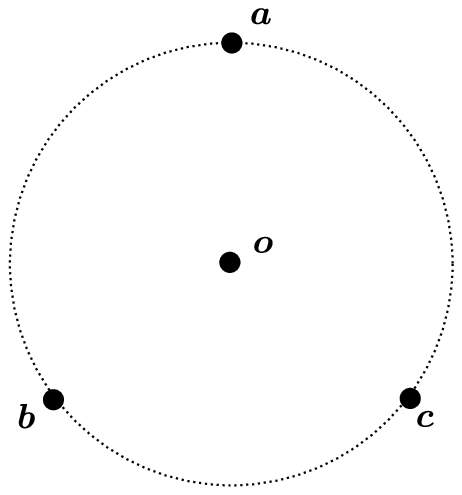}
 \caption{A configuration $P$, where $\overline{P} = \{\bm{a}, \bm{b}, \bm{c}, \bm{o}\}$.
A triangle $\bm{abc}$ is equilateral, 
and $\bm{o}$ is the center of the smallest enclosing circle of $P$.
If $\mu_P(\bm{a})=\mu_P(\bm{b})=\mu_P(\bm{c})= i$ for an integer $i > 0$, $k_P = 3$;
otherwise, $k_P = 1$.
If ($k_P = 3$ and) $\mu_P(\bm{o}) = 3j$ for an integer $j \geq 0$, then $\sigma(P)=3$;
otherwise, $\sigma(P)=1$.}
\label{Fsym}
\end{figure}

We use both $k_P$ and $\sigma(P)$.
Suppose that $P$ is a configuration in $Z_0$.
When activated, a robot $r_i$ identifies the robots' positions 
$Q^{(i)} = \gamma^{-1}_i(P)$ in $Z_i$ in Look phase.
Since $P$ and $Q^{(i)}$ are similar,
$k_P = k_{Q^{(i)}}$ and $\sigma(P) = \sigma(Q^{(i)})$,
i.e., all robots can consistently compute $k_P$ and $\sigma(P)$.

On the contrary, robots cannot consistently compute
lexicographic orders $<$ and $\sqsubset$.
To see this,
let $\bm{x}$ and $\bm{y}$ be distinct points in $\overline{P}$ in $Z_0$.
Then both $\gamma^{-1}_i(\bm{x}) < \gamma^{-1}_i(\bm{y})$
and $\gamma^{-1}_i(\bm{x}) > \gamma^{-1}_i(\bm{y})$ can occur,
depending on $Z_i$.
Thus robots may be unable to consistently compare $\bm{x}$ and $\bm{y}$ using $>$.
And it is true for $\sqsubset$, as well.
We thus introduce a total order $\succ_P$ on $\overline{P}$, 
in such a way that all robots can agree on the order,
provided $k_P = 1$.
A key trick behind the definition of $\succ_P$ is to use, instead of $Z_i$,
an $x$-$y$ coordinate system $\Xi_i$ which is computable 
for any robot $r_j$ from $Q^{(j)}$.

Let $\Gamma_P(\bm{q}) \subseteq \overline{P}$ be the orbit of $G_P$ 
through $\bm{q} \in \overline{P}$. 
Then $|\Gamma_P(\bm{q})| = k_P$ if $\bm{q} \not= \bm{o}_P$,
and $\mu_P(\bm{q}') = \mu_P(\bm{q})$ if and only if $\bm{q}' \in \Gamma_P(\bm{q})$.
If $\bm{o}_P \in \overline{P}$, $\Gamma_P(\bm{o}_P) = \{ \bm{o}_P \}$.
Let $\Gamma_P = \{ \Gamma_P(\bm{q}): \bm{q} \in \overline{P} \}$.
Then $\Gamma_P$ is a partition of $\overline{P}$.
Define $x$-$y$ coordinate system $\Xi_{\bm{q}}$ for any point 
$\bm{q} \in \overline{P} \setminus \{ \bm{o}_P \}$.
The origin of $\Xi_{\bm{q}}$ is $\bm{q}$,
the unit distance is the radius of the smallest enclosing circle of $P$,
the $x$-axis is taken so that it goes through $\bm{o}_P$,
and it is right-handed.
Let $\gamma_{\bm{q}}$ be the coordinate transformation from $\Xi_{\bm{q}}$ to $Z_0$.
Then the view $V_P(\bm{q})$ of $\bm{q}$ is defined to be $\gamma^{-1}_{\bm{q}}(P)$.
Obviously $V_P(\bm{q}') = V_P(\bm{q})$ (as multisets), 
if and only if $\bm{q}' \in \Gamma_P({\bm{q}})$.
Let $View_P = \{ V_P({\bm{q}}): \bm{q} \in \overline{P} \setminus \{ \bm{o}_P \} \}$.
(See 
Figure~\ref{Fxi} 
for an illustration.)

\begin{figure}[t]
\centering
 \includegraphics[width=0.4\hsize]{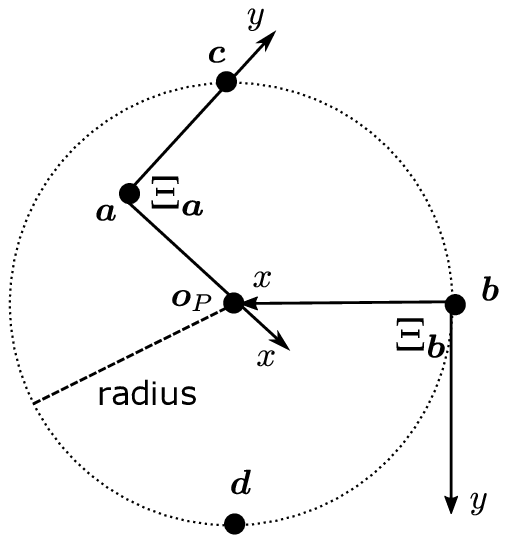}
 \caption{A configuration $P$, where $P = \overline{P} = \{\bm{o}_P, \bm{a}, \bm{b}, \bm{c}, \bm{d}\}$.
In $Z_0$, 
$\bm{o}_P=(0,0)$, $\bm{a} = (-1/2, 1/2)$, $\bm{b} = (1,0)$, $\bm{c} = (0,1)$, and $\bm{d}=(0,-1)$.
Then the smallest enclosing circle $C$ of $P$ has the center $\bm{o}_P$ and radius $1$.
Solid arrows represent the $x$- and $y$-axes of $\Xi_{\bm{a}}$ and $\Xi_{\bm{b}}$
with the unit length (the radius $1$ of $C$). 
In $\Xi_{\bm{b}}$, points $\bm{o}_P$, $\bm{a}$, $\bm{b}$, $\bm{c}$, and $\bm{d}$ are 
$(1,0)$, $(3/2,-1/2)$, $(0,0)$, $(1,-1)$, and $(1,1)$, respectively,
and thus $\gamma^{-1}_{\bm{b}}(P) = V_P(\bm{b}) = \{(1,0), (3/2, -1/2), (0,0), (1, -1), (1,1)\}$. }
\label{Fxi}
\end{figure}

Any robot $r_i$, in Compute phase,
can compute $\Xi_{\bm{q}}$ and $V_{Q^{(i)}}(\bm{q})$
for each $\bm{q} \in \overline{Q^{(i)}} \setminus \{ \bm{o}_{Q^{(i)}} \}$,
and thus $View_{Q^{(i)}}$, from $Q^{(i)}$.
Since $P$ and $Q^{(i)}$ are similar,
by the definition of $\Xi_{\bm{q}}$, $View_{P} = View_{Q^{(i)}}$,
which implies that all robots $r_i$ can consistently compute $View_{P}$.

We define $\succ_P$ on $\Gamma_P$ using $View_P$.
For any distinct orbits $\Gamma_P(\bm{q})$ and $\Gamma_P(\bm{q}')$ in $\Gamma_P$,
$\Gamma_P(\bm{q}) \succ_P \Gamma_P(\bm{q}')$, if and only if
one of the following conditions hold:
\begin{enumerate}
\item 
$\mu_P(\bm{q}) > \mu_P(\bm{q}')$.

\item
$\mu_P(\bm{q}) = \mu_P(\bm{q}')$ 
and $dist(\bm{q}, \bm{o}_P) < dist(\bm{q}', \bm{o}_P)$ hold,
where $dist(\bm{x}, \bm{y})$ is the Euclidean distance 
between $\bm{x}$ and $\bm{y}$.

\item
$\mu_P(\bm{q}) = \mu_P(\bm{q}')$,
$dist(\bm{q}, \bm{o}_P) = dist(\bm{q}', \bm{o}_P)$, and
$V_P(\bm{q}) \sqsupset V_P(\bm{q}')$ hold.
\footnote{Since $dist(\bm{o}_P, \bm{o}_P) = 0$,
$V_P({\bm{q}})$ is not compared with $V_P(\bm{o}_P)$ with respect to $\sqsupset$.}
\end{enumerate}
Then $\succ_P$ is a total order on $\Gamma_P$.
If $k_P = 1$, since $\Gamma_P(\bm{q}) = \{ \bm{q} \}$ 
for all $\bm{q} \in \overline{P}$,
we regard $\succ_P$ as a total order on $\overline{P}$
by identifying $\Gamma_P({\bm{q}})$ with $\bm{q}$.
For a configuration $P$ (in $Z_0$), from $Q^{(i)}$ (in $Z_i$),
each robot $r_i$ can consistently compute $k_P = k_{Q^{(i)}}$,
$\Gamma_P = \Gamma_{Q^{(i)}}$, and $View_P = View_{Q^{(i)}}$, 
and hence $\succ_P = \succ_{Q^{(i)}}$.
Thus, all robots can agree on, e.g., 
the largest point $\bm{q} \in \overline{P}$ with respect to $\succ_P$,
provided $k_P = 1$.

Let $\mathcal{P} = \{ P \in (R^2)^n: (0,0) \in P,\  n \geq 1 \}$.
Since a target function returns $\bot$ when $(0,0) \not\in P$,
we regard $\mathcal{P}$ as the domain of a target function.
For two sets $X, Y (\subseteq R^2)$ of points,
$dist(X,Y) = \min_{\bm{x} \in X, \bm{y} \in Y} dist(\bm{x},\bm{y})$,
and $dist(\bm{x},Y) = dist(\{\bm{x}\},Y)$.
For a configuration $P$, 
let $-P = \{ -\bm{p}: \bm{p} \in P\}$.

\section{$C$-scattering problem}
\label{Sscattering}

We start with showing that there is a problem whose MAS is $c$ for all integer $c > 0$.
Thus the problems form an infinite hierarchy with respect to their MASs.\footnote{
We will show the existence of a problem whose MAS is $\infty$.
However, this claim holds, 
regardless of whether or not there is such a problem with MAS being $\infty$.}

The {\em scattering problem} (SCT), sometimes called the {\em split problem},
is the problem to have the robots occupy distinct positions,
starting from any configuration \cite{DP09}.
For $1 \leq c \leq n$,
let the {\em c-scattering problem} ($c$SCT) be the problem 
to transform any initial configuration to a configuration $P$ 
such that $|\overline{P}| \geq c$.
Thus, the $n$SCT is the SCT.
An algorithm for the $c$SCT is an algorithm for the $(c-1)$SCT,
for $2 \leq c \leq n$.

\begin{theorem}
 \label{T0010}
For any $1 \leq c \leq n$,
the MAS for the \mbox{\rm $c$SCT} is $c$.
\end{theorem}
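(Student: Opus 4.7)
The claim has two directions: the upper bound that some algorithm of size $c$ solves $c$SCT, and the matching lower bound that no algorithm of size $c-1$ does. For the lower bound the plan is a single adversarial argument. Given any algorithm $\Phi$ with $|\Phi|=c-1$, the adversary takes the initial configuration $P_0$ to place all $n$ robots at one point, picks a surjective assignment ${\cal A}:\mathcal{R}\to\Phi$ inducing a partition $\mathcal{R}=R_1\cup\cdots\cup R_{c-1}$ with $R_k$ the robots assigned $\phi_k$, fixes the local coordinate system $Z_k$ identically for every robot in $R_k$, and runs the ${\cal FSYNC}$ schedule. An induction on $t$ then shows that all robots in $R_k$ share a common global position throughout the execution: at each step they share the same view in the same local frame, hence compute the same local destination, which maps to the same global point under the common $Z_k$. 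Therefore $|\overline{P_t}|\le c-1$ for every $t$, and $\Phi$ fails to solve $c$SCT.

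For the upper bound I would exhibit $c$SCTA consisting of target functions $\phi_1,\ldots,\phi_c$. Each $\phi_k$ first tests whether $|\overline{Q^{(i)}}|\ge c$ and, if so, returns the robot's own position. Otherwise $\phi_k$ returns a destination, expressed in the robot's local frame, equal to a canonical global anchor $\bm{a}_k(P)$, chosen so that $\bm{a}_1(P),\ldots,\bm{a}_c(P)$ are pairwise distinct --- for instance, $c$ points along a canonical ray from $\bm{o}_P$ at distinct distances scaled by the enclosing-circle radius. When $k_P=1$, the canonical direction is the one from $\bm{o}_P$ toward the maximum element of $\overline{P}$ under $\succ_P$, which every robot computes consistently from its view. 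The symmetric case $k_P\ge 2$ is handled by a preliminary routine that first breaks the rotational symmetry: by surjectivity of $\cal A$, some $G_P$-orbit is forced to contain robots of at least two different function classes, and letting those functions issue different moves collapses $k_P$ to $1$ (or already yields $|\overline{P}|\ge c$). Once $k_P=1$ is reached, one round of movement along the canonical ray produces $\ge c$ distinct anchor positions, after which the $|\overline{P}|\ge c$ guard keeps every robot stationary.

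The main obstacle is the upper bound in highly symmetric configurations, most sharply in the all-at-one-point case: the common view carries no directional or scale information, so the adversary can align local frames to collapse any fixed $\phi_k$-destination to a single global point, and the first activation cannot be relied on to place $c$ anchors in one shot. The argument must show that, across a bounded number of ${\cal SSYNC}$ rounds, the presence of $c$ distinct target functions forces the configuration to break symmetry in stages, reaching $k_P=1$ and then scattering to $\ge c$ positions, and that no fair adversarial schedule can block this progress. Packaging this behaviour into a concrete finite description of each $\phi_k$, as the paper's convention on algorithms requires, is the technically delicate part.
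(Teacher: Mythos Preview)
Your lower bound is correct and is essentially the paper's argument: one shared local frame per function class, all robots at a single point, ${\cal FSYNC}$, induction.

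The upper bound has a genuine gap. Your scheme asks each $\phi_k$ to send its robot to a globally agreed anchor $\bm{a}_k(P)$ on a canonical ray, but in this model there is no shared compass or unit; the only way to agree on a direction is through the configuration itself, and you correctly note that this fails exactly when $|\overline{P}|=1$ (no direction, radius $0$) and when $k_P\ge2$ (no unique direction). You defer both cases to an unspecified ``preliminary routine'' and call it ``the technically delicate part,'' but this \emph{is} the problem: from $|\overline{P}|=1$ the adversary can rotate each robot's frame so that any single target function lands all its robots on one global point, and you have not said what the $c$ functions actually do to force a split. Even in the $k_P=1$ case your one-round argument does not close under ${\cal SSYNC}$, because a partial activation moves $\bm{o}_P$ and the canonical ray, so later robots aim at different anchors.

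The paper sidesteps all of this by abandoning global agreement entirely. Its $sct_i$ makes a purely \emph{local} move in the robot's own positive $x$-direction with a function-specific magnitude: from $|\overline{P}|=1$, $sct_1$ stays put and every other $sct_i$ outputs $(1,0)$; from $2\le|\overline{P}|<c$, $sct_i$ outputs $(\delta/(2(i+1)),0)$, where $\delta$ is the minimum pairwise distance in $\overline{P}$. Because $\delta$ is computed from the observed configuration, the \emph{global} move length $\delta_{\mathrm{global}}/(2(i+1))$ is independent of the robot's unit, so two colocated robots with different $sct_i$ travel different global distances and cannot coincide afterward, however the adversary orients their frames. Since every move has length at most $\delta_{\mathrm{global}}/4$, robots at distinct points remain at distinct points, so $|\overline{P_t}|$ never decreases. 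Pigeonhole (a surjection of $c$ functions onto fewer than $c$ positions) then guarantees some point hosts two different functions, and fairness forces an increase. No canonical direction and no symmetry-breaking preamble are needed; the missing idea in your sketch is that distinct \emph{magnitudes} already separate colocated robots, so you never have to agree on a ray.
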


\begin{proof}
(I) We first show that the MAS for the $c$SCT is at least $c$.
Proof is by contradiction.
Suppose that the MAS for the $c$SCT is $m < c$ to derive a contradiction.
Let $\Phi = \{ \phi_1, \phi_2, \ldots , \phi_m \}$ be an algorithm 
for the $c$SCT.
Consider the following situation:
\begin{enumerate}
\item 
All robots $r_i$ ($1 \leq i \leq n$) share the unit length 
and the direction of positive $x$-axis.
\item
A target function assignment $\cal A$ is defined as follows:
${\cal A}(r_i) = \phi_i$ for $1 \leq i \leq m-1$,
and ${\cal A}(r_i) = \phi_m$ for $m \leq i \leq n$.
\item
All robots initially occupy the same location $(0,0)$.
That is,  $P_0 = \{ (0,0),$ $(0,0), \ldots , (0,0) \}$.
\item
The scheduler is ${\cal FSYNC}$.
\end{enumerate}

Let ${\cal E}: P_0, P_1, \ldots$ be the execution of $\cal R$ starting from $P_0$,
under the above situation.
By an easy induction on $t$,
all robots $r_i$ ($m \leq i \leq n$) occupy the same location, i.e., for all $t \geq 0$,
$\bm{x}_t(r_m) = \bm{x}_t(r_{m+1}) = \ldots = \bm{x}_t(r_n)$.
Since $|\overline{P_t}| \leq m < c$ for all $t \geq 0$,
a contradiction is derived.

\medskip
\noindent

(II) We next present an algorithm $c$SCTA = $\{ sct_1, sct_2, \ldots, sct_c \}$
of size $c$ for the $c$SCT,
where target function $sct_i$ is defined as follows for any $P \in \mathcal{P}$.

\medskip
\noindent
{\bf [Target function $sct_i$]}
\vspace{-1mm}
\begin{enumerate}
\item
If $|\overline{P}| \geq c$, then $sct_i(P) = (0,0)$ for $i = 1, 2, \ldots , c$.

\item
If $|\overline{P}| = 1$, 
then $sct_1(P) = (0,0)$, and $sct_i(P) = (1,0)$ for $i = 2, 3, \ldots , c$.

\item
If $2 \leq |\overline{P}| \leq c-1$, 
$sct_i(P) = (\delta/(2(i+1)),0)$ for $i = 1, 2, \ldots , c$,
where $\delta$ is the smallest distance between two (distinct) 
points in $\overline{P}$.
\end{enumerate}

We show that $c$SCTA is an algorithm for the $c$SCT.
Consider any execution ${\cal E}: P_0, P_1, \ldots ,$
starting from any initial configuration $P_0$.
We can assume $|\overline{P_0}| < c$ without loss of generality,
since $|\overline{P_t}| \geq c$ implies $P_t = P_{t'}$ for all $t' \geq t$.

Proof is by contradiction.
To derive a contradiction,
we assume that $|\overline{P_t}| < c$ for all $t \geq 0$.
Let $m = \max_{t \geq 0} |\overline{P_t}| < c$,
and assume without loss of generality $|\overline{P_0}| = m$.

Suppose that $m = 1$,
i.e., $\overline{P_0} = \{ \bm{p} \}$ for some $\bm{p} \in R^2$.
By the fairness of the $\cal SSYNC$ scheduler,
let $t \geq 0$ be the time instant at which a robot $r_i$ that takes 
a target function $sct_a (a \not= 1)$ is activated for the first time.
Then $P_t = P_0$, and there is a robot $r_j$ that takes $sct_1$ at $\bm{p}$.
By the definition of $c$SCTA,
$\bm{x}_{t+1}(r_j) = \bm{p}$,
and $\bm{x}_{t+1}(r_i) = \gamma_i((1,0))$.
Thus $|\overline{P_{t+1}}| \geq 2$,
since $\bm{p} = \gamma_i((0,0)) \not= \gamma_i((1,0))$.
It is a contradiction.

Suppose that $1 < m~(< c)$.
First observe that $|\overline{P_t}| \geq m$ for all $t \geq 0$.
Suppose $\bm{x}_t(r_i) \not= \bm{x}_t(r_j)$.
Then $dist(\bm{x}_t(r_i), \bm{x}_t(r_j)) \geq \delta$.
Since they both move at most distance $\delta/4$,
$\bm{x}_{t+1}(r_i) \not= \bm{x}_{t+1}(r_j)$,
regardless of their local $x$-$y$ coordinate systems, $Z_i$ and $Z_j$.

Suppose that $|\overline{P_t}| = m$ holds for all $t \geq 0$.
At each $t$, 
if a robot $r$ at $\bm{x}_t(r) = \bm{q}$ is activated,
all robots $r'$ such that $\bm{x}_t(r') = \bm{q}$ must be activated simultaneously,
and move to the same location, i.e., their target functions are the same.
However, it is a contradiction.
There is a point $\bm{q} \in \overline{P_0}$ such that
$\bm{x}_0(r_i) = \bm{x}_0(r_j) = \bm{q}$ for some $i \not= j$, 
and $r_i$ and $r_j$ take different target functions,
since $m < c$.
Thus, there is a time $t' > t$ such that $|\overline{P_{t'}}| > |\overline{P_t}| = m$.

Therefore, $\cal E$ eventually reaches a configuration $P_t$
such that $|\overline{P_t}| \geq c$.
\end{proof}

\begin{corollary}
\label{C0010} 
Let $\cal R$ be a swarm consisting of $n$ robots.
For $c = 1, 2, \ldots , n$,
there is a problem $\Pi_c$ for $\cal R$
such that the MAS for $\Pi_c$ is $c$.
\end{corollary}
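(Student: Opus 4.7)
The plan is essentially immediate from Theorem~\ref{T0010}: simply take $\Pi_c$ to be the $c$-scattering problem $c$SCT itself. Since Theorem~\ref{T0010} has already established that the MAS for $c$SCT equals $c$ for every $1 \leq c \leq n$, the witnessing family $\{\Pi_c\}_{c=1}^{n}$ with $\Pi_c = c\mathrm{SCT}$ does the job, and there is essentially nothing further to verify.

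The only thing worth mentioning in the write-up is a sanity check that the problem $\Pi_c = c$SCT is well-defined for each $c$ in the range $1 \leq c \leq n$ (in particular that it makes sense to ask $c$ robots to occupy $c$ distinct points when only $n$ robots are available), which is exactly the hypothesis of Theorem~\ref{T0010}. With that observation, the corollary is a one-line consequence: the problems $\Pi_1, \Pi_2, \ldots, \Pi_n$ have pairwise distinct MAS values $1, 2, \ldots, n$ respectively, so in particular for each $c \in \{1, \ldots, n\}$ there exists a problem with MAS exactly $c$.

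There is no real obstacle here; the hard work has already been done in proving Theorem~\ref{T0010}, whose lower-bound argument (via a fully-synchronous execution from the configuration $\{(0,0),\ldots,(0,0)\}$ with fewer than $c$ distinct target functions) and upper-bound construction (the algorithm $c$SCTA of size $c$) together pin the MAS down exactly. So the corollary proof is a two-sentence citation of Theorem~\ref{T0010}, and the natural way to phrase it is: ``Put $\Pi_c = c$SCT; by Theorem~\ref{T0010} the MAS for $\Pi_c$ is $c$.''
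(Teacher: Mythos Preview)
Your proposal is correct and matches the paper's proof exactly: the paper also sets $\Pi_c = c\mathrm{SCT}$ and invokes Theorem~\ref{T0010}. There is nothing to add.
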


\begin{proof}
Let $\Pi_c$ be the $c$SCT.
Then the MAS for $\Pi_c$ is $c$.
\end{proof}

\section{$C$-gathering problem}
\label{Sgathering}

Let $P = \{\bm{p}_1, \bm{p}_2, \ldots , \bm{p}_n \} \in {\cal P}$,
$\overline{P} = \{\bm{q}_1, \bm{q}_2, \ldots , \bm{q}_{m_P} \}$,
$m_P = |\overline{P}|$ be the size of $\overline{P}$,
$\mu_P(\bm{q})$ denote the multiplicity of $\bm{q}$ in $P$,
$\bm{o}_P$ be the center of the smallest enclosing circle $C_P$ of $P$,
and $CH(P)$ be the convex hull of $P$.

The {\em c-gathering problem} ($c$GAT) is the problem of transforming 
any initial configuration to a configuration $P$
such that $|\overline{P}| \leq c$.
The 1GAT is thus the {\em gathering problem} (GAT).
An algorithm for the $c$GAT is an algorithm for the $(c+1)$GAT,
for $1 \leq c \leq n-1$.

Under the $\cal SSYNC$ scheduler,
the GAT from distinct initial positions is solvable (by an algorithm of size 1), 
if and only if $n \geq 3$~\cite{SY99},
and the GAT from any initial configuration is solvable (by an algorithm of size 1), 
if and only if $n$ is odd~\cite{DP12}.
The MAS for the GAT is thus at least 2.
Gathering algorithms $\psi_{f-point(n)}$ (for $n \geq 3$ robots 
from distinct initial positions) in Theorem~3.4 of \cite{SY99}
and Algorithm~1 (for odd $n$ robots from any initial positions) in \cite{DP12}
share the skeleton:
Given a configuration $P$,
if there is the (unique) ``largest point'' $\bm{q} \in \overline{P}$, then go to $\bm{q}$; 
otherwise, go to $\bm{o}_P$.
Consider the following singleton 2GATA $= \{ 2gat \}$ of a target function $2gat$,
which is a direct implementation of this strategy using $\succ_P$
as the measure to determine the largest point in $\overline{P}$.

\medskip
\noindent
{\bf [Target function $2gat$]}
\vspace{-1mm}
\begin{enumerate}
\item
If $m_P = 1$, or $m_P = 2$ and $k_P = 2$, i.e., $\mu_P(\bm{q}_1) = \mu_P(\bm{q}_2)$,
then $2gat(P) = (0,0)$.

\item
If $m_P \geq 2$ and $k_P = 1$, 
then $2gat(P) = \bm{q}$,
where $\bm{q} \in \overline{P}$ is the largest point 
with respect to $\succ_P$.

\item
If $m_P \geq 3$ and $k_P \geq 2$,
then $2gat(P) = \bm{o}_P$.
\end{enumerate}

A multiset $P$ is said to be {\em linear}
if $CH(P)$ is a line segment.
We need the following technical lemma.

\begin{lemma}[\cite{AY23}]
\label{L1005} 
Let $A$ be a set (not a multiset) of points satisfying 
(1) $A$ is not linear,
(2) $k_A \geq 2$, 
and (3) $\bm{o}_A \not\in A$.
For a point $\bm{a} \in A$,
let $B = (A \cup \{ \bm{o}_A \}) \setminus \{ \bm{a} \}$,
i.e., $B$ is constructed from $A$ by replacing $\bm{a} \in A$ with $\bm{o}_A$.
Then $k_B = 1$.
\end{lemma}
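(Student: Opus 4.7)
The plan is to argue by contradiction: assume $k_B \geq 2$ and derive a contradiction with one of the hypotheses on $A$. Write $\rho_A$ and $\rho$ for generators of $G_A$ and $G_B$, respectively.

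My first step is a centroid calculation. Since any finite set preserved by a nontrivial rotation has its centroid at the rotation center, $k_A \geq 2$ and $k_B \geq 2$ force $\bm{o}_A$ and $\bm{o}_B$ to equal the centroids of $A$ and $B$. Using $|B| = |A|$ and $B = (A \setminus \{\bm{a}\}) \cup \{\bm{o}_A\}$, this yields $\bm{o}_B - \bm{o}_A = (\bm{o}_A - \bm{a})/|A|$, which is nonzero by hypothesis (3). I would then dispose of the case where $\bm{a}$ lies strictly inside the smallest enclosing circle of $A$: then the smallest enclosing circles of $A$, of $A \setminus \{\bm{a}\}$, and of $B$ all coincide (the last because $\bm{o}_A$ is interior), forcing $\bm{o}_B = \bm{o}_A$ and contradicting the centroid identity. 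Hence $\bm{a}$ must lie on the smallest enclosing circle of $A$.

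The heart of the argument is to rule out $\rho$ in this remaining regime, which I would attack by examining the composite motion $T := \rho \circ \rho_A^{-1}$. Its linear part is a rotation by $2\pi(1/k_B - 1/k_A)$; in particular, when $k_A = k_B$ it degenerates to a nontrivial translation $T_{\bm{w}}$. Since $\rho_A$ preserves $A$ and $\rho$ preserves $B$, a short computation gives $T(A) = \rho(A) = (A \setminus \{\bm{a}, \rho(\bm{o}_A)\}) \cup \{\bm{o}_A, \rho(\bm{a})\}$, which differs from $A$ in exactly four points. The ``top'' of each $\mathbb{Z}\bm{w}$-coset meeting $A$ (a point $\bm{p} \in A$ with $\bm{p} + \bm{w} \notin A$) must then satisfy $\bm{p} + \bm{w} \in \{\bm{o}_A, \rho(\bm{a})\}$, so there are at most two tops and hence at most two such cosets, placing $A$ on at most two lines parallel to $\bm{w}$. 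For $k_A = k_B = 2$ the vector $\bm{w}$ is parallel to the line through $\bm{o}_A$ and $\bm{a}$, so the line through $\bm{a}$ parallel to $\bm{w}$ contains $\bm{o}_A$ and is therefore $\rho_A$-invariant; combining with $\rho_A(A) = A$ then forces the two lines to collapse into one, making $A$ linear and contradicting hypothesis (1).

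The principal obstacle is extending this coset argument beyond $k_A = k_B = 2$: when $k_A \neq k_B$ the composite $T$ is a genuine rotation rather than a translation, and for $k_A = k_B \geq 3$ the vector $\bm{w}$ tilts off the line through $\bm{o}_A$ and $\bm{a}$, so the $\rho_A$-invariance step needs a different twist. I expect the full proof to proceed by a case analysis on $(k_A, k_B)$, exploiting that a finite set simultaneously almost-invariant under rotations about two distinct centers is incompatible with the non-linearity of $A$.
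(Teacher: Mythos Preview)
The paper does not supply its own proof of this lemma: it is quoted verbatim from \cite{AY23} and used as a black box, so there is no in-paper argument to compare your approach against.

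As for the proposal itself, the parts you actually carry out are sound. The centroid identity $\bm{o}_B-\bm{o}_A=(\bm{o}_A-\bm{a})/|A|$ is correct (any nontrivial rotation fixing a finite set fixes its centroid, so $\bm{o}_A$ and $\bm{o}_B$ are the respective centroids), and it does yield $\bm{o}_B\neq\bm{o}_A$. Your disposal of the case ``$\bm{a}$ strictly inside the smallest enclosing circle of $A$'' is also fine. In the case $k_A=k_B=2$ the composite $T=\rho\circ\rho_A^{-1}$ is indeed a nontrivial translation, your computation $T(A)=(A\setminus\{\bm{a},\rho(\bm{o}_A)\})\cup\{\bm{o}_A,\rho(\bm{a})\}$ is right, the ``at most two cosets'' counting is valid (each coset meeting $A$ contributes at least one element to $T(A)\setminus A$, which has size at most~$2$), and the final collapse of the two parallel lines to one via $\rho_A$-invariance is correct.

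The genuine gap is exactly the one you name: nothing is proved outside $k_A=k_B=2$. When $k_A\neq k_B$ your $T$ becomes a rotation rather than a translation and the coset picture evaporates; when $k_A=k_B\geq 3$ the translation vector $\bm{w}$ is no longer parallel to $\bm{o}_A-\bm{a}$, so the $\rho_A$-invariance of the line through $\bm{a}$ fails and your collapse step does not go through. Your closing sentence (``I expect the full proof to proceed by a case analysis on $(k_A,k_B)$'') is a hope, not an argument. As it stands the proposal is a correct proof only of the special case $k_A=k_B=2$; the remaining cases require a genuinely new idea, not just a twist of the present one.
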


We call a configuration $P$ {\em unfavorable} if $m_P = k_P = 2$,
which is said to be {\em bivalent} in \cite{BDT13}.
Otherwise, it is {\em favorable}.

\begin{theorem}
\label{T1000} 
Algorithm 2GATA transforms any initial configuration $P_0$
either to a configuration $P$ satisfying $m_P = 1$,
or to an unfavorable configuration.
\end{theorem}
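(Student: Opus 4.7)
The plan is to proceed by case analysis on the current configuration $P$, driven by the three clauses of the target function $2gat$. First, observe that any $P$ falling into clause~(1), that is $m_P = 1$ or the unfavorable shape $m_P = k_P = 2$, is a fixed point: every activated robot recomputes $(0,0)$ and stays put. It therefore suffices to show that every other configuration eventually evolves into one of these two absorbing shapes.

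Next I would analyze clause~(2), which applies when $k_P = 1$ and $m_P \geq 2$. Because $k_P = 1$, all robots consistently identify the same largest point $\bm{q} \in \overline{P}$ via $\succ_P$, and any activated robot moves to $\bm{q}$. The central invariant is that $\bm{q}$ remains the unique maximizer under $\succ_{P'}$ after each move: $\mu_P(\bm{q})$ strictly grows while no other multiplicity does, so $\bm{q}$ eventually carries the strict maximum multiplicity, forcing any nontrivial rotation in $G_{P'}$ to fix $\bm{q}$ and, combined with the multiplicities of the other points, excluding $k_{P'} \geq 2$. Fairness of the ${\cal SSYNC}$ scheduler then activates every non-$\bm{q}$ robot in finite time, each activation strictly reducing the count of robots not at $\bm{q}$, so we reach $m_P = 1$ in finitely many steps.

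The heart of the argument lies in clause~(3), where $m_P \geq 3$ and $k_P \geq 2$ and all activated robots head to $\bm{o}_P$. Here I would split according to the scheduler's behavior. If at some step the activation is \emph{asymmetric}, meaning some but not all robots in a single $G_P$-orbit fire together, then the resulting multiplicity pattern destroys the rotational symmetry: Lemma~\ref{L1005}, together with a direct multiplicity-based variant that handles residual occupancies and the case $\bm{o}_P \in \overline{P}$, gives $k_{P'} = 1$, dropping us into clause~(2) or into a terminal state. Otherwise every activation respects full orbits, the rotational symmetry and $\bm{o}_P$ are preserved, and each such activation strictly decreases $n - \mu_P(\bm{o}_P)$, the number of robots away from the center. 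Fairness prevents this descent from continuing indefinitely, so either all robots arrive at $\bm{o}_P$ or the shrinking configuration drops to $m_P \leq 2$, landing in clause~(2) or in a terminal shape.

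The main obstacle is this clause-(3) analysis. First, $\bm{o}_P$ can shift between steps as distinct points disappear, so the potential $n - \mu_P(\bm{o}_P)$ has to be defended by showing that symmetry-preserving activations actually fix $\bm{o}_P$. Second, Lemma~\ref{L1005} is stated for sets under the hypotheses of non-linearity and $\bm{o}_A \notin A$, so adapting it to multiset configurations with lingering occupancies at partially activated orbits, and to linear or $\bm{o}_P$-occupied configurations, requires careful separate bookkeeping. Once these technicalities are settled, combining the clause-(2) and clause-(3) arguments yields the claim.
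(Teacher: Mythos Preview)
Your high-level decomposition matches the paper's, and the clause-(2) analysis is close to correct. One minor correction there: $k_{P'} \geq 2$ is \emph{not} excluded after a move (take $\overline{P'} = \{(-1,0),(0,0),(1,0)\}$ with multiplicities $1,2,1$); what saves the argument is that in that event $\bm{q} = \bm{o}_{P'}$, so clause~(3) still sends activated robots to $\bm{q}$ and $\mu(\bm{q})$ keeps growing. The paper records precisely this as part of its induction hypothesis.

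The real gap is in your clause-(3) analysis, namely the claim that an asymmetric activation forces $k_{P'} = 1$. This fails for linear configurations, and it is not a matter of bookkeeping. Take $\overline{P} = \{-2,-1,1,2\}$ on a line, all multiplicities $1$, so $k_P = 2$, $\bm{o}_P = 0$, and the $G_P$-orbits are $\{-2,2\}$ and $\{-1,1\}$. Activate only the robot at $-2$; it moves to $\bm{o}_P = 0$, yielding $\overline{P'} = \{-1,0,1,2\}$. The new enclosing circle has centre $1/2$, and the $180^{\circ}$ rotation about $1/2$ preserves $\overline{P'}$ with its multiplicities, so $k_{P'} = 2$. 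Moreover $\bm{o}_{P'} = 1/2 \neq \bm{o}_P$ and $\mu_{P'}(\bm{o}_{P'}) = 0 = \mu_P(\bm{o}_P)$, so your potential $n - \mu_P(\bm{o}_P)$ has not decreased either. Neither branch of your dichotomy covers this step. (Even from a non-linear start the dichotomy can leak: from the unit square with $k_P = 4$, activating two antipodal vertices is asymmetric yet produces a linear $P'$ with $k_{P'} = 2$; there the potential happens to drop, but this already shows that Lemma~\ref{L1005} does not extend to partial-orbit multiset moves by mere bookkeeping.)

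This is exactly why the paper devotes its entire case~(I) to linear configurations, with a further odd/even split on $m_0$ and an argument via the spacing of points on the line that ends in an equidistance contradiction; Lemma~\ref{L1005} is invoked only in the non-linear case~(II). Your plan needs a separate, substantive argument for linear $P$ with $k_P = 2$, not a patch on the symmetric/asymmetric split.
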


\begin{proof}
Consider any execution ${\cal E}: P_0, P_1, \ldots$ of 2GATA, 
starting from any initial configuration $P_0$.
Let $m_t =  |\overline{P_t}|$, $k_t = k_{P_t}$, $\mu_t = \mu_{P_t}$,
$C_t = C_{P_t}$, $\bm{o}_t = \bm{o}_{P_t}$, $CH_t = CH(P_t)$,
and $\succ_t = \succ_{P_t}$.

We first claim that, 
there is a $t \geq 0$ such that $m_t = 1$ if $k_0 = 1$.

Let $\bm{q} \in \overline{P_0}$ be the largest point with respect to $\succ_{0}$.
It suffices to show that $\mu_t(\bm{q}) < \mu_{t+1}(\bm{q})$ for all $t \geq 0$,
provided that $\mu_t(\bm{q}) < n$.
The proof is by induction on $t$ for all $t \geq 1$.
The induction hypothesis $IH(t)$ is that
(1) $\mu_t(\bm{q}) > \mu_t(\bm{q}')$ for any $\bm{q}' (\not= \bm{q}) \in \overline{P_t}$ 
(i.e., $\bm{q}$ is hence the largest point with respect to $\succ_{t}$),
(2) $\bm{q} = \bm{o}_t$ holds if $k_t > 1$, and
(3) $\mu_t(\bm{q}) > \mu_{t-1}(\bm{q})$.

As for the base case,
when $t = 1$, since $k_0 = 1$,
robots activated at time $0$ move to $\bm{q} \in \overline{P_0}$.
Since $\mu_0(\bm{q}) \geq \mu_0(\bm{q}')$ for any $\bm{q}' \in \overline{P_0}$ 
by the definition of $\succ_0$,
$\mu_{1}(\bm{q}) > \mu_0(\bm{q}) \geq \mu_0(\bm{q}') \geq \mu_{1}(\bm{q}')$.
Thus $\mu_{1}(\bm{q}) > \mu_{1}(\bm{q}')$,
$\bm{q} = \bm{o}_1$ if $k_1 > 1$,
and $\mu_1(\bm{q}) > \mu_0(\bm{q})$.
Thus $IH(1)$ holds.

As for the induction step, for $t = 1, 2, \ldots,$
we show $IH(t+1)$, provided $IH(t)$,
which however is almost the same as the base case.
When $k_t = 1$,
since $\mu_t(\bm{q}) > \mu_t(\bm{q}')$ for any $\bm{q}' (\not= \bm{q}) \in \overline{P_t}$,
and hence $\bm{q}$ is hence the largest point with respect to $\succ_{t}$,
robots activated at time $t$ move to $\bm{q}$,
as in the base case,
we have $\mu_{t+1}(\bm{q}) > \mu_t(\bm{q}) > \mu_t(\bm{q}') \geq \mu_{t+1}(\bm{q}')$,
which implies that
$\mu_{t+1}(\bm{q}) > \mu_{t+1}(\bm{q}')$,
$\bm{q} = \bm{o}_{t+1}$ if $k_{t+1} > 1$,
and $\mu_{t+1}(\bm{q}) > \mu_t(\bm{q})$,
i.e., $IH(t+1)$ holds.
Otherwise, if $k_t \geq 2$, 
robots activated at time $t$ move to $\bm{o}_t$,
which is $\bm{q}$ by $IH(t)$.
Thus, $\mu_{t+1}(\bm{q}) > \mu_{t+1}(\bm{q}')$,
$\bm{q} = \bm{o}_{t+1}$ if $k_{t+1} > 1$,
and $\mu_{t+1}(\bm{q}) > \mu_t(\bm{q})$,
i.e., $IH(t+1)$ holds.

We can thus assume that $k_t \geq 2$ for all $t \geq 0$ without loss of generality.
We next show that if $P_t$ is favorable, so is $P_{t+1}$,
i.e., $P_t$ is favorable for all $t \geq 0$ (provided that $P_0$ is favorable).
Proof is by contradiction.
Suppose that $P_t$ is favorable, but $P_{t+1}$ is not, to derive a contradiction.
By assumption, $m_t \geq 3$ and $k_t \geq 2$.
Let $\overline{P_{t}} = \{ \bm{q}_1, \bm{q}_2, \ldots, \bm{q}_{m_t}\}$ 
and $\overline{P_{t+1}} = \{ \bm{p}_1, \bm{p}_2 \}$.
Without loss of generality, $\bm{p}_1 = \bm{o}_t$ and $\bm{p}_2 = \bm{q}_j$
for some $1 \leq j \leq m_t$,
since all robots activated at $t$ move to $\bm{o}_t$.
Note that $\bm{o}_t$ may not be in $\overline{P_t}$.
Observe that 
\[
\mu_{t+1}(\bm{p}_1) = n - \mu_{t+1}(\bm{p}_2) = n - \mu_{t+1}(\bm{q}_j)
\geq n - \mu_{t}(\bm{q}_j) > n/2,
\]
since $\mu_{t}(\bm{q}_j) < n/2$ (because $m_t \geq 3$ and $k_t \geq 2$),
which is a contradiction to the fact that $P_{t+1}$ is unfavorable.

Thus, in the rest of the proof,
we show that there is a time $t > 0$ such that $m_0 > m_t$,
provided that $P_t$ is favorable and $k_t \geq 2$ for all $t \geq 0$.
It suffices to show that there is a time $t > 0$ such that $m_t = 1$.

\medskip
\noindent
(I) First consider the case in which $P_0$ is linear.
Let $\overline{P_0} = \{ \bm{q}_1, \bm{q}_2, \ldots , \bm{q}_{m_0} \}$,
where $\bm{q}_i$ ($1 < i < m_0$) appear in $\overline{\bm{q}_1\bm{q}_{m_0}}$ 
in this order.
By the definition of 2GATA,
since $P_0$ is linear, 
so is $P_t$ for all $t \geq 0$,
and hence $k_t = 2$ since $k_t \geq 2$.

\medskip
\noindent
(I-A) Suppose that $m_0 (\geq 3)$ is odd.
Since $k_0 = 2$,  
$\bm{o}_0 = \bm{q}_{\lceil m_0/2 \rceil} \in \overline{P_0}$,
and all robots activated at time 0 move to $\bm{o}_0$.
Since $m_1 \leq m_0$, $m_1 = m_0$ (because we have nothing to show if $m_1 < m_0$),
$\overline{P_1} = \overline{P_0}$, $\mu_1(\bm{o}_1) = \mu_1(\bm{o}_0) > \mu_0(\bm{o}_0)$,
and $k_1 = 2$ by assumption.
Repeating this argument,
since $\mu_t(\bm{o}_0) = n$ implies $m_t = 1$,
there is a time $t$ such that $m_t < m_0$.

\medskip
\noindent
(I-B) Suppose that $m_0 (\geq 4)$ is even.
Since $k_0 = 2$, 
all robots activated at time 0 move to 
$\bm{o}_0 = (\bm{q}_1 + \bm{q}_{m_0})/2 \not\in \overline{P_0}$.
Thus $m_0 \leq m_1 \leq m_0 + 1$ (since we have nothing to show if $m_1 < m_0$).

If $m_1 = m_0 + 1$, since $m_1$ is odd and $k_1 = 2$, by the argument in (I-A),
eventually $m_1 > m_t$ holds for some $t > 1$ for the first time.
Since all robots activated move to $\bm{o}_0$,
$\overline{P_{t'}} = \overline{P_1}$ for all $1 \leq t' < t$.
Since $k_t = 2$ and $m_t = m_0$ (since we have nothing to show if $m_t < m_0$),
$\overline{P_t} = \overline{P_1} \setminus \{ \bm{q}_e \} = 
(\overline{P_0} \cup \{ \bm{o}_0 \}) \setminus \{ \bm{q}_e \}$,
where $e$ is either $1$ or $m_0$.
Furthermore, since $k_0 (= k_1) = k_t = 2$,
for all $i = 1, 2, \ldots , m_0 - 1$ except for $m_0/2$,
$dist(\bm{q}_i, \bm{q}_{i+1}) = d$, 
and $dist(\bm{q}_{m_0/2},\bm{o}_0) = dist(\bm{o}_0,\bm{q}_{m_0/2+1}) = d$,
for some $d > 0$.
That is, eventually $\cal E$ reaches a configuration $P_t$
such that $k_t = 2$, $m_t = m_0$ is even, and all points in 
$\overline{P_t} (= (\overline{P_0} \cup \{ \bm{o}_0 \}) \setminus \{ \bm{q}_e \})$
occur in an equidistant manner in $CH_t$,
where $e$ is either $1$ or $m_0$.

If $m_1 = m_0$, since $k_1 = 2$,
by the same argument,
$k_1 = 2$, $m_1 = m_0$ is even, and all points in
$\overline{P_1} (= (\overline{P_0} \cup \{ \bm{o}_0 \}) \setminus \{ \bm{q}_e \})$
occur in an equidistant manner in $CH_1$,
where $e$ is either $1$ or $m_0$.

Thus, without loss of generality, we may assume that
$k_0 = 2$, $m_0$ is even, 
and all points $\bm{q}_i \in \overline{P_0}$
occur in an equidistant manner in $CH_0$.

We repeat the above arguments:
Since $k_{0} = k_{1} = 2$, 
it must hold 
$2 \cdot dist(\bm{q}_1,\bm{q}_2) =  dist(\bm{q}_{m_0/2},\bm{q}_{m_0/2+1})$ in $P_0$, 
which contradicts to the assumption for $P_0$.

\medskip
\noindent
(II) Next consider the case in which $P_0$ is not linear.
As in (I), $m_0 \leq m_1 \leq m_0 + 1$ 
(since we have nothing to show if $m_1 < m_0$).

\medskip
\noindent
(II-A) Suppose first that $m_1 = m_0 + 1$.
That is, $\bm{o}_0 \not\in \overline{P_0}$,
and $\overline{P_1} = \overline{P_0} \cup \{ \bm{o}_0 \}$.
Since $k_1 \geq 2$ and $\bm{o}_1 = \bm{o}_0$,
all robots activated at time $t$ move to $\bm{o}_0$,
as long as $\overline{P_t} = \overline{P_1}$ holds.

Let $t > 1$ be the first time such that $m_0 = m_t < m_{t-1} = m_1$ hold
(since we have nothing to show if $m_t < m_0$).
Since $\overline{P_{t-1}} = \overline{P_1} = \overline{P_0} \cup \{ \bm{o}_0 \}$, 
$\bm{o}_{t-1} = \bm{o}_0$, and $k_{t-1} \geq 2$,
all robots activated at $t-1$ move to $\bm{o}_0$.
Suppose that $\overline{P_t} = \overline{P_1} \setminus \{ \bm{q} \}$,
for some $\bm{q} \in \overline{P_0}$ (since $\bm{o}_0$ remains in $\overline{P_t}$).
Since $k_0 \geq 2$, $P_0$ is not linear, $\bm{o}_0 \not\in \overline{P_0}$,
and $\overline{P_{t}} = (\overline{P_0}\setminus\{\bm{q}\}) \cup \{ \bm{o}_0 \}$,
we have $k_{t} = 1$ by Lemma~\ref{L1005}. 
Thus eventually $\cal E$ reaches a configuration $P_{t'}$ such that $m_{t'} = 1$.

\medskip
\noindent
(II-B) Suppose next that $m_1 = m_0$.
There are two cases to be considered.
If $\bm{o}_0 \in \overline{P_0}$ and $\overline{P_1} = \overline{P_0}$,
all robots activated at $t$ move to $\bm{o}_0$, 
as long as $\overline{P_t} = \overline{P_0}$, since $k_0 \geq 2$.
Thus eventually $\cal E$ reaches a configuration $P_t$ such that $m_t < m_0$.

Otherwise, if $\bm{o}_0 \not\in \overline{P_0}$,
and $\overline{P_1} = (\overline{P_0} \cup \{ \bm{o}_0 \}) \setminus \{ \bm{q} \}$ 
for some $\bm{q} \in \overline{P_0}$, where $\bm{q}\neq \bm{o}_0$.
Since $k_0 \geq 2$, $P_0$ is not linear, $\bm{o}_0 \not\in \overline{P_0}$,
we have $k_{t} = 1$ by Lemma~\ref{L1005}. 
Thus eventually $\cal E$ reaches a configuration $P_{t'}$ such that $m_{t'} = 1$.
\end{proof}

\begin{corollary}
\label{C1010} 
The MAS for the \mbox{\rm $c$GAT} is 1, for all $2 \leq c \leq n$.
\end{corollary}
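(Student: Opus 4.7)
The plan is to derive this corollary as an immediate consequence of Theorem~\ref{T1000}. Since $2$GATA is a singleton set, it witnesses an algorithm of size $1$ for the target problems; and any nonempty target function assignment uses at least one function, so the MAS is trivially bounded below by $1$. Hence the entire content of the corollary is the upper bound: showing that $2$GATA already solves the $c$GAT for every $c \geq 2$ in the self-stabilizing sense.

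For the upper bound, I would invoke Theorem~\ref{T1000} directly. It guarantees that, starting from any initial configuration $P_0$, every execution of $2$GATA reaches either a configuration with $m_P = 1$ or an unfavorable configuration with $m_P = k_P = 2$. In both cases $|\overline{P}| \leq 2 \leq c$, so the $c$GAT goal is met. Moreover, one must verify that the goal is preserved forever after being reached; this is where self-stabilization is checked. Reading the definition of $2gat$, case~1 returns $(0,0)$ exactly when $m_P = 1$ or when $m_P = 2$ and $k_P = 2$, so every activated robot stays put in both terminal situations. Consequently $P_t$ remains fixed once it first satisfies the condition, and the execution is a valid solution to $c$GAT.

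There is essentially no obstacle here since the structural work has been done in Theorem~\ref{T1000}; the only step requiring a brief remark is the self-stabilization check above, together with the observation that the statement ``an algorithm for $c$GAT is an algorithm for $(c{+}1)$GAT'' means it suffices to treat the hardest case $c = 2$. Combining these observations with the trivial lower bound $\mathrm{MAS} \geq 1$ gives $\mathrm{MAS} = 1$ for every $2 \leq c \leq n$, completing the proof.
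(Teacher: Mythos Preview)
Your proposal is correct and follows essentially the same approach as the paper: invoke Theorem~\ref{T1000} to conclude that 2GATA solves the 2GAT (and hence the $c$GAT for all $c \geq 2$), then observe that $|\text{2GATA}| = 1$ gives the matching lower bound. Your additional remarks on stability of the terminal configurations and the trivial lower bound $\mathrm{MAS} \geq 1$ are details the paper leaves implicit, but the argument is the same.
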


\begin{proof}
By Theorem~\ref{T1000},
2GATA is an algorithm for the 2GAT,
which implies that it is an algorithm for the $c$GAT
for all $2 \leq c \leq n$.
\end{proof}

\medskip

If we consider that a problem with a smaller MAS is easier than 
a one with a larger MAS, 
the $c$GAT is not harder than the $c$SCT for all $2 \leq c \leq n$.

\begin{corollary}
\label{C1030} 
\mbox{\rm 2GATA} solves the \mbox{\rm GAT},
if and only if the initial configuration is favorable.
\end{corollary}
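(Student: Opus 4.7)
The plan is to derive both directions directly from Theorem~\ref{T1000} and the definition of the target function $2gat$; no fresh case analysis is needed.

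For the ``if'' direction, I would argue as follows. Assume $P_0$ is favorable. Theorem~\ref{T1000} tells us that the execution $\mathcal{E}\colon P_0,P_1,\ldots$ of 2GATA reaches either a configuration $P$ with $m_P=1$ or an unfavorable configuration. The second possibility must be excluded. Inspecting the proof of Theorem~\ref{T1000}, there is an intermediate claim that ``if $P_t$ is favorable, so is $P_{t+1}$''; iterating this claim starting from $P_0$ shows that every $P_t$ is favorable. Hence the only remaining outcome is that $\mathcal{E}$ reaches some $P_t$ with $m_{P_t}=1$, i.e., gathering at a single point has occurred and 2GATA solves GAT on $P_0$.

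For the ``only if'' direction, I would directly verify that any unfavorable configuration is a fixed point of 2GATA. Suppose $P_0$ is unfavorable, so $m_{P_0}=k_{P_0}=2$. By clause~1 in the definition of $2gat$, every activated robot $r_i$ computes $2gat(Q^{(i)})=(0,0)$, which in $Z_i$ is its own current position; thus no robot ever moves, and $P_t=P_0$ for all $t\ge 0$. Since $m_{P_t}=2\ne 1$, the GAT is not solved from $P_0$. Contrapositively, if 2GATA solves GAT on $P_0$, then $P_0$ must be favorable.

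There is no real obstacle: both directions reduce to invoking pieces already proved in Theorem~\ref{T1000} and reading off the first clause of $2gat$. The only point worth stating explicitly is that the favorability-preservation step, which in Theorem~\ref{T1000} appears as a stepping stone inside a larger argument, is exactly what is needed to promote the ``reaches $m_P=1$ or an unfavorable configuration'' dichotomy into the sharper ``reaches $m_P=1$'' when the starting configuration is favorable.
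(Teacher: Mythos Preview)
Your proof is correct and matches the paper's approach, which simply states that the corollary is ``immediate from the proof of Theorem~\ref{T1000}.'' You have correctly unpacked this: the ``if'' direction uses the favorability-preservation step from that proof together with the dichotomy in the theorem statement, and the ``only if'' direction follows from clause~1 of $2gat$, which makes every unfavorable configuration a fixed point.
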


\begin{proof}
It is immediate from the proof of Theorem~\ref{T1000}.
\end{proof}

\medskip

Corollary~\ref{C1030} has been obtained by some researchers:
The \mbox{\rm GAT} is solvable, 
if and only if $n$ is odd \cite{DP12}.
Or more precisely,
it is solvable,
if and only if the initial configuration is favorable 
(i.e., not bivalent) \cite{BDT13}.
Note that the algorithm of \cite{BDT13} makes use of the Weber point
and tolerates at most $n-1$ crashes.

Consider a set GATA $= \{ gat_1, gat_2 \}$ of 
target functions $gat_1$ and $gat_2$ defined as follows:

\medskip
\noindent
{\bf [Target function $gat_1$]}
\vspace{-1mm}
\begin{enumerate}
\item
If $m = 1$, then $gat_1(P) = (0,0)$.

\item
If $m = 2$ and $k_P = 1$, or $m \geq 3$,
$gat_1(P) = 2gat(P)$.

\item
If $m = 2$ and $k_P = 2$,
then $gat_1(P) = (0,0)$.
\end{enumerate}

\medskip
\noindent
{\bf [Target function $gat_2$]}
\vspace{-1mm}
\begin{enumerate}
\item
If $m = 1$, then $gat_2(P) = (0,0)$.

\item
If $m = 2$ and $k_P = 1$, or $m \geq 3$,
then $gat_2(P) = 2gat(P)$.

\item
Suppose that $m = 2$ and $k_P = 2$.
Let $\bm{q}$ be a point in $\overline{P}$ such that $\bm{q}\neq (0,0)$.
Since $(0,0) \in \overline{P}$,
$\bm{q}$ is uniquely determined.
If $\bm{q} > (0,0)$, then $gat_2(P) = \bm{q}$.
Else if $\bm{q} < (0,0)$, then $gat_2(P) = 2 \bm{q}$.
\end{enumerate}

\begin{theorem}
\label{T1010}
\mbox{\rm GATA} is an algorithm for the \mbox{\rm GAT}.
The MAS for the \mbox{\rm GAT} is, hence, $2$.
\end{theorem}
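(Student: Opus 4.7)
The lower bound $\text{MAS}\ge 2$ has already been noted above (the GAT from arbitrary initial configurations is unsolvable by any algorithm of size $1$ when $n$ is even \cite{DP12}), so it remains to prove $\text{MAS}\le 2$ by showing that GATA solves the GAT.

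A routine case check shows that $gat_1$ coincides with $2gat$ on every input, and that $gat_2$ coincides with $2gat$ on every non-bivalent input. Hence, so long as the current configuration is not bivalent, a step of GATA is a step of 2GATA regardless of the surjective assignment $\mathcal{A}$. By Theorem~\ref{T1000} (and the ``favorable stays favorable'' statement embedded in its proof), GATA from any favorable configuration eventually reaches $m_P = 1$. It therefore suffices to show that every bivalent configuration visited by GATA is left, in finitely many steps, for a non-bivalent (hence favorable) one.

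Fix a bivalent $P_t$ with $\overline{P_t}=\{\bm{q}_i,\bm{q}_j\}$. In the local frame of a $gat_2$ robot $r$ at $\bm{q}_i$ the other cluster sits at some $\bm{q}\neq(0,0)$; call $r$ \emph{move-to} if $\bm{q}>(0,0)$ and \emph{move-past} otherwise, noting that a move-to robot goes to $\bm{q}_j$ while a move-past robot goes to $2\bm{q}_j-\bm{q}_i$; $gat_1$ robots do not move. An enumeration of activated subsets shows that the successor is again bivalent only via (a) a \emph{safe swap}, activating equal positive numbers of move-to $gat_2$ robots from each of the two clusters and nothing else, or (b) a \emph{shift}, activating all $n/2$ robots of one cluster when they happen to be entirely move-past $gat_2$. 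The key geometric fact is a \emph{disposition flip}: whenever a robot's physical cluster changes, or the axis of the bivalent configuration reverses direction, the local vector pointing from its new cluster to the new opposite cluster is the negative of the old one, so its disposition flips. Consequently each safe swap strictly decreases $\min(a_i,a_j)$, where $a_i,a_j$ denote current move-to $gat_2$ counts at the two clusters; and a shift can happen at most once, because a shift at $\bm{q}_i$ requires $\bm{q}_i$ to contain no $gat_1$ robot, whence the surjectivity of $\mathcal{A}$ places a $gat_1$ robot at $\bm{q}_j$ that remains after the shift and permanently blocks any further shift at that cluster, while the fresh cluster created by the shift is entirely move-to and can never become entirely move-past under further safe swaps. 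Hence the total number of bivalence-preserving steps is $O(n)$, and by the fairness of the scheduler some $gat_2$ activation breaking bivalence must eventually occur.

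It remains to observe that the successor of a bivalence-breaking step is always favorable: either $m_P=2$ with unequal multiplicities, or $m_P\in\{3,4\}$ with collinear points. The only delicate case is $n=4$ with a single move-past activation, which yields the multiplicities $(2,1,1)$ on a symmetric collinear triple with $k_P=2$; here the $\bm{o}_P$-clause of $2gat$ drives every activated robot to the middle point, so the gathering still completes. Applying Theorem~\ref{T1000} to this favorable successor finishes the proof. The main obstacle in writing the proof out cleanly is the combinatorial enumeration of activated subsets that preserve bivalence together with the one-shift argument; everything else reduces to elementary calculation with the lexicographic order, the order $\succ_P$, and the clauses of $2gat$.
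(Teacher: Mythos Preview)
Your proof is correct and rests on the same two observations that drive the paper's argument: (i) a $gat_2$ robot's disposition (move-to vs.\ move-past) flips whenever the direction from its cluster to the opposite cluster reverses, and (ii) the stationary $gat_1$ robot anchors one cluster and rules out certain transitions. Where the presentations diverge is in organization. You explicitly enumerate the only two bivalence-preserving transition types with $gat_2$ movement (safe swap and shift) and then bound their total number by a counting/potential argument: each safe swap strictly decreases both move-to counts, and at most one shift can ever occur because afterwards the $gat_1$ robot permanently pins one cluster while the freshly created cluster can never be emptied of move-to robots by further swaps. The paper instead fixes global coordinates $\{(0,0),(1,0)\}$, assumes bivalence persists forever, and chases a direct contradiction by tracking which activated $gat_2$ robots at each coordinate can still participate in later rounds. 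Your enumeration-plus-progress-measure presentation is cleaner and makes the $O(n)$ bound on bivalence-preserving steps explicit; the paper's coordinate chase is more concrete but harder to follow.

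One remark: your final paragraph about the $n=4$ ``delicate case'' is unnecessary. Any non-bivalent configuration is by definition favorable, so once bivalence is broken you may simply invoke Corollary~\ref{C1030} (or the ``favorable stays favorable'' part of the proof of Theorem~\ref{T1000}) to conclude that 2GATA---and hence GATA, which coincides with it on favorable inputs---completes the gathering. The collinear $k_P=2$, $m_P=3$ configuration you single out is already handled there.
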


\begin{proof}
We show that GATA = $\{ gat_1, gat_2 \}$ is a gathering algorithm.
Consider any execution ${\cal E}: P_0, P_1, \ldots$ 
starting from any initial configuration $P_0$.
By Corollary~\ref{C1030},
we can assume that $P_0$ is unfavorable by the definition of GATA.
Without loss of generality, 
we may assume that $P_0$ satisfies 
$\overline{P_0} = \{ (0,0), (1,0)\}$ in $Z_0$,
$\mu_0((0,0)) = \mu_0((1,0)) = h$, i.e., $n = 2h$ is even.
By the definition of GATA again, 
it is easy to observe that $P_t$ is linear.
We show that there is a time $t$ such that $P_t$ is favorable.
Proof is by contradiction.

A $gat_1$ robot (i.e., a robot taking $gat_1$ as its target function) does not move,
if a configuration is unfavorable.
Since there is at least one $gat_1$ robot,
without loss of generality,
we assume that there is a $gat_1$ robot at $(0,0)$ at $t = 0$.
Since $P_t$ is unfavorable for all $t > 0$,
all $gat_1$ robots do not move forever.
Let $A_t(\bm{q})$ be the set of $gat_2$ robots at $\bm{q}$ activated at time $t$,
and let $a_t(\bm{q}) = |A_t(\bm{q})|$.

Consider any robot $r_i \in A_0((0,0))$.
If $(0,0) > \bm{q} ~(\not= (0,0))$ in $Z_i$,
then $(0,0), (2,0) \in \overline{P_1}$ by the definition of $gat_2$.
Since $P_1$ is unfavorable, 
$\overline{P_1} = \{ (0,0), (2,0) \}$,
which implies that all robots at $(1,0)$ move to $(0,0)$ by the definition of $gat_2$.
It is a contradiction, 
since $k_1 = 1$ because $\mu_1((0,0)) > h$.

Therefore, every robot $r_i \in A_0((0,0))$ satisfies $(0,0) < \bm{q}$ in $Z_i$,
and move to $(1,0)$,
which implies $(0,0), (1,0) \in \overline{P_1}$.
Since $P_1$ is unfavorable,
the following three conditions hold:
$\overline{P_1} = \{ (0,0), (1,0) \}$,
$a_0((1,0)) = a_0((0,0))$,
and all robots $r_j \in A_0((1,0))$ move to $(0,0)$,
each satisfying $(0,0) < \bm{q}$ in $Z_j$.
Then $P_0 = P_1$ holds.

We make a key observation here.
At $t = 1$, 
each robot $r_i \in A_0((0,0))$ (resp. $r_j \in A_0((1,0))$)
currently at $(1,0)$ (resp. $(0,0)$) satisfies
$(0,0) > \bm{q}$ in $Z_i$ (resp. in $Z_j$).
Therefore, any robot $r_i$ in $A_0((0,0))$ 
(resp.  $r_j$ in $A_0((1,0))$) is not included 
in $A_t((1,0))$ (resp. $A_t((0,0))$) as a member, 
as long as $P_0 = P_t$ holds.
Thus there is a time $t > 0$ such that $a_t((0,0)) = 0$ holds.
We consider the smallest such $t$.
That is, $a_{t'}((0,0)) > 0$ for $0 \leq t' \leq t-1$,
and $P_0 = P_t$ hold.

If there is a robot $r_j \in A_t((1,0))$
satisfying $(0,0) < \bm{q}$ in $Z_j$,
then $a_t((0,0)) > 0$, which is a contradiction.
Thus every robot $r_j \in A_t((1,0))$ satisfies $(0,0) > \bm{q}$ in $Z_j$,
and moves to $(-1,0)$, 
which implies that $(-1,0), (0,0) \in \overline{P_{t+1}}$.
Since $P_{t+1}$ is unfavorable, 
$\overline{P_{t+1}} = \{ (-1,0), (0,0) \}$,
and all robots $r_j$ at $(1,0)$ at time $t$ are activated at $t$,
and move to $(-1,0)$, each satisfying $(0,0) > \bm{q}$ in $Z_j$.
Since $P_{t+1}$ is unfavorable, $a_t((1,0)) = h$ and hence $t = 0$.

Without loss of generality,
we may assume that $\overline{P_0} = \{ (-1,0), (0,0) \}$,
$\mu_0((-1,$ $0)) = \mu_0((0,0)) = h$,
all robots $r_j$ at $(-1,0)$ are $gat_2$ robots,
each of which satisfies $(0,0) < \bm{q}$ in $Z_j$.

By the same argument above, 
every robot $r_i \in A_t((0,0))$ satisfies $(0,0) < \bm{q}$ in $Z_i$,
and moves to $(-1,0)$;
$r_i$ then satisfies $(0,0) > \bm{q}$ in $Z_i$ at time $t+1$ (since $r_i$ is now at $(-1,0)$).
The same number of robots $r_j$ are included in $A_t((-1,0))$,
which should not include a robot $r_j$ satisfying $(0,0) > \bm{q}$ in $Z_j$,
and they all move to $(0,0)$;
$r_j$ then satisfies 
$(0,0) > \bm{q}$ in $Z_j$ at time $t+1$ 
(since $r_i$ is now at $(0,0)$).

Since there is a $gat_1$ robot at $(0,0)$ at $t= 0$,
when the $h$th robot $r_j$ at $(-1,0)$ (which satisfies $(0,0) < \bm{q}$ in $Z_j$)
is activated at some time $t$,
there is no robot at $(0,0)$ which can move to $(-1,0)$.
It is a contradiction, since $P_{t+1}$ is favorable.
\end{proof}

\medskip

A target function $\phi$ is said to be {\em symmetric} (with respect to the origin)
if $\phi(P) = -\phi(-P)$ for all $P \in {\cal P}$.
An algorithm $\Phi$ is said to be {\em symmetric} 
if every target function $\phi \in \Phi$ is symmetric.
Target function $2gat$ is symmetric,
but GATA is not a symmetric algorithm.
Indeed, the next lemma holds.

\begin{lemma}
\label{L1020} 
There is no symmetric algorithm of size 2 for the \mbox{\rm GAT}.
\end{lemma}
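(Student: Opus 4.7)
I shall argue by contradiction: suppose $\Phi=\{\phi_1,\phi_2\}$ is a symmetric algorithm of size $2$ solving the GAT. I will exhibit a swarm of $n=4$ robots, a surjective target function assignment, adversarial local coordinate systems, and a fair $\cal SSYNC$ schedule under which the swarm fails to gather.

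The starting point is the bivalent initial configuration $P_0=\{(-1,0),(-1,0),(1,0),(1,0)\}$ together with the surjective assignment placing one $\phi_1$-robot and one $\phi_2$-robot at each of the two clusters. The adversary picks, for every robot, the identity-rotation local coordinate system shifted to that robot's position, so that a robot at global position $\bm{p}$ sees the view $P_t-\bm{p}$ at time $t$. The pivotal observation is the invariant: under the $\cal FSYNC$ schedule, $P_t$ remains point-symmetric about the origin for every $t$. Indeed, two partner robots at $\pm\bm{p}$ using the same target function $\phi_i$ see views $Q=P_t-\bm{p}$ and $-Q=P_t+\bm{p}$ (using $-P_t=P_t$), and by $\phi_i(-Q)=-\phi_i(Q)$ their new global positions $\pm(\bm{p}+\phi_i(Q))$ are negations of each other. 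Because the $\phi_1$-pair and the $\phi_2$-pair each satisfy this pairing, the invariant propagates inductively, and hence any gathering under this execution must take place at the origin.

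Writing $\bm{c}_i=\phi_i(Q_0)$ for the initial shared view $Q_0=\{(0,0),(0,0),(2,0),(2,0)\}$, a direct calculation shows the four robots sit at $\bm{c}_1+(-1,0)$, $\bm{c}_2+(-1,0)$, $-\bm{c}_1+(1,0)$, $-\bm{c}_2+(1,0)$ after one $\cal FSYNC$ step, so one-step gathering at the origin forces $\bm{c}_1=\bm{c}_2=(1,0)$. The argument now splits into two cases. In case (i), $\bm{c}_1\neq(1,0)$ or $\bm{c}_2\neq(1,0)$: I iterate the $\cal FSYNC$ schedule and, by a case analysis on $(\bm{c}_1,\bm{c}_2)$ (including the loops $\bm{c}_1=\bm{c}_2\in\{(0,0),(2,0)\}$ that return exactly to $P_0$, and the ``collapse'' case $\bm{c}_1+\bm{c}_2=(2,0)$ that produces another bivalent configuration), show the configuration either loops forever or recurses into a new point-symmetric configuration on which the same adversary strategy can be reapplied after rescaling and rotation. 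In case (ii), $\bm{c}_1=\bm{c}_2=(1,0)$: the adversary switches to $\cal SSYNC$ and activates only the two $\phi_1$-robots at $t=0$, so they move to the origin while $r_2,r_4$ remain at $\pm(1,0)$, and then alternates single-robot activations in a schedule analogous to the classical two-robot-midpoint Zeno impossibility, in which the residual gap between $r_2$ and $r_4$ halves at each round but never vanishes.

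The main obstacle will be making the case-(i) recursion fully rigorous --- showing that no choice of symmetric $\phi_1,\phi_2$ can make the resulting $\cal FSYNC$ trajectory reach the gathered state in finitely many steps --- and making the case-(ii) Zeno argument robust against arbitrary symmetric behaviors of $\phi_1,\phi_2$ on the asymmetric intermediate views created by partial activations.
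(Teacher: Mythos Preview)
Your proposal has a genuine gap that you yourself flag at the end: neither case (i) nor case (ii) is actually carried to a contradiction, and I do not see how they can be without a new idea.

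In case (i) you try to analyse the $\cal FSYNC$ trajectory by looking at $(\bm{c}_1,\bm{c}_2)=(\phi_1(Q_0),\phi_2(Q_0))$. Your point-symmetry invariant is correct, and it does force any gathering to occur at the origin; but it does \emph{not} prevent gathering. The functions $\phi_1,\phi_2$ are arbitrary symmetric target functions: after one $\cal FSYNC$ step the configuration is a new point-symmetric multiset on which $\phi_1,\phi_2$ may do whatever they like, and in particular they may send every robot to the origin in step two (or step seventeen). Your ``rescale and recurse'' idea presupposes some scale covariance or self-similarity of $\phi_1,\phi_2$ that is nowhere assumed. A first-step case split on $(\bm{c}_1,\bm{c}_2)$ cannot control the full trajectory.

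In case (ii) the difficulty is the same in disguise. After you activate only the two $\phi_1$-robots, the configuration becomes $\{(0,0),(0,0),(-1,0),(1,0)\}$, which is no longer a bivalent $n/2$--$n/2$ split. You now know nothing about $\phi_1,\phi_2$ on this view, so the two-robot midpoint/Zeno argument does not transfer; there is no reason the remaining robots must ``halve the gap'' rather than jump straight to the origin.

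The paper avoids this open-ended case analysis entirely. Its key idea is to first \emph{use} the assumption that $\Phi$ solves GAT: run $\Phi$ on a swarm in which the two robots at each point are clones (same $\phi_i$, same local frame), under an adversarial schedule, until the very last step before gathering. At that moment there must exist a two-point configuration $\{\bm{p},\bm{p},\bm{q},\bm{q}\}$ and frames $Z,Z'$ such that a $\phi_1$-robot at $\bm{p}$ with frame $Z$ stays put while a $\phi_2$-robot at $\bm{q}$ with frame $Z'$ moves to $\bm{p}$. Now start a fresh execution from this same two-point configuration but with the assignment re-paired so that each point hosts one $\phi_1$-robot and one $\phi_2$-robot (the frames kept as before). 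Because the configuration is centrally symmetric and $\phi_1,\phi_2$ are symmetric, one step of $\cal FSYNC$ returns exactly the same multiset (the $\phi_1$-robots stay, the $\phi_2$-robots swap points), so the execution is constant and never gathers. The whole argument is a few lines; the trick you are missing is to let the correctness hypothesis hand you the one specific behaviour (``stay'' versus ``cross over'') you need, instead of trying to enumerate all possible behaviours of $\phi_1,\phi_2$.
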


\begin{proof}
Proof is by contradiction.
Suppose that $\Phi = \{ \phi_1, \phi_2 \}$ is a symmetric gathering algorithm.

Consider the following case of $n = 4$.
Let ${\cal R} = \{ r_1, r_2, r_3, r_4 \}$,
where $r_1$ and $r_2$, and $r_3$ and $r_4$, are respectively clones.
That is, $r_1$ and $r_2$ (resp. $r_3$ and $r_4$) share 
the same local coordinate system $Z$ (resp. $Z'$),
and $r_1$ and $r_2$ (resp. $r_3$ and $r_4$) take $\phi_1$ (resp. $\phi_2$) 
as their target function.
In the initial configuration $P_0$, $\overline{P_0} = \{ \bm{p}_0, \bm{q}_0\}$,
$\bm{x}_0(r_1) = \bm{x}_0(r_2) = \bm{p}_0$, and $\bm{x}_0(r_3) = \bm{x}_0(r_4) = \bm{q}_0$ hold.
Furthermore, the scheduler activates $r_1$ (resp. $r_3$),
if and only if it activates $r_2$ (resp. $r_4$).

Let ${\cal E}: P_0, P_1, \ldots$ be an execution satisfying the above conditions.
At any time $t$, $\bm{x}_t(r_1) = \bm{x}_t(r_2) = \bm{p}_t$,
and $\bm{x}_t(r_3) = \bm{x}_t(r_4) = \bm{q}_t$, for some $\bm{p}_t$ and $\bm{q}_t$.

We consider the following scheduler:
For any time $t$, unless they all meet at time $t+1$,
the scheduler activates all robots.
Otherwise, it nondeterministically selects either $r_1$ and $r_2$, or $r_3$ and $r_4$,
and activates them.
Since $\cal E$ transforms $P_0$ to a goal configuration, 
in which they all gather, there is a time instant $t$ such that
either $\bm{p}_{t+1} = \bm{p}_t$ and $\bm{q}_{t+1} = \bm{p}_t$,
or $\bm{p}_{t+1} = \bm{q}_t$ and $\bm{q}_{t+1} = \bm{q}_t$.

Now, consider an initial configuration $Q_0$ such that
$\bm{x}_0(r_1) = \bm{x}_0(r_3) = \bm{p}_t$ and $\bm{x}_0(r_2) = \bm{x}_0(r_4) = \bm{q}_t$.
Then under the $\cal FSYNC$ scheduler,
$Q_0 = Q_1 = \ldots$ hold, since $\Phi$ is symmetric.
Thus $\Phi$ is not a gathering algorithm.
\end{proof}

\medskip

There is however a symmetric gathering algorithm 
SGTA $= \{ sgat_1, sgat_2, sgat_3 \}$,
where target functions $sgat_1, sgat_2$, and $sgat_3$
are defined as follows:

\medskip
\noindent
{\bf [Target function $sgat_1$]}
\vspace{-1mm}
\begin{enumerate}
\item
If $m = 1$, then $sgat_1(P) = (0,0)$.

\item
If $m \geq 3$, or $m = 2$ and $k_P \not= 2$,
then $sgat_1(P) = 2gat(P)$.

\item
Suppose that $m = 2$ and $k_P = 2$.
Let $\bm{q}$ be a point in $\overline{P}$ such that $\bm{q}\neq (0,0)$.
Since $(0,0) \in \overline{P}$,
$\bm{q}$ is uniquely determined.
Then $sgat_2(P) = - \bm{q}$.
\end{enumerate}

\medskip
\noindent
{\bf [Target function $sgat_2$]}
\vspace{-1mm}
\begin{enumerate}
\item
If $m = 1$, then $sgat_2(P) = (0,0)$.

\item
If $m \geq 3$, or $m = 2$ and $k_P \not= 2$,
then $sgat_2(P) = 2gat(P)$.

\item
Suppose that $m = 2$ and $k_P = 2$.
Let $\bm{q}$ be a point in $\overline{P}$ such that $\bm{q}\neq (0,0)$.
Since $(0,0) \in \overline{P}$,
$\bm{q}$ is uniquely determined.
Then $sgat_2(P) = - 2 \bm{q}$.
\end{enumerate}

\medskip
\noindent
{\bf [Target function $sgat_3$]}
\vspace{-1mm}
\begin{enumerate}
\item
If $m = 1$, 
then $sgat_3(P) = (0,0)$.

\item
If $m \geq 3$, or $m = 2$ and $k_P \not= 2$,
then $sgat_3(P) = 2gat(P)$.

\item
Suppose that $m = 2$ and $k_P = 2$.
Let $\bm{q}$ be a point in $\overline{P}$ such that $\bm{q}\neq (0,0)$.
Since $(0,0) \in \overline{P}$,
$\bm{q}$ is uniquely determined.
Then $sgat_3(P) = -3 \bm{q}$.
\end{enumerate}

\begin{theorem}
\label{T1020}
\mbox{\rm SGTA} solves the \mbox{\rm GAT} for $n\geq 3$.
The MAS of symmetric algorithm for the \mbox{\rm GAT} is hence $3$.
\end{theorem}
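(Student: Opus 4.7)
The plan has two halves. The lower bound MAS $\geq 3$ for symmetric algorithms is already established by Lemma~\ref{L1020}, so the remaining task is the upper bound: the size-$3$ symmetric algorithm SGTA solves the GAT for every $n \geq 3$. Symmetry of SGTA is routine: on any configuration with $m_P = 1$ or in the favorable case ($m_P \geq 3$, or $m_P = 2$ with $k_P \neq 2$) every $sgat_i$ agrees with $2gat$, and $2gat$ is symmetric because reflection through the origin sends $\bm{o}_P$ to $-\bm{o}_P$, preserves multiplicities, and a short coordinate computation gives $V_{-P}(-\bm{q}) = V_P(\bm{q})$ (the $x$-axis of $\Xi_{-\bm{q}}$ points opposite to that of $\Xi_{\bm{q}}$, but every point of $-P$ is itself reflected, so the view is unchanged); in the unfavorable case $sgat_i(P) = -i\bm{q}$ and $sgat_i(-P) = i\bm{q} = -sgat_i(P)$.

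For correctness, I will split on whether the initial configuration $P_0$ is favorable. If it is, every $sgat_i$ coincides with $2gat$ on every configuration reached along the execution, so the run is indistinguishable from a run of 2GATA; the favorable-to-favorable invariant proved inside Theorem~\ref{T1000}, combined with Corollary~\ref{C1030}, yields convergence to $m = 1$. Otherwise $P_0$ is unfavorable, with clusters $A$ and $B$ each of multiplicity $h = n/2$; since $n$ must be even in an unfavorable configuration, $n \geq 3$ forces $n \geq 4$ and hence $h \geq 2$. The assignment is a surjection onto the three target functions yet only two positions are occupied, so pigeonhole forces at least one cluster to carry robots of at least two distinct functions; pick such a cluster, call it $A$, and call the other $B$.

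The heart of the proof is the claim that, as long as the configuration remains unfavorable, $A$ stays in place and the multiset of target functions assigned to robots at $A$ is unchanged. A direct case analysis establishes this. In an unfavorable configuration, a robot at $A$ using $sgat_i$ moves in global coordinates to the point $(i+1)A - iB$, and symmetrically a robot at $B$ using $sgat_j$ moves to $(j+1)B - jA$; together with $A$ and $B$ themselves, these are eight pairwise distinct collinear points. Since unfavorability of $P_1$ requires exactly two occupied positions each of multiplicity $h$, and since producing multiplicity $h$ at a brand-new destination requires absorbing all $h$ robots of some cluster (hence activating them all with one common function), the only one-step transitions preserving unfavorability are: (a) both clusters fully activated, each using a single common function; (b) $A$ fully activated with one common function while $B$ is idle; (c) the mirror of (b); (d) no activation. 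Because $A$ hosts at least two distinct functions, (a) and (b) are impossible, so only (c) and (d) can occur, and both leave $A$ untouched. By fairness some step $t$ eventually activates an $A$-robot, and at that step none of (a)--(d) is available, so $P_{t+1}$ must be favorable, at which point the favorable-case argument from the previous paragraph finishes the job. The main obstacle is getting the enumeration of unfavorable-preserving transitions right; it reduces to the pairwise distinctness of the eight positions plus the total-mass constraint $2h$, after which the rest of the proof is essentially mechanical.
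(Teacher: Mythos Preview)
Your proposal is correct and follows essentially the same approach as the paper: both arguments reduce to showing that an unfavorable configuration must become favorable, by observing that one of the two clusters carries robots with at least two distinct target functions and that activating any robot in such a cluster forces $|\overline{P_{t+1}}|\geq 3$. Your organization is slightly different---you identify the heterogeneous cluster $A$ up front via pigeonhole and then run the invariant ``$A$ stays put while unfavorable,'' whereas the paper first lets the scheduler reveal the homogeneous cluster through the initial activation and only then turns to the other side---but the core mechanism and the use of fairness are the same, and your explicit enumeration (a)--(d) of unfavorability-preserving transitions makes the argument, if anything, a bit more transparent than the paper's terser version.
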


\begin{proof}
By Lemma~\ref{L1020},
it suffices to show that SGTA is a symmetric gathering algorithm. 

It is easy to observe that SGTA is indeed symmetric,
since $2gat$ is symmetric.
By the proof of Theorem~\ref{T1000},
it suffices to show that any execution ${\cal E}: P_0, P_1, \ldots$ 
starting from any unfavorable configuration
eventually reaches a favorable configuration $P_t$.

Without loss of generality,
let us assume that $\overline{P_0} = \{ (0,0), (1,0) \}$,
$\mu_0((0,0)) = \mu_0((1,0)) = h$ for some $h \geq 2$, 
and then $P_t$ is unfavorable for any $t \geq 0$.
Let $R(\bm{p})$ be the set of robots at point $\bm{p}$ at $t = 0$.
Without loss of generality,
suppose that a robot in $R((0,0))$ is activated at $t = 0$.
Then all robots in $R(0,0)$ are activated simultaneously at $t = 0$,
and they all take the same target function, say $sgat_1$,
since, otherwise, $|\overline{P_1}| \geq 3$ holds,
and $P_1$ becomes favorable.
Now $P_1$ is still unfavorable.

By the fairness of the scheduler,
there is a time $t$ such that a robot in $R((1,0))$ is activated.
Then all robots in $R((1,0))$ are activated simultaneously at $t$,
again, because otherwise, $|\overline{P_{t+1}}| \geq 3$ holds.
Observe that there are robots taking $sgat_2$ and $sgat_3$ in $R((1,0))$,
and hence $P_{t+1}$ is favorable since $|\overline{P_{t+1}}| \geq 3$.

\end{proof}

\section{Pattern formation problem}
\label{Spatternformation}

Given a goal pattern $G \in (R^2)^n$ in $Z_0$,
the {\em pattern formation problem} (PF) for $G$ is the problem of 
transforming any initial configuration $I$ into a configuration similar to $G$.
The GAT is the PF for a goal pattern
$G = \{ (0,0), (0,0), \ldots , (0,0) \} \in (R^2)^n$,
and the SCT is reducible to the PF for a right $n$-gon.

\begin{theorem}[\cite{YS10}]
\label{T3010} 
The \mbox{\rm PF} for a goal pattern $G$ is solvable 
(by an algorithm of size $1$) from an initial configuration 
$I$ such that $|I| = |\overline{I}|$, 
if and only if $\sigma(G)$ is divisible by $\sigma(I)$.
The only exception is the \mbox{\rm GAT} for two robots.
\end{theorem}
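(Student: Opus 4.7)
The plan splits into necessity (impossibility when $\sigma(G)$ is not a multiple of $\sigma(I)$) and sufficiency (an algorithm of size 1 when it is), plus a separate treatment of the two-robot gathering exception.

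For the necessity direction, I would argue by an adversary-style symmetry preservation argument. Let $g \in G_I$ be any rotation in the rotation group of $I$ (about $\bm{o}_I$). Two robots $r_i,r_j$ whose positions are related by $g$ see configurations in their local systems $Z_i,Z_j$ that are multisets equal up to a rotation of $Z_i$ onto $Z_j$; since a target function depends only on the multiset identified in Look, the computed destinations in $Z_0$ are related by $g$. Now let the adversary (under $\cal SSYNC$, fully usable as $\cal FSYNC$) activate every orbit-mate of $G_I$ simultaneously. Choose local coordinate systems so that $\gamma_j = g \circ \gamma_i$ whenever $g(\bm{x}_0(r_i)) = \bm{x}_0(r_j)$. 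A straightforward induction on $t$ then shows that $P_t$ is invariant under $g$ for all $t$, so $G_I \subseteq G_{P_t}$, i.e. $k_I \mid k_{P_t}$, and $\mu_{P_t}(\bm{o}_I) \equiv \mu_I(\bm{o}_I) \pmod{k_I}$ so $\gcd(k_{P_t},\mu_{P_t}(\bm{o}_{P_t}))$ stays a multiple of $\sigma(I)$. Hence every reachable configuration $P$ satisfies $\sigma(I)\mid \sigma(P)$, and the PF for $G$ cannot be solved when $\sigma(I)\nmid \sigma(G)$.

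For sufficiency, I would construct a single target function $\phi$ in two phases. Phase~A drives the configuration to a canonical ``pre-pattern'' $P^{*}$ with $\sigma(P^{*})=\sigma(G)$: using the assumption $|I|=|\overline{I}|$ and the machinery of $\Xi_{\bm q}$, $V_P(\bm q)$ and the total order $\succ_P$ on orbits of $\Gamma_P$, a unique orbit of $G_{P}$ can be selected as a ``reference''; if $\sigma(I)<\sigma(G)$, a designated orbit-representative robot makes a small orbit-respecting move that merges its orbit into the center or rearranges the outermost points of $C_P$ so as to raise the rotational order, never dropping $\sigma$ below $\sigma(I)$. Phase~B, once $\sigma(P)=\sigma(G)$, fixes a bijection between orbits of $P$ and orbits of $G$ consistent with $\succ_P$ and moves each robot toward its image; because both configurations share the same $k$-fold symmetry, the assignment is consistent across orbit-mates, and each step strictly decreases a well-founded potential (e.g.\ lexicographic distance between the sorted orbit representatives). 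All computations rely only on quantities that Section~2 has shown to be computable from $Q^{(i)}$, namely $\bm{o}_P$, $k_P$, $\Gamma_P$, $View_P$, and $\succ_P$.

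The two-robot gathering exception is the reason one must exclude it: for $n=2$ in general position, $\sigma(I)=2$ divides $\sigma(G)=\gcd(0,2)=2$, so the divisibility test is satisfied, yet the adversary argument above, applied with the unique nontrivial $g$ (the point-reflection interchanging the robots), forces $P_t$ to remain centrally symmetric forever, so the two robots never coincide. This shows the exception is unavoidable and separates it from the general sufficiency proof.

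The main obstacle I expect is Phase~A of the sufficiency construction: increasing $\sigma$ from $\sigma(I)$ up to $\sigma(G)$ in a way that is simultaneously \emph{self-stabilizing}, respects the $\cal SSYNC$ scheduler (so partial activations of an orbit do not destroy the invariant that $\sigma$ only moves toward a divisor of $\sigma(G)$), and avoids spurious ``fixed points'' at which $\sigma(P)\neq \sigma(G)$ but no robot has an incentive to move. Handling this likely requires a careful case analysis based on whether $\bm{o}_P$ is occupied, whether $P$ is linear, and whether $k_P=1$ already, each case being resolved by a bespoke rule in $\phi$; the verification that no adversarial activation sequence cycles is the delicate part.
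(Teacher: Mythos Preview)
This theorem is not proved in the present paper; it is quoted from \cite{YS10} and used as a black box (its only role is to motivate Lemma~\ref{L3010} and the construction of PFA). There is therefore no proof here to compare your proposal against.

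On the merits of your sketch: the necessity direction is the standard symmetry-preservation argument and is correct in outline, though your claim that ``$\mu_{P_t}(\bm{o}_I)\equiv\mu_I(\bm{o}_I)\pmod{k_I}$'' needs a line of justification (robots enter or leave $\bm{o}_I$ only in full $G_I$-orbits under your adversarial coordinate choice). The sufficiency direction, however, has a real gap. Your Phase~A proposes to \emph{raise} $\sigma(P)$ from $\sigma(I)$ up to $\sigma(G)$, but this runs counter to how the result is actually obtained in \cite{YS10}, and it is unclear it can be done at all: increasing rotational symmetry requires robots that are \emph{not} yet orbit-mates to move into a common orbit, and under $\cal SSYNC$ a partial activation of such a group typically destroys whatever symmetry already exists rather than creating more. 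The construction in \cite{YS10} never manufactures new symmetry; it uses the divisibility $\sigma(I)\mid\sigma(G)$ to embed each $G_I$-orbit of the current configuration consistently into the $G_G$-orbit structure of the target (robots in the same orbit share the same $View_P$, so the embedding is computable identically by all of them). Your own diagnosis that Phase~A is ``the main obstacle'' is correct, but the obstacle is structural, not a matter of case analysis.

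Your two-robot argument is also too quick: central symmetry alone does not rule out both robots moving to the midpoint simultaneously and gathering. The actual impossibility (see \cite{SY99,YS10}) uses the $\cal SSYNC$ adversary's freedom to activate only one of the two robots precisely when the target function would otherwise succeed.
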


Thus a pattern $G$ is not formable from a configuration $I$ by an algorithm of size 1,
if $\sigma(G)$ is not divisible by $\sigma(I)$.
In the following, we investigate an algorithm that solves
the PF from any initial configuration $I$, for any $G$.

\begin{lemma}
\label{L3010} 
The MAS for the \mbox{\rm PF} is at least $n$.
\end{lemma}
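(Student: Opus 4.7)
The plan is to mimic the pigeonhole argument used in part (I) of the proof of Theorem~\ref{T0010}, but with the target configuration being a pattern in which all $n$ points are distinct. Since the \mbox{\rm PF} is parametrized by the goal pattern $G$, it suffices to exhibit a single $G$ for which no algorithm of size $m<n$ can succeed. I would pick $G \in (R^2)^n$ satisfying $|\overline{G}|=n$ (say, $n$ points in general position, with $\sigma(G)=1$), so that any configuration $P$ similar to $G$ must itself satisfy $|\overline{P}|=n$; this is the feature that the pigeonhole will violate.

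Assume for contradiction that $\Phi=\{\phi_1,\phi_2,\ldots,\phi_m\}$ with $m<n$ is an algorithm solving the \mbox{\rm PF} for this $G$. First I would set up the adversarial scenario exactly as in Theorem~\ref{T0010}(I): all robots share the same unit length and compass (so that every $Z_i$ coincides, and each $\gamma_i$ is a pure translation); the initial configuration is $P_0=\{(0,0),(0,0),\ldots,(0,0)\}$; the assignment is ${\cal A}(r_i)=\phi_i$ for $1\le i\le m-1$ and ${\cal A}(r_i)=\phi_m$ for $m\le i\le n$; and the scheduler is ${\cal FSYNC}$ (which is a legal ${\cal SSYNC}$ schedule, so any correct algorithm must still work on it).

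Next I would prove by induction on $t$ that $\bm{x}_t(r_m)=\bm{x}_t(r_{m+1})=\cdots=\bm{x}_t(r_n)$ for every $t\ge 0$. The base case is immediate. For the induction step, since all these robots share the same local coordinate system and reside at a common point $\bm{x}_t(r_m)$ at time $t$, they observe identical multisets $Q^{(i)}_t$ in their local frames; applying the common target function $\phi_m$ they compute the same displacement, and since $\gamma_i$ is simply a translation by $\bm{x}_t(r_i)$, they all land at the same global point at time $t+1$. Consequently $|\overline{P_t}|\le m<n$ for every $t$. Since any configuration similar to $G$ must have exactly $n$ distinct points, $P_t$ is never similar to $G$, contradicting the assumption that $\Phi$ solves the \mbox{\rm PF} for $G$.

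I do not expect any real obstacle: the argument is a straightforward adaptation of the scattering lower bound, and the only mild subtlety is choosing $G$ so that $|\overline{G}|=n$ (equivalently, so that forming $G$ forces the robots to occupy $n$ distinct positions), which is a free parameter of the problem. The proof therefore reduces to verifying that similarity preserves the cardinality of the support of a multiset, which is immediate from the definition given in the paper's footnote.
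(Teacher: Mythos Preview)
Your proof is correct and takes essentially the same approach as the paper: both pick a pattern $G$ with $n$ distinct points and rely on the pigeonhole argument of Theorem~\ref{T0010}(I). The paper packages this more tersely as a reduction---choosing $G$ to be a regular $n$-gon and observing that any algorithm solving the \mbox{\rm PF} for this $G$ a fortiori solves the \mbox{\rm SCT}, whose MAS is $n$ by Theorem~\ref{T0010}---whereas you unpack the reduction and re-run the adversarial construction directly.
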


\begin{proof}
Let $G$ be a right $n$-gon.
If there was a pattern formation algorithm for $G$ with size less $m < n$,
the SCT could be solved by an algorithm of size $m < n$,
which contradicts to Theorem~\ref{T0010}.
\end{proof}

\medskip

A scattering algorithm $n$SCTA transforms any initial configuration $I$
into a configuration $P$ satisfying $P = \overline{P}$.
We can modify $n$SCTA so that the resulting algorithm $n$SCTA$^*$
can transform any initial configuration $I$ into a configuration $P$ 
satisfying ($P = \overline{P}$ and) $\sigma(P) = 1$.
On the other hand, 
there is a pattern formation algorithm (of size 1) for $G$ 
which transforms any initial configuration $P$ satisfying 
($P = \overline{P}$ and) $\sigma(P) = 1$
into a configuration similar to a goal pattern $G$ (see e.g., \cite{YS10}).
The pattern formation problem is thus solvable 
by executing $n$SCTA$^*$ as the first phase, 
and then such a pattern formation algorithm as the second phase,
if we can modify these algorithms so that the robots can 
consistently recognize which phase they are working.
We describe an algorithm PFA = $\{ pf_1, pf_2, \ldots , pf_n \}$ 
(for a swarm of $n$ robots) which solves the PF for a given goal pattern $G$, 
based on this approach.
Since the cases of $n \leq 3$ are trivial,
we assume $n \geq 4$.

We say a configuration $P$ is {\em good},
if $P$ satisfies either one of the following conditions (1) and (2):

\begin{description}
\item[(1)]
$P = \overline{P}$, i.e., $P$ is a set,
and can be partitioned into two subsets $P_1$ and $P_2$
satisfying all of the following conditions:
\begin{description}
\item[(1a)] 
$P_1 = \{ \bm{p}_1 \}$ for some $\bm{p}_1 \in P$.
\item[(1b)]
$dist(\bm{p}_1, \bm{o}_2) \geq 10 \delta_2$,
where $\bm{o}_2$ and $\delta_2$ are respectively the center and the radius of
the smallest enclosing circle $C_2$ of $P \setminus \{ \bm{p}_1 \}$.
\end{description}

\item[(2)]
The smallest enclosing circle $C$ of $P$ contains exactly two points 
$\bm{p}_1, \bm{p}_3 \in P$, 
i.e., 
$\overline{\bm{p}_1 \bm{p}_3}$ 
forms a diameter of $C$.
Consider a (right-handed) $x$-$y$ coordinate system $Z$ satisfying
$\bm{p}_1 = (0,0)$ and $\bm{p}_3 = (31,0)$.\footnote{
Note that $Z$ is uniquely determined,
and the unit distance of $Z$ is $dist(\bm{p}_1,\bm{p}_3)/31$.}
For $i = 1,2,3$, 
let $C_i$ be the unit circle with center $\bm{o}_i$ (and radius 1 in $Z$),
where $\bm{o}_1 = (0,0)$, $\bm{o}_2 = (10,0)$, and $\bm{o}_3 = (30,0)$.
Let $P_i \subseteq P$ 
be the multiset of points included in $C_i$ 
for $i = 1,2,3$.
Then $P$ is partitioned into three submultisets $P_1, P_2$, and $P_3$,
i.e., 
$P \setminus (P_1 \cup P_2 \cup P_3) = \emptyset$,
and $P_1$, $P_2$, and $P_3$ satisfy the following conditions:
\begin{description}
\item[(2a)]
$P_1 = \{ \bm{p}_1 \}$.
\item[(2b)]
$P_2$ is a set (not a multiset).
\item[(2c)]
$P_3$ is a multiset that includes 
$\bm{p}_3$ as a member.
It has a supermultiset $P^*$ which is similar to $G$,
and is contained in $C_3$, 
i.e., $P_3$ is similar to a submultiset $H \subseteq G$.
\end{description}

\end{description}

\begin{lemma}
\label{L3015}
Let $P$ be a good configuration.
Then $P$ satisfies exactly one of conditions (1) and (2),
and $\bm{p}_1$ is uniquely determined in each case.
\end{lemma}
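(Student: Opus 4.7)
The plan is to establish three things: (a) uniqueness of $\bm{p}_1$ under condition (1); (b) uniqueness of $\bm{p}_1$ under condition (2); and (c) mutual exclusivity. The assumption $n\geq 4$ (stated just before the definition of ``good'') is used throughout.

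For (a), I would assume two distinct $\bm{p}_1,\bm{p}_1'\in P$ both satisfy (1b), and derive contradictory bounds on $\delta_2'$. Pick some $\bm{q}\in P\setminus\{\bm{p}_1,\bm{p}_1'\}$, which exists since $|P|=n\geq 4$; it lies in both $B(\bm{o}_2,\delta_2)$ and $B(\bm{o}_2',\delta_2')$, so the triangle inequality at $\bm{q}$ combined with $\bm{p}_1'\in P\setminus\{\bm{p}_1\}$ and (1b) applied at $\bm{p}_1'$ yields $10\delta_2'\leq 2\delta_2+\delta_2'$, i.e., $\delta_2'\leq\tfrac{2}{9}\delta_2$. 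On the other hand $\bm{p}_1\in P\setminus\{\bm{p}_1'\}$ lies at distance $\geq 9\delta_2$ from $\bm{q}$, inflating the diameter of $\mathrm{SEC}(P\setminus\{\bm{p}_1'\})$ and giving $\delta_2'\geq\tfrac{9}{2}\delta_2$---contradiction.

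For (b), the unordered pair $\{\bm{p}_1,\bm{p}_3\}=P\cap\partial\mathrm{SEC}(P)$ is intrinsic to $P$; only its labeling is in question. Supposing both labelings $(\bm{p}_1,\bm{p}_3)$ and $(\bm{p}_3,\bm{p}_1)$ are valid, the six disk centers lie on the line through the two extreme points at abscissae $\{0,10,30\}$ for the first labeling and $\{31,21,1\}$ for the second, in the common scale $dist(\bm{p}_1,\bm{p}_3)/31$. A quick tabulation of the nine cross-pair distances shows that only $(C_1,C_3')$ and $(C_3,C_1')$ have centers within distance~$2$; and by (2a) applied under each labeling, membership of a $P$-point in $C_1$ or $C_1'$ pins it to $\bm{p}_1$ or $\bm{p}_3$ respectively. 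Hence $\overline{P}=\{\bm{p}_1,\bm{p}_3\}$ with both multiplicities equal to $1$, giving $|P|=2$ and contradicting $n\geq 4$.

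For (c), I would first note that in (1) the point $\bm{p}_1^{(1)}$ must lie on $\partial\mathrm{SEC}(P)$: otherwise $\mathrm{SEC}(P)=\mathrm{SEC}(P\setminus\{\bm{p}_1\})=B(\bm{o}_2,\delta_2)$ would contain $\bm{p}_1$, contradicting (1b). Under (2), $\partial\mathrm{SEC}(P)\cap P=\{\bm{p}_1^{(2)},\bm{p}_3^{(2)}\}$, so $\bm{p}_1^{(1)}\in\{\bm{p}_1^{(2)},\bm{p}_3^{(2)}\}$. I would then leverage the partition in (2): $P\setminus\{\bm{p}_1^{(1)}\}$ spreads across the two far-apart unit disks $C_2$ and $C_3$, forcing $\mathrm{SEC}(P\setminus\{\bm{p}_1^{(1)}\})$ to have its center near the $\bm{p}_1\bm{p}_3$-midpoint and radius of order $dist(\bm{p}_1,\bm{p}_3)$, which is incompatible with the factor-$10$ gap of (1b). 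The main obstacle is the degenerate subcase where $P_2$ (from (2)) is empty and $P_3$ is tightly packed inside the single unit disk $C_3$: the SEC of $P\setminus\{\bm{p}_1^{(1)}\}$ then collapses to radius at most $dist(\bm{p}_1,\bm{p}_3)/31$ and (1b) is nearly achievable, so I would invoke (2c) together with $|P_3|=n-1\geq 3$ to force $P_3$ (as a similarity copy of a size-$(n-1)$ sub-multiset of $G$ inside $C_3$) to have a positive lower bound on its SEC radius sufficient to defeat (1b).
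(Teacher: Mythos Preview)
Your arguments for (a) and (b) are correct, and in fact more carefully worked out than the paper's treatment of those sub-claims. The non-degenerate branch of (c) is also fine.

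The genuine gap is in your handling of the degenerate subcase of (c), where $P_2^{(2)}=\emptyset$ and $\bm{p}_1^{(1)}=\bm{p}_1^{(2)}$. You propose to ``invoke (2c) together with $|P_3|=n-1\geq 3$ to force $P_3$ to have a positive lower bound on its SEC radius sufficient to defeat (1b).'' This cannot work: in the $Z$-coordinates of condition (2) one has $P_3\subseteq C_3$, a disk of radius~$1$, so the SEC radius $\delta_2$ of $P_3=P\setminus\{\bm{p}_1\}$ is at most~$1$, whence $10\delta_2\le 10$; meanwhile the SEC center $\bm{o}_2$ lies in $C_3$ and thus $dist(\bm{p}_1,\bm{o}_2)\ge 29$. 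Condition (1b) is therefore \emph{satisfied}, not defeated, regardless of any lower bound you extract from (2c). Concretely, for $n=4$ the configuration $P=\{(0,0),(31,0),(30,0.1),(30,-0.1)\}$ (with a suitable $G$) satisfies both (1) and (2) simultaneously. So the approach you sketch for the degenerate subcase is not just incomplete---it is aiming in the wrong direction.

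It is worth noting that the paper's own proof shares this gap: its final contradiction hinges on ``the center $\bm{o}$ of the smallest enclosing circle of $Q_2$,'' which is undefined when $Q_2=\emptyset$. The cleanest resolution is to read the phrase ``partitioned into three submultisets'' in condition (2) as requiring $P_2\neq\emptyset$ (the usual convention for a partition), in which case the degenerate subcase simply does not arise and both your argument and the paper's go through.
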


\begin{proof}
Consider any configuration $P$ that satisfies condition (1) for a partition $\{ P_1, P_2 \}$.
Suppose that $P$ also satisfies condition (2).
Then there is a partition $\{ Q_1, Q_2, Q_3 \}$ of $P$ such that
$Q_1$, $Q_2$, and $Q_3$ satisfy conditions (2a), (2b), and (2c), respectively.

Observe that $P_1 = Q_1$ and hence $P_2 = Q_2 \cup Q_3$.
Let $P_1 = \{ \bm{p}_1 \}$ and $Q_1 = \{ \bm{q}_1 \}$.
Suppose $\bm{p}_1 \not= \bm{q}_1$ to derive a contradiction.
Then $\bm{q}_1 \in P_2$,
which implies that every point in $P_2$ is within distance 
$2 \delta_2$ from $\bm{q}_1$.
Since $dist(\bm{p}_1,\bm{q}_1) \geq 9 \delta_2$,
$Q_3 = \{ \bm{p}_1 \}$, and hence $Q_2 = P \setminus \{ \bm{p}_1, \bm{q}_1 \}$.

Let $\bm{o}$ be the center of the smallest enclosing circle of $Q_2$.
Obviously $dist(\bm{q}_1, \bm{o}) \leq 2 \delta_2$.
In $Z$, $\bm{q}_1 = (0,0)$, $\bm{p}_1 = (31,0)$, and $\bm{o} = (10,0)$.
It is a contradiction; 
$dist(\bm{q}_1, \bm{o}) > 2 \delta_2$ holds,
since $dist(\bm{p}_1,\bm{q}_1) \geq 9 \delta_2$.

Thus $P_2$ can be partitioned into two multisets $Q_2$ and $Q_3$, 
and partition $\{ P_1, Q_2, Q_3 \}$ satisfies condition (2).
However, it is impossible by the following reason.
Consider again the center $\bm{o}$ of the smallest enclosing circle of $Q_2$.
By condition (1b), 
$dist(\bm{p}_1, \bm{o}) \geq 9 \delta_2$.
Since $P_2 = Q_2 \cup Q_3$,
$dist(\bm{p}_3, \bm{o}) \leq 2 \delta_2$, on the other hand.
It is a contradiction;
since $\bm{o}$ cannot be placed at $(10,0)$ in $Z$.

By a similar argument, 
if $P$ satisfies condition (2), then it does not satisfy condition (1).

It is also easy to show that $\bm{p}_1$ is unique by a similar argument.
\end{proof}

\medskip

We start with defining $n$SCTA$^*$ = $\{ sct^*_i: i = 1, 2, \ldots , n \}$,
which is a slight modification of $n$SCTA.

\medskip
\noindent
{\bf [Target function $sct^*_i$]}

\medskip
\noindent
(I) If $P$ is good: $sct^*_i(P) = (0,0)$ for $i = 1,2, \ldots , n$.

\medskip
\noindent
(II) If $P$ is not good:
\begin{enumerate}
\item 
For $i = 2, 3, \ldots, n$:

If $P \not=\overline{P}$, then $sct^*_i(P) = sct_i(P)$.

Else if $P =\overline{P}$, then $sct^*_i(P) = (0,0)$.
\item
For $i = 1$: 
\begin{description}
\item[(a)] 
If $P \not=\overline{P}$, then $sct^*_1(P) = sct_i(P)$.
\item[(b)]
If $P = \overline{P}$ and 
$dist((0,0), \bm{o}) < 10 \delta$,
then $sct^*_1(P) = \bm{p}$.
Here $\bm{o}$ and $\delta$ are, respectively, the center and the radius
of the smallest enclosing circle of $P \setminus \{ (0,0) \}$. 
If $\bm{o} \not= (0,0)$,
$\bm{p}$ is the point such that $(0,0) \in \overline{\bm{o}\bm{p}}$
and $dist(\bm{p},\bm{o}) = 10 \delta$.
If $\bm{o} = (0,0)$, $\bm{p} = (10 \delta, 0)$.
\item[(c)]
If $P = \overline{P}$ and $dist((0,0), \bm{o}) \geq 10 \delta$,
then $sct^*_1(P) = (0,0)$.
\end{description}
\end{enumerate}

\begin{lemma}
\label{L3020} 
{\rm $n$SCTA}$^*$ is an algorithm to transform any initial configuration $P_0$
to a good configuration $P$.
\end{lemma}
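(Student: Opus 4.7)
The plan is to divide the analysis according to whether $P_t = \overline{P_t}$ holds, and exploit the fact that $n$SCTA$^*$ agrees with $n$SCTA during the unscattered phase while reserving $r_1$'s behaviour for a final separation step. First, I would observe that whenever $P_t \neq \overline{P_t}$ and $P_t$ is not good, every target function $sct^*_i$ coincides with the original $sct_i$ of $n$SCTA (by case (II)(1) for $i \geq 2$ and case (II)(2)(a) for $i = 1$). Applying Theorem~\ref{T0010} with $c = n$, the execution must reach a configuration with $|\overline{P_t}| = n$, i.e., $P_t = \overline{P_t}$, after finitely many steps---unless the configuration becomes good earlier, in which case case (I) already halts every robot and we are done.

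Next, I would show that once $P_t = \overline{P_t}$ is reached, scatteredness is preserved and $r_1$ single-handedly drives the configuration to goodness. Robots $r_i$ with $i \geq 2$ output $(0,0)$ and therefore stand still, so $P_t \setminus \{r_1\}$ is frozen; in particular its smallest enclosing circle---with centre $\bm{o}$ and radius $\delta$ in $Z_1$---is invariant under further activations of the other robots. The key claim is that case (II)(2)(c) is vacuous on non-good scattered configurations: if $P_t = \overline{P_t}$ and $dist((0,0),\bm{o}) \geq 10\delta$ in $Z_1$, then taking $\bm{p}_1$ to be $r_1$'s own position and $P_2 = P_t \setminus \{\bm{p}_1\}$ already yields a partition meeting (1a) and (1b), so $P_t$ would be good. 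Hence on any non-good scattered configuration, case (II)(2)(b) is the one that fires, and by fairness of the $\cal SSYNC$ scheduler $r_1$ is eventually activated.

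When (II)(2)(b) fires, $r_1$ jumps to the point $\bm{p}$ on the ray from $\bm{o}$ through $(0,0)$ at distance exactly $10\delta$ from $\bm{o}$. In the resulting configuration $P_{t+1}$, the non-$r_1$ robots still form a set whose smallest enclosing circle has centre $\bm{o}$ and radius $\delta$, while $r_1$'s new position lies at distance $10\delta$ from $\bm{o}$, hence at distance $\geq 9\delta > 0$ from every other robot (note $\delta > 0$ because $n \geq 4$ and $P_t = \overline{P_t}$ ensure at least three distinct non-$r_1$ positions). Thus $P_{t+1} = \overline{P_{t+1}}$, and the partition $\{\{r_1\}, P_{t+1} \setminus \{r_1\}\}$ satisfies (1a) and (1b), so $P_{t+1}$ is good. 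From then on case (I) applies, every target function returns $(0,0)$, and the good configuration is a fixed point.

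The main obstacle is the ``case (II)(2)(c) is vacuous'' step: it requires unpacking the definition of condition (1) of good and identifying $r_1$'s current position as the (necessarily unique, by Lemma~\ref{L3015}) distinguished point $\bm{p}_1$. A minor technicality to check along the way is that $r_1$'s jump can never land on top of another robot, which reduces to the $9\delta$ separation above, and that the scattered property is never destroyed by $r_1$'s move, so the two phases do not interfere.
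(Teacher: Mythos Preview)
Your proof is correct and follows essentially the same route as the paper's: first invoke the scattering argument from Theorem~\ref{T0010} to reach a configuration with $P_t=\overline{P_t}$, then observe that only the robot taking $sct^*_1$ can move and that a single activation of it yields a configuration satisfying condition~(1). Your treatment is in fact more careful than the paper's terse version on two points---the possibility that the configuration becomes good during the scattering phase, and the explicit verification that case~(II)(2)(c) is vacuous---both of which the paper leaves implicit.
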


\begin{proof}
Consider any execution ${\cal E}: P_0, P_1, \ldots$ of $n$SCTA$^*$ from 
any initial configuration $P_0$, which is not good.
By the proof (II) of Theorem~\ref{T0010},
at some time $t$, $P_t = \overline{P_t}$ holds.
If $P_t$ is good, we have nothing to show.

Suppose that $P_t$ is not good.
Let $r_1$ be the robot taking $sct^*_1$ as the target function.
The only robot that can move is $r_1$,
and it computes $sct^*_1(P)$, where $P = \gamma^{-1}_1(P_t)$.
By the definition of $sct^*_1$,
$(P \setminus \{(0,0)\}) \cup \{ \bm{p} \}$ is good.

By definition, 
$\gamma_1((0,0)) = \bm{x}_t(r_1)$,
$P$ and $P_t$ are similar, 
and no other robots move at time $t$.
Thus $P_{t+1} = (P_t \setminus \{ \bm{x}_t(r_1) \}) \cup \{ \gamma_i(\bm{p}) \}$,
which is similar to $(P \setminus \{(0,0)\}) \cup \{ \bm{p} \}$,
although $\gamma_i(\bm{p})$ depends on $Z_1$ when $\bm{o}_2 = (0,0)$ in $Z_1$.
Thus $P_{t+1}$ is good.
\end{proof}

\medskip

Let us explain next how to construct a goal configuration similar to $G$
from a good configuration $P$.

\medskip
\noindent
(I) Suppose that $P$ satisfies condition (1) for a partition $\{P_1, P_2\}$,
where $P_1 = \{ \bm{p}_1 \}$.
If there is a point $\bm{q}$ such that $P_2 \cup \{ \bm{q} \}$ is similar to $G$,
then we move the robot at $\bm{p}_1$ to $\bm{q}$ to complete the formation.

Otherwise, 
let $\bm{p}_3$ be the point satisfying $\bm{o}_2 \in \overline{\bm{p}_1 \bm{p}_3}$
and $dist(\bm{o}_2, \bm{p}_3) = 21 \delta_2$,
where $\bm{o}_2$ and $\delta_2$ are, respectively, the center and the radius of
the smallest enclosing circle $C_2$ of $P_2$.
We choose a point $\bm{p}$ in $P_2$,
and move the robot at $\bm{p}$ to $\bm{p}_3$.
Note that the robot at $\bm{p}$ is uniquely determined,
since $P_2 = \overline{P_2}$.

Then $P$ is transformed into a configuration $P'$ which is good,
and satisfies condition (2) for partition 
$\{ P_1, P_2 \setminus \{ \bm{p} \}, \{ \bm{p}_3 \} \}$.

\medskip
\noindent
(II) Suppose that $P$ satisfies condition (2) for a partition $\{P_1, P_2, P_3\}$,
where $P_1 = \{ \bm{p}_1 \}$.
Like the above case,
we choose a point $\bm{p}$ in $P_2$, 
and move the robot 
at $\bm{p}$ to a point $\bm{q}$.
Here $\bm{q}$ must satisfy that
there is a superset $P^*$ of $P_3 \cup \{ \bm{q} \}$ 
which is contained in $C_3$, and is similar to $G$.

By repeating this transformation, 
a configuration $P$ satisfying condition (1) is
eventually reached, when $P_2$ becomes empty,
$P_3$ now is similar to a submultiset $H$ of $G$.

To carry out this process,
we need to specify (i) $\bm{p} \in P_2$ in such a way that all robots 
can consistently recognize it,
and (ii) $\bm{p}_3$ in (I) and $\bm{q}$ in (II).

We define a point $\bm{p} \in P_2$.
When $|P_2| = 1$, $\bm{p}$ is the unique element of $P_2$.
When $|P_2| \geq 2$, let $P_{12} = P_1 \cup P_2$.
Then $k_{P_{12}} = 1$ by the definition of $\bm{p}_1$.
Since $k_{P_{12}} = 1$, $\succ_{P_{12}}$ is a total order on $P_{12}$
(and hence on $P_2$), which all robots in $P$ (in particular, in $P_2$) can compute.
Let $\bm{p} \in P_2$ be the largest point in $P_2$ with respect to $\succ_{P_{12}}$.
Since $P_2$ is a set, the robot $r$ at $\bm{p}$ is uniquely determined,
and $r$ (or its target function) knows that it is the robot to move to $\bm{p}_3$ or $\bm{q}$.

We define the target points $\bm{p}_3$ and $\bm{q}$.
It is worth emphasizing that $r$ can choose the target point by itself, 
and the point is not necessary to share by all robots.

Point $\bm{p}_3$ is uniquely determined.
To determine $\bm{q}$,
note that $P_3$ has a supermultiset $P^*$ which is similar to $G$,
and is contained in $C_3$.
Thus $r$ arbitrarily chooses such a multiset $P^*$,
and takes any point in $P^* \setminus P_3$ as $\bm{q}$.
(There may be many candidates for $P^*$.
Robot $r$ can choose any one, 
e.g., the smallest one in terms of $\sqsupset$ in its $x$-$y$ local coordinate system.)

Using points $\bm{p}$, $\bm{p}_3$, and $\bm{q}$ defined above,
we describe a pattern formation algorithm PFA = $\{ pf_1, pf_2, \ldots , pf_n \}$
for a goal pattern $G$,
where target functions $pf_i (i = 1, 2, \ldots, n)$ are defined as follows:

\medskip
\noindent
{\bf [Target function $pf_i$]}

\begin{enumerate}
 \item 
When $P$ is not good:
$pf_i(P) = sct^*_i(P)$.

\item
When $P$ is a good configuration satisfying condition (1):
\begin{description}
\item[(2a)] 
Suppose that there is a $\bm{q}$ such that $P_2 \cup \{ \bm{q} \}$ is similar to $G$.
Then $pf_i(P) = \bm{q}$ if $(0,0) \in P_1$;
otherwise, $pf_i(P) = (0,0)$.

\item[(2b)]
Suppose that there is no point $\bm{q}$ such that $P_2 \cup \{ \bm{q} \}$ is similar to $G$.
Then $pf_i(P) = \bm{p}_3$ if $(0,0)$ is the largest point in $P_2$ 
with respect to 
$\succ_{P_{12}}$;
otherwise, $pf_i(P) = (0,0)$.
\end{description}

\item
When $P$ is a good configuration satisfying condition (2):
$pf_i(P) = \bm{q}$ if $(0,0)$ is the largest point in $P_2$
with respect to $\succ_{P_{12}}$;
otherwise, $pf_i(P) = (0,0)$.
\end{enumerate}

\begin{theorem}
\label{T3020}
The MAS for the \mbox{\rm PF} is $n$.
\end{theorem}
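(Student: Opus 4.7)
The lower bound MAS $\geq n$ is already established by Lemma~\ref{L3010}, so the task is to verify that the size-$n$ algorithm PFA $=\{pf_1,\ldots,pf_n\}$ solves the PF for any goal pattern $G$ from any initial configuration $I$, which yields MAS $\leq n$. I would structure the argument as a two-phase correctness proof matching the definition of $pf_i$.

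For the scattering phase, I would use the fact that whenever the current configuration $P$ is not good, each $pf_i(P)$ coincides with $sct^{*}_i(P)$, so Lemma~\ref{L3020} guarantees that in finitely many rounds the execution reaches a good configuration $P$. Lemma~\ref{L3015} asserts that exactly one of conditions (1) and (2) holds and that the distinguished point $\bm{p}_1$ is unique, which is essential because it lets every robot consistently determine the case distinction and the identity of the designated mover inside $pf_i$.

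For the pattern formation phase, I would argue by a progress invariant. The key observations to verify are: (a) in a good configuration at most one robot is instructed to move (the unique one at the $\succ_{P_{12}}$-largest point of $P_2$, or the one at $\bm{p}_1$ in condition~(1) sub-case~(2a)), so the $\cal SSYNC$ activation cannot cause conflicting moves; (b) each such move preserves goodness. Specifically, in condition~(2) the chosen target $\bm{q}\in P^{*}\setminus P_3\subseteq C_3$ lies in a circle disjoint from $C_2$, so the new partition $\{\bm{p}_1\},P_2\setminus\{\bm{p}\},P_3\cup\{\bm{q}\}$ still witnesses condition~(2) while strictly shrinking $|P_2|$; in condition~(1) sub-case~(2b) the placement of the new point at $\bm{p}_3$ with $dist(\bm{o}_2,\bm{p}_3)=21\delta_2$ on the far side of $\bm{p}_1$ produces a configuration in which $\bm{p}_1$ and $\bm{p}_3$ form a diameter of the smallest enclosing circle, yielding condition~(2); and in condition~(1) sub-case~(2a) the robot at $\bm{p}_1$ moves to the unique completing vertex, producing a configuration similar to $G$. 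Together with fairness of the $\cal SSYNC$ scheduler, these invariants imply that the execution reaches a configuration similar to $G$ in finitely many activations.

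The main technical obstacle I expect is the detailed verification in item~(b), especially the transition from sub-case~(2b) into case~(2) and the claim that the final configuration is a fixed point of PFA. For the former I would have to show carefully, in the $Z$-coordinate system of the new configuration, that $\bm{p}_1$ and the newly placed $\bm{p}_3$ are antipodal on the smallest enclosing circle and that the remnants $P_2\setminus\{\bm{p}\}$ and the singleton $\{\bm{p}_3\}$ fit inside $C_2$ and $C_3$ respectively; this relies delicately on the $10\delta_2$/$21\delta_2$ choices built into $n$SCTA$^{*}$ and into the definition of $\bm{p}_3$. For the latter, once $P\sim G$ I would argue that for every robot the triggering condition inside $pf_i$ fails (no robot is singled out as the designated mover, or the completing vertex of $G$ is already occupied), so every $pf_i$ outputs $(0,0)$ and the target configuration persists, yielding the required self-stabilizing solution.
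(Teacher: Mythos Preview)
Your approach matches the paper's: Lemma~\ref{L3010} for the lower bound and the correctness of PFA (via Lemmas~\ref{L3015} and~\ref{L3020} plus the progress argument on good configurations) for the upper bound; the paper's own proof is even terser, simply deferring to ``the arguments above and the description of PFA.'' Your caution about the final configuration being a fixed point is warranted---the paper does not treat it explicitly either, and $pf_i$ as written has no clause for $P\sim G$---but this is a presentation gap shared with the source rather than a divergence in method.
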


\begin{proof}
By Lemma~\ref{L3010},
the MAS for the \mbox{\rm PF} is at least $n$.

On the other hand,
PFA is an algorithm of size $n$ for the PF,
by the arguments above and the description of PFA.
\end{proof}

\section{Fault tolerant scattering problems}
\label{Sfsct}

A fault means a crash fault in this paper.
The {\em f-fault tolerant $c$-scattering problem} ($f$F$c$S) is
the problem of transforming any initial configuration to
a configuration $P$ such that $|\overline{P}| \geq c$,
as long as at most $f$ robots have crashed.

\begin{observation}
\label{Ofs0010} 
\begin{enumerate}
 \item 
\mbox{\rm 1SCTA} solves the \mbox{\rm $f$F1S} for all $1 \leq f \leq n$,
since $|\overline{P}| \geq 1$ for any configuration $P$.
The MAS for the \mbox{\rm $f$F1S} is thus $1$ for all $1 \leq f \leq n$.

\item
The MAS for the \mbox{\rm $n$F$c$S} is $\infty$ for all $2 \leq c \leq n$,
since $|\overline{P_0}| = |\overline{P_t}| = 1$ holds for all $t \geq 0$,
if $|\overline{P_0}| = 1$, and all robots have crashed at time $0$.
\end{enumerate}
\end{observation}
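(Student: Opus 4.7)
The plan is to handle the two items separately; each reduces to a short argument directly from the definitions, so I do not anticipate any real technical obstacle, only the need to make the ``no algorithm works'' claim in item 2 genuinely quantify over all algorithms and not just over 1SCTA.

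For item 1, I would first recall that the goal condition of $f$F1S is that the reached configuration $P$ satisfies $|\overline{P}|\geq 1$. Since our swarm has $n\geq 1$ robots, every configuration $P$ that ever arises is nonempty, so $|\overline{P}|\geq 1$ holds vacuously, \emph{regardless of the assignment, the $\cal SSYNC$ schedule, or which (up to $f$) robots have crashed}. In particular, the initial configuration already satisfies the goal, so any algorithm solves $f$F1S; the singleton algorithm 1SCTA from the statement of Theorem~\ref{T0010} (the $c=1$ case) is one concrete witness. Since every algorithm has at least one target function, the MAS is $\geq 1$, and we have just shown it is $\leq 1$, so it equals $1$.

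For item 2, I would fix any algorithm $\Phi$ of any finite size (the adversary must be allowed to see $\Phi$ before choosing the initial configuration and fault pattern, but since the construction is uniform in $\Phi$ this is harmless) and construct a bad execution. Take the initial configuration $P_0$ in which all $n$ robots occupy the same point, so $|\overline{P_0}|=1$, and let the adversary declare all $n$ robots crashed at time $0$. By the definition of a crash fault, no crashed robot ever moves again, hence $\bm{x}_t(r_i)=\bm{x}_0(r_i)$ for every $r_i$ and every $t\geq 0$, giving $P_t=P_0$ and $|\overline{P_t}|=1<c$ for all $t$. Thus the goal condition of $n$F$c$S (which demands $|\overline{P_t}|\geq c\geq 2$) is never met on this execution, so $\Phi$ does not solve $n$F$c$S. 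Since $\Phi$ was arbitrary of any finite size, the $n$F$c$S is unsolvable, and its MAS is $\infty$ by definition.

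The only point requiring any care is to keep the quantifier order right in item 2: the adversary chooses $P_0$ (all robots coincident) and the fault pattern (all robots crashed at time $0$) independently of $\Phi$, so the same bad execution defeats every algorithm simultaneously. Everything else is bookkeeping against the definitions of $f$F$c$S and of MAS.
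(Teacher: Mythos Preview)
Your proposal is correct and follows exactly the same line as the paper: the observation carries its own justification in the two ``since'' clauses, and you have simply unpacked those clauses carefully (in particular making explicit that item~2 defeats every algorithm, not just 1SCTA). There is nothing to add.
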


\begin{lemma}
\label{Lfs0010} 
For all $1 \leq f \leq n-1$ and $2 \leq c \leq n$,
the \mbox{$f$F$c$S} is unsolvable, 
if $c+f-1 > n$.
\end{lemma}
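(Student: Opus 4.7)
The plan is to exhibit a single adversarial initial configuration together with a crash pattern under which no set of target functions, regardless of its size, can ever produce $c$ distinct occupied positions. I would not need any structural facts about target functions; only a counting argument on how many distinct positions are reachable when $f$ robots are frozen.

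First I would take the initial configuration $P_0$ in which all $n$ robots occupy a single point, say the origin $\bm{o}$, so $|\overline{P_0}| = 1$. The scheduler/adversary immediately crashes exactly $f$ of the $n$ robots at time $t=0$, before any of them performs a Move phase. By the crash-fault model these $f$ robots remain pinned at $\bm{o}$ in every subsequent configuration $P_t$, and we cannot distinguish them from the non-faulty robots. Only the remaining $n-f$ robots can change position.

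Next I would bound $|\overline{P_t}|$ for every $t \geq 0$. Among the $n-f$ non-faulty robots, the multiset of their positions contributes at most $n-f$ distinct points to $\overline{P_t}$. The $f$ crashed robots contribute exactly the single point $\bm{o}$. Therefore
\[
|\overline{P_t}| \;\leq\; (n-f) + 1 \;=\; n-f+1
\]
for every $t \geq 0$ and every choice of target function assignment. The hypothesis $c + f - 1 > n$ is equivalent to $c > n - f + 1$, hence $|\overline{P_t}| \leq n-f+1 < c$ for all $t$. Thus $|\overline{P_t}| \geq c$ never holds along this execution, which means no algorithm (of any size) solves $f$F$c$S, and the MAS is $\infty$.

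I do not anticipate a genuine obstacle: the argument is essentially a pigeonhole bound together with the observation that crashed robots can be stranded at a common point chosen by the adversary. The only points requiring mild care are (i) confirming that the fault model allows the adversary to crash up to $f$ robots at time $0$ before their first Move, and (ii) noting that this unsolvability holds even for the $\mathcal{FSYNC}$ scheduler and even when all robots share a common coordinate system, so the conclusion cannot be escaped by strengthening assumptions elsewhere.
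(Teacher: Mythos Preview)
Your proposal is correct and follows essentially the same approach as the paper: crash $f$ robots at a common point so that at most $n-f+1$ distinct positions are ever attainable, which is below $c$ when $c+f-1>n$. The paper's proof is a two-line version of exactly this counting argument (it only requires the $f$ faulty robots to share a point in $P_0$, while you place all $n$ robots there, but this makes no difference).
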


\begin{proof}
Suppose that $f (> 0)$ faulty robots have crashed at the same point in $P_0$.
For any configuration $P$ reachable from $P_0$,
$|\overline{P}| \leq n-f+1$.
Thus, if $c > n-f+1$, then the $f$F$c$S is unsolvable.
\end{proof}

\begin{lemma}
\label{Lfs0015}
\begin{enumerate}
 \item 
The \mbox{\rm $f$F2S} is unsolvable if $f \geq n-1$.
\item
If $1 \leq f \leq n-2$,
\mbox{\rm $(f+2)$SCTA} solves the \mbox{\rm $f$F2S}.
\item
The MAS for the \mbox{\rm $f$F2S} is $\infty$ if $f = n-1$;
otherwise, it is $f+2$.
\end{enumerate}
\end{lemma}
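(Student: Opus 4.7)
My plan is to establish the three parts of the lemma in order. For part~(1), the case $f = n$ is immediate from Observation~\ref{Ofs0010}(2). For the $f = n-1$ case, I will adapt the clone construction from Part~(I) of Theorem~\ref{T0010}'s proof: given any algorithm $\Phi$ of finite size $m \leq n$, place all robots at the origin sharing coordinates under the $\cal FSYNC$ scheduler, assign $\phi_i$ to $r_i$ for $1 \leq i \leq m-1$ and $\phi_m$ to $r_m, \ldots, r_n$, then crash $r_1, \ldots, r_{m-1}$ (consuming $m-1 \leq n-1 = f$ crashes). The remaining clones then evolve identically and always occupy a single point, while the crashed robots stay at the origin, so $|\overline{P_t}| \leq 2$ throughout. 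The main obstacle---and technical heart of the argument---is to actually force $|\overline{P_t}| = 1$: if $\phi_m$ keeps the clones at the origin then this is immediate, and otherwise a second adversarial step (such as choosing an alternative initial configuration, or scheduling a follow-up crash of additional clones so the sole remaining alive robot re-encounters an all-coincident view) should collapse the configuration back to a single point, contradicting the algorithm's correctness.

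For part~(2), I will show that $(f+2)$SCTA solves $f$F2S. By Theorem~\ref{T0010}, $(f+2)$SCTA transforms any initial configuration into one with $|\overline{P}| \geq f+2 \geq 2$, which satisfies the $f$F2S requirement. The presence of up to $f$ crashed robots does not invalidate this: crashed robots simply contribute fixed positions (which can only increase $|\overline{P}|$), and once $|\overline{P}| \geq f+2$ is reached all target functions output $(0,0)$ so the configuration stabilizes. The correctness argument from Part~(II) of Theorem~\ref{T0010}---especially the displacement bound by $\delta/4$ ensuring that distinct-position robots remain at distinct positions---applies verbatim to the at least $n-f \geq 2$ alive robots, so $|\overline{P_t}|$ strictly increases along the execution until the threshold $f+2$ is met.

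For part~(3), the upper bound $f+2$ for $1 \leq f \leq n-2$ is given by part~(2), and the $f = n-1$ case yields MAS $=\infty$ by part~(1). The matching lower bound---that no algorithm of size $m \leq f+1$ solves $f$F2S---will be proved by combining Theorem~\ref{T0010}'s clone construction with $m-1 \leq f$ adversarial crashes of the distinct-function robots, leaving only the $\phi_m$ clone group alive. The second main obstacle will be to establish the $+1$ gap between a naive lower bound of $f+1$ (where $|\overline{P}|$ could in principle reach $2$) and the true MAS of $f+2$; this likely requires exploiting a further obstruction specific to two-point scattering, for instance the adversary's ability to place the crashed robots exactly at the clones' destination or to schedule activations so that the alive clones never collectively reach a configuration distinct from the crashed ones.
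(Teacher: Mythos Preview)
Your lower-bound argument (for parts (1) and (3)) has a real gap that you yourself flag: after crashing $r_1,\dots,r_{m-1}$ and letting the $\phi_m$-clones move in lockstep, you obtain only $|\overline{P_t}|\le 2$, which is \emph{not} a contradiction for the $f$F2S. Your suggested repairs (``alternative initial configuration'', ``follow-up crash of additional clones'') do not work as stated: once the clone group has left the origin the configuration already has two distinct points, and nothing you do afterwards shows the algorithm failed. The missing idea, which the paper supplies, is a preliminary step carried out \emph{without any crashes}: if $\phi_i(\{(0,0),\dots,(0,0)\})\neq(0,0)$ for every $i$, then by choosing each robot $r_j$'s local coordinate system $Z_j$ so that $\gamma_j(\phi_i(\{(0,0),\dots,(0,0)\}))=(1,0)$ (where $\phi_i$ is $r_j$'s target function), the $\cal FSYNC$ execution from the all-coincident configuration keeps $|\overline{P_t}|=1$ forever, already contradicting that $\Phi$ solves the $f$F2S. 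Hence some $\phi_i$---say $\phi_m$---fixes the origin on the all-coincident input. Now assign $\phi_m$ to $r_m,\dots,r_n$, crash $r_1,\dots,r_{m-1}$ (this uses $m-1\le f$ crashes since $m\le f+1$), and the configuration is frozen at a single point. This one observation simultaneously yields the lower bound $f+2$ in part~(3) and, by taking $m=n$, the unsolvability for $f=n-1$ in part~(1); there is no separate ``$+1$ gap'' to bridge.

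For part~(2), your plan to show $|\overline{P}|\ge f+2$ is stronger than required and is not obviously correct in the presence of faults: with $f$ crashes only two of the $f+2$ target functions are guaranteed to have a non-faulty representative, so the claim that $|\overline{P_t}|$ ``strictly increases until the threshold $f+2$ is met'' needs an argument you have not given. The paper instead argues directly for $|\overline{P}|\ge 2$: starting from $|\overline{P_0}|=1$, among the at least $f+1$ robots taking $sct_2,\dots,sct_{f+2}$ at least one is non-faulty, and when it is first activated it moves to $(1,0)$ in its local frame while the (possibly faulty) $sct_1$-robot remains at the common point, giving $|\overline{P_{t+1}}|\ge 2$. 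Thereafter the $\delta/4$ displacement bound from Theorem~\ref{T0010} (which faulty robots satisfy trivially, since they do not move) keeps $|\overline{P}|\ge 2$.
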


\begin{proof}
(I) We first show that the MAS for the $f$F2S is at least $f+2$.
Proof is by contradiction.
Suppose that there is an algorithm $\Phi = \{ \phi_1, \phi_2, \ldots , \phi_m \}$
of size $m < f+2$ for the $f$F2S, to derive a contradiction.

Let $\overline{P_0} = \{ (0,0) \}$.
We first show that there is $i$ such that $\phi_i((0,0)) = (0,0)$.
To derive a contradiction,
suppose that 
$\phi_i((0,0)) = \bm{p}_i \not= (0,0)$ 
for $i = 1, 2, \ldots , m$.
Consider any robot $r_j$ and assume that its target function is $\phi_i$.
We choose the $x$-$y$ local coordinate system $Z_j$ of $r_j$ 
to satisfy $\gamma_j(\bm{p}_i) = (1,0)$.
Then under the $\cal FSYNC$ scheduler,
$|\overline{P_t}| = 1$ for all $t \geq 0$.
It is a contradiction.

Without loss of generality, we assume that $\phi_m((0,0)) = (0,0)$.
Consider the following assignment:
For all $i = 1, 2, \ldots , m-1$,
robot $r_i$ takes target function $\phi_i$,
and for all $j = m, m+1, \ldots , n$,
robot $r_j$ takes target function $\phi_m$.
We also assume that all robots $r_i (i = 1, 2, \ldots , m-1)$
have crashed at time $0$.
Note that $m-1 \leq f$, because $m < f+2$.

Then under the $\cal FSYNC$ scheduler,
$P_t = P_0$ for all $t \geq 0$.
It is a contradiction.
Thus the MAS for the $f$F2S is at least $f+2$.

Since the MAS for the $f$F2S is less than or equal to $n$,
the $(n-1)$F2S is unsolvable (since the MAS is at least $n+1$).

\medskip
\noindent
(II) We next show that $(f+2)$SCTA solves the $f$F2S, if $f \leq n-2$.

Proof is by contradiction.
Consider any initial configuration $P_0$ satisfying $|\overline{P_0}| = 1$.
We assume that $|\overline{P_t}| = 1$ for all $t \geq 0$,
to derive a contradiction.
There is a non-faulty robot $r$ that does not take $sct_1$,
and there is another robot $r'$ that takes $sct_1$ by definition.

Since the $\cal SSYNC$ scheduler is fair,
there is a time instant at which $r$ is activated for the first time.
Let $\overline{P_t} = \{ \bm{p} \}$.
Then $\bm{x}_{t+1}(r') = \bm{p}$ (regardless of whether it is faulty or not).
On the other hand, $\bm{x}_{t+1}(r) \not= \bm{p}$,
since $sct_i((0,0)) = (1,0)$ for all $i \not= 1$.
Thus $|\overline{P_{t+1}}| \geq 2$.
It is a contradiction.

Thus the MAS for the \mbox{\rm $f$F2S} is $f+2$, if $f \leq n-2$.
\end{proof}

\begin{lemma}
\label{Lfs0020}
Suppose that $1 \leq f \leq n-1$, $3 \leq c \leq n$, and $c+f-1 \leq n$.
Algorithm \mbox{\rm $(c+f-1)$SCTA} solves the \mbox{\rm $f$F$c$S}.
\end{lemma}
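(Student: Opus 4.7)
The plan is to mirror the proof of Theorem~\ref{T0010}, exploiting the presence of at most $f$ crashed robots as anchors that keep certain positions in $\overline{P_t}$ forever. Consider any execution $\mathcal{E}: P_0, P_1, \ldots$ of $(c+f-1)$SCTA under the $\mathcal{SSYNC}$ scheduler and suppose for contradiction that $|\overline{P_t}| < c$ for all $t$. Put $m = \max_t |\overline{P_t}| \leq c-1 < c+f-1$, and without loss of generality assume $|\overline{P_0}| = m$. In this range every non-faulty robot strictly wants to move whenever activated, and $|\overline{P_t}| \geq m$ is preserved throughout: each movement has magnitude at most $\delta/4$ whereas distinct points of $\overline{P_t}$ are at distance at least $\delta$, so no two points can merge and no moving robot can land on an already occupied point.

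First I would dispatch the case $m = 1$: all robots sit at a common point $\bm{p}$. Since the target function assignment is a surjection and at most $f$ target functions can be held exclusively by the at most $f$ faulty robots, at least $(c+f-1) - f = c-1 \geq 2$ distinct target functions are held by non-faulty robots, so some non-faulty $r$ takes $sct_i$ with $i \neq 1$. By fairness $r$ eventually activates and moves to $\gamma_r((1,0)) \neq \bm{p}$, while the faulty robots (which are also at $\bm{p}$) keep $\bm{p}$ in $\overline{P_{t+1}}$; hence $|\overline{P_{t+1}}| \geq 2$, contradicting $m = 1$.

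For the main case $2 \leq m \leq c-1$, the key observation is that $sct_a$ sends the acting robot a distance $\delta/(2(a+1)) < \delta$ from its current global position, so the destination is neither the robot's old location nor any other point of $\overline{P_t}$, i.e., it is brand new. Any activation of a non-faulty robot at a point $\bm{q}$ therefore contributes a new point to $\overline{P_{t+1}}$, and $|\overline{P_{t+1}}|$ can fail to strictly exceed $|\overline{P_t}|$ only if $\bm{q}$ itself vanishes. This immediately rules out (i) any $\bm{q}$ that contains a faulty robot (which anchors $\bm{q}$ forever), and (ii) any $\bm{q}$ that contains two non-faulty robots with distinct target functions, since activating one while the other remains keeps $\bm{q}$ occupied, while activating both simultaneously produces two distinct new points (their in-local targets lie at different distances from $\bm{q}$), so in either case $|\overline{P_{t+1}}| > |\overline{P_t}|$.

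The main obstacle is turning these local observations into a global pigeonhole contradiction. Assuming $m$ is maintained forever, every point of $\overline{P_t}$ is either purely-faulty (no non-faulty robot present) or purely-non-faulty with all non-faulty robots sharing a single target function. Since non-faulty robots realize at least $c-1$ distinct target functions, at least $c-1$ points of $\overline{P_t}$ must be purely-non-faulty; combined with $m \leq c-1$ this forces $m = c-1$ and leaves \emph{no} point available for faulty robots. But $f \geq 1$ guarantees at least one faulty robot exists, and its crash location lies in $\overline{P_t}$ — a contradiction. Hence $|\overline{P_t}| \geq c$ eventually, so $(c+f-1)$SCTA solves $f$F$c$S.
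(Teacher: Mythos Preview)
Your overall strategy---the same pigeonhole on non-faulty target functions versus crashed anchors that the paper uses---is correct, but two steps are misjustified as written.

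In the $m=1$ case you rely on ``the faulty robots \ldots\ keep $\bm{p}$ in $\overline{P_{t+1}}$''. Since $f$ is only an \emph{upper} bound on the number of crashes, an execution may have none, and then there is no faulty anchor. What actually keeps $\bm{p}$ occupied is the robot assigned $sct_1$, which exists by surjectivity of the assignment and satisfies $sct_1(P)=(0,0)$ when $|\overline{P}|=1$, so it stays at $\bm{p}$ whether activated or not. The paper sidesteps this by invoking Lemma~\ref{Lfs0015} for the $|\overline{P_0}|=1$ case rather than arguing directly.

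The same confusion recurs in your final step: ``$f\geq 1$ guarantees at least one faulty robot exists'' is false as stated---an execution with zero crashes is permitted. The correct justification is already implicit in your own count: once you force the non-faulty robots to realise \emph{exactly} $c-1$ of the $c+f-1$ target functions (any more would already contradict $m\leq c-1$), the remaining $f$ functions are, by surjectivity of the assignment, each held by some robot, and those robots must all be faulty; hence there are at least $f\geq 1$ faulty robots present. With these two repairs your argument is complete and essentially coincides with the paper's proof.
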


\begin{proof}
First we show that $(c+f-1)$SCTA solves the $f$F$c$S,
from any initial configuration $P_0$ satisfying $|\overline{P_0}| \geq 2$.

Let $P_t$ be a configuration at time $t$.
Suppose that $|\overline{P_t}| \geq 2$,
and a non-faulty robot $r_i$ is activated at $t$.
By the proof of Theorem~\ref{T0010},
if $r_i$ and $r_j$ have different target functions,
then $\bm{x}_{t+1}(r_i) \not= \bm{x}_{t+1}(r_j)$,
independently of their current positions $\bm{x}_t(r_i)$ and $\bm{x}_t(r_j)$,
and regardless of whether or not $r_j$ is activated.
Thus $|\overline{P_t}| \leq |\overline{P_{t+1}}|$.
To complete the proof,
we claim: If $|\overline{P_t}| < c$,
then $|\overline{P_t}| < |\overline{P_{t'}}|$ for some $t' > t$.

For any point $\bm{p} \in \overline{P_t}$,
let $R(\bm{p})$ be the set of robots which reside at $\bm{p}$.
If $R(\bm{p})$ includes two robots such that 
(1) both of them are non-faulty and have different target functions,
or (2) exactly one of them is faulty, then the claim holds.

Since $(c+f-1) - f = c-1$,
there are at least $c-1$ non-faulty robots whose target functions are distinct.
By the argument above,
there is no point $\bm{p} \in \overline{P_t}$
such that $R(\bm{p})$ includes two non-faulty robots 
having different target functions.
Since $|\overline{P_t}| < c$, 
$|\overline{P_t}| = c-1$,
and there is a non-faulty robot in each $R(\bm{p})$.
It is because, otherwise,
two non-faulty robots having different target functions reside at a point.
Since $f \geq 1$, there is a point $\bm{p}$ such that $R(\bm{p})$
includes both non-faulty and faulty robots,
and the claim hold.

Next we show that $(c+f-1)$SCTA solves the $f$F$c$S,
from any initial configuration $P_0$.
By Lemma~\ref{Lfs0015},
$(f+2)$SCTA solves the $f$F2S, if $f+2 \leq n$.
Since $c \geq 3$, 
$f+2 \leq c+f-1 \leq n$,
which implies that $(c+f-1)$SCTA solves the $f$F2S
by a similar argument to the proof (II) of Lemma~\ref{Lfs0015}.
Thus, $(c+f-1)$SCTA solves the $f$F$c$S from any initial configuration.
\end{proof}

\begin{theorem}
\label{Tfs0010} 
Suppose that $1 \leq f \leq n-1$ and $2 \leq c \leq n$.
\begin{enumerate}
 \item 
If $c =2$, the MAS for the \mbox{\rm $f$F2S} is $\infty$ if $f = n-1$;
otherwise, if $1 \leq f \leq n-2$, the MAS for the \mbox{\rm $f$F2S} is $f+2$.
Indeed, \mbox{\rm $(f+2)$SCTA} solves the \mbox{\rm $f$F2S}, if $1 \leq f \leq n-2$.

\item
If $3 \leq c \leq n$, the MAS for the \mbox{\rm $f$FcS}
is $\infty$ if $c+f-1 > n$;
otherwise, if $c+f-1 \leq n$, the MAS for the \mbox{\rm $f$F$c$S} is $c+f-1$.
Indeed, \mbox{\rm $(c+f-1)$SCTA} solves the \mbox{\rm $f$F$c$S},
if $c+f-1 \leq n$.
\end{enumerate}
\end{theorem}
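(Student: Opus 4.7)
The plan is to assemble Lemmas~\ref{Lfs0010}, \ref{Lfs0015}, and \ref{Lfs0020} and to supply the one missing ingredient: the lower bound MAS $\geq c+f-1$ in the regime $c \geq 3$, $c+f-1 \leq n$. Part~1 of the theorem ($c = 2$) is a direct restatement of the three items of Lemma~\ref{Lfs0015}: unsolvability when $f = n-1$, the upper bound via $(f+2)$SCTA for $1 \leq f \leq n-2$, and the matching lower bound $f+2$. For part~2 ($c \geq 3$), unsolvability when $c+f-1 > n$ is Lemma~\ref{Lfs0010}, and the upper bound that $(c+f-1)$SCTA solves $f$F$c$S when $c+f-1 \leq n$ is Lemma~\ref{Lfs0020}.

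To establish the remaining lower bound, I would argue by contradiction, in the spirit of part~(I) of the proof of Theorem~\ref{T0010}. Fix any $\Phi = \{\phi_1, \ldots, \phi_m\}$ with $m \leq c+f-2$ and construct the following adversarial scenario: every robot starts at $(0,0)$; every robot's local frame has the same unit length and the same positive $x$-axis direction; and the assignment sends $\phi_i$ to $r_i$ for $i = 1, \ldots, m$ while pooling $r_{m+1}, \ldots, r_n$ onto $\phi_1$. Because $c \geq 3$ forces $f+1 \leq c+f-2 = m$, it is legal to crash $r_2, r_3, \ldots, r_{f+1}$ at time $0$, costing exactly $f$ faults, after which we let the $\cal FSYNC$ scheduler run.

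After the crashes the surviving robots collectively carry only the $m - f \leq c - 2$ distinct target functions $\phi_1, \phi_{f+2}, \ldots, \phi_m$. A straightforward induction, mirroring the one in the proof of Theorem~\ref{T0010}, then shows that any two surviving robots assigned the same target function remain co-located throughout the execution: they start at a common point with a common local frame, so at every step their local views of $P_t$ coincide, they compute the same target, and they move in lockstep. Hence the surviving robots occupy at most $c-2$ distinct positions at every time $t$, and adding the single location $(0,0)$ held by the $f$ crashed robots yields $|\overline{P_t}| \leq c-1 < c$ for all $t \geq 0$, contradicting the hypothesis that $\Phi$ solves $f$F$c$S. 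Combined with Lemma~\ref{Lfs0020} this gives MAS $= c+f-1$.

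The main technical point I expect to be the bottleneck is verifying that the crashed robots, though they contribute points to $P_t$, do not disrupt the shared-view induction for the survivors. This is clean once observed: since all surviving $\phi_i$-robots sit at a common position and share the same affine frame, each of them applies the identical translation $\gamma^{-1}$ to the global configuration $P_t$ and therefore sees the same local multiset, regardless of where the crashed robots happen to lie. With this in hand, the reduction to Theorem~\ref{T0010}'s argument is routine, and splicing the case $c = 2$ (from Lemma~\ref{Lfs0015}) onto the case $c \geq 3$ concludes the proof.
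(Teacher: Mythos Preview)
Your proposal is correct and follows essentially the same approach as the paper: both reduce part~1 to Lemma~\ref{Lfs0015}, invoke Lemmas~\ref{Lfs0010} and~\ref{Lfs0020} for part~2, and supply the missing lower bound by placing all robots at the origin with identical local frames, crashing $f$ robots carrying pairwise distinct target functions, running $\cal FSYNC$, and counting that the survivors realize at most $c-2$ distinct positions plus the crash point. The only cosmetic differences are which robot indices are pooled and which are crashed, and a small slip where you write ``$c+f-2 = m$'' after fixing $m \leq c+f-2$; either take $m = c+f-2$ without loss of generality (as the paper does) or note that your count $|\overline{P_t}| \leq \max(1,m-f)+1 \leq c-1$ still holds for every $m \leq c+f-2$ when $c \geq 3$.
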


\begin{proof}
The case of $c=2$ holds by Lemma~\ref{Lfs0015}.

As for the case of $3 \leq c \leq n$,
by Lemmas~\ref{Lfs0010}  and \ref{Lfs0020},
it suffices to show that the $f$F$c$S is not solvable
by an algorithm of size less than $c+f-1$,
provided that $c+f-1 \leq n$ and $c \geq 3$.

Suppose that there is an algorithm of size $m < c+f-1$ for the $f$F$c$S, 
to derive a contradiction.
Without loss of generality, we may assume that $m = c+f-2 \geq f+1$.

The proof is almost the same as the proof (I) of Theorem~\ref{T0010}.
Let $\Phi = \{ \phi_1, \phi_2, \ldots , \phi_m \}$ be any algorithm 
for the $f$F$c$S.
Consider the following situation:
\begin{enumerate}
\item 
All robots $r_i$ ($1 \leq i \leq n$) share the unit length 
and the direction of positive $x$-axis.
\item
A target function assignment $\cal A$ is defined as follows:
${\cal A}(r_i) = \phi_i$ for $1 \leq i \leq m-1$,
and ${\cal A}(r_i) = \phi_m$ for $m \leq i \leq n$.
\item
All robots initially occupy the same location $(0,0)$,
i.e.,  $\overline{P_0} = \{ (0,0) \}$.
\item
All robots $r_i$ ($1 \leq i \leq f$) have crashed at time $0$.
\item
The scheduler is ${\cal FSYNC}$.
\end{enumerate}

Let ${\cal E}: P_0, P_1, \ldots$ be the execution of $\cal R$ starting from $P_0$,
under the above situation.
By an easy induction on $t$,
all faulty robots $r_i$ ($1 \leq i \leq f$) occupy $(0,0)$, for all $t \geq 0$,
and all robots $r_i$ ($f+1 \leq i \leq n$) with the same target function occupy 
the same location, for all $t \geq 0$. 
Thus, $|\overline{P_t}| \leq m-f+1 = (c+f-2)-f+1 = c-1 < c$,
which implies that $\Phi$ does not solve the $f$F$c$S, for $c\geq 3$.
\end{proof}

\section{Fault tolerant gathering problems}
\label{Sfgp}

\subsection{At most one robot crashes}
\label{SS1fg}

The {\em f-fault tolerant gathering problem} ($f$FG)
is the problem of gathering {\bf all non-faulty robots} at a point,
as long as at most $f$ robots have crashed.
The {\em f-fault tolerant gathering problem to $f$ points} ($f$FGP)
is the problem of gathering {\bf all robots} (including faulty ones)
at $f$ (or less) points, as long as at most $f$ robots have crashed.
We omit $f$ from the abbreviations to write FG and FGP when $f = 1$.

\begin{theorem}
\label{T4010} 
Provided that at most one robot crashes,
\mbox{\rm 2GATA} transforms any initial configuration 
into a configuration $P$ satisfying one of the following conditions:
\begin{enumerate}
 \item 
$|\overline{P}| = 1$,
\item
$n$ is even,
$\overline{P} = \{ \bm{p}_1, \bm{p}_2 \}$,
and $\mu_P(\bm{p}_1) = \mu_P(\bm{p}_2) = n/2$, or
\item
$\overline{P} = \{ \bm{p}_1, \bm{p}_2 \}$,
$\mu_P(\bm{p}_1) = n-1$, $\mu_P(\bm{p}_2) = 1$,
and the robot at $\bm{p}_2$ has crashed.
\end{enumerate}
\end{theorem}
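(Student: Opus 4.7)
The plan is to adapt the structural analysis of Theorem~\ref{T1000} by tracking the single crashed robot $r^*$ and its final position $\bm{c}^*$. The three conditions in the statement correspond to (1) the ordinary gathered outcome, (2) the bivalent trap of Theorem~\ref{T1000} which already arises with no faults, and (3) a new trap created when the non-faulty robots all accumulate at one point while $r^*$ remains isolated at $\bm{c}^*$. Accordingly, I would split the analysis into the regime in which symmetry eventually breaks and the regime in which it does not.

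First I would handle the case where $k_{P_t}=1$ at some time $t_0$. The inductive argument of Theorem~\ref{T1000}, which identifies the unique largest point $\bm{q}$ via $\succ_{P_t}$ and shows that $\mu_{P_t}(\bm{q})$ strictly increases whenever an activated non-faulty robot moves, still applies almost verbatim: the crashed robot contributes at most $1$ to one other multiplicity, which does not affect the strict inequality $\mu_{P_t}(\bm{q})>\mu_{P_t}(\bm{q}')$ in the induction hypothesis. By the fairness of the $\mathcal{SSYNC}$ scheduler, every non-faulty robot eventually moves to $\bm{q}$; if $\bm{c}^*=\bm{q}$ the execution ends in condition (1), and otherwise it ends in condition (3) with $\mu(\bm{q})=n-1$ and $\mu(\bm{c}^*)=1$.

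Next I would tackle the case where $k_{P_t}\ge 2$ for every $t$. Here the configuration has rotational (or, in the linear subcase, reflective) symmetry about $\bm{o}_{P_t}$, which forces $\bm{c}^*$ either to coincide with $\bm{o}_{P_t}$ or to lie on an orbit of size $k_{P_t}$ whose remaining points carry the same multiplicity. In the former case the dynamics are indistinguishable from the fault-free ones, because $r^*$ is already at the destination every non-faulty activated robot aims for, so Theorem~\ref{T1000} directly yields condition (1) or (2). In the latter case the adversary can preserve the symmetry only by activating matching numbers of non-faulty robots at each orbit point; as soon as the orbit is whittled down to a single robot per point, any further activation of a non-faulty orbit partner (which must occur by fairness) breaks the symmetry and drops $k$ to $1$, returning us to the first case.

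The main obstacle, I expect, is this persistent-symmetry subcase with $\bm{c}^*\neq\bm{o}_{P_t}$: ruling out that an adversarial $\mathcal{SSYNC}$ schedule can maintain $k_{P_t}\ge 2$ forever while $r^*$ sits off-center. Lemma~\ref{L1005} is the tool I would rely on here, since it says precisely that replacing a point of a non-linear symmetric set by the center produces an asymmetric set, which is the mechanism by which $r^*$'s immobility eventually destroys the orbit it belongs to. The linear subcase would be treated as in parts (I-A) and (I-B) of the proof of Theorem~\ref{T1000} via the equidistant-point argument, with additional bookkeeping of whether the crashed robot sits at an endpoint, an interior point, or the midpoint of the line segment.
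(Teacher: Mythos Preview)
Your plan is sound but follows a genuinely different route from the paper. The paper does \emph{not} re-run the case analysis of Theorem~\ref{T1000} with a tracked crashed robot. Instead, it introduces a potential triple $\lambda(P)=(k_P,m_P,-\mu_P)$ and views the fault-free transition relation of \mbox{2GATA} as a directed acyclic graph $DG$ on the configurations reachable from $P_0$; Theorem~\ref{T1000} guarantees $\lambda$ strictly decreases along every edge, so the only sinks have $\lambda\in\{(0,1,-n),(2,2,0)\}$. The single-crash argument then becomes purely combinatorial: one checks that every non-sink configuration with $\lambda\neq(1,2,-(n-1))$ has at least \emph{two} outgoing edges (because at least two robots can move productively), so deleting all edges that require a fixed robot $r$ to move still leaves at least one outgoing edge at such vertices. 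Hence the only possible new sinks in the pruned graph $DG'$ are those with $\lambda=(1,2,-(n-1))$, which is exactly condition (3).

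What each approach buys: the paper's DAG argument localises the entire effect of a crash to the single ``bottleneck'' shape $\lambda=(1,2,-(n-1))$, avoiding any discussion of orbits, linear subcases, or where the crashed robot sits relative to $\bm o_{P_t}$; it also generalises cleanly to $f$ faults in Lemma~\ref{L4030}. Your approach is more hands-on and in the $k_{P_t}=1$ regime is essentially what the paper later does in Lemma~\ref{L4020}; it gives more geometric insight but, as you anticipated, the persistent-symmetry subcase with $\bm c^*\neq\bm o_{P_t}$ requires real work (your ``matching numbers on each orbit point'' intuition is correct but needs the fairness-plus-depletion argument spelled out, and the linear case needs separate bookkeeping). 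Both routes reach the same three terminal shapes; the paper's is shorter because it never has to look at the symmetry at all once the out-degree-$\ge 2$ claim is established.
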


\begin{proof}
We associate a triple $\lambda(P) = (k_P, m_P, -\mu_P)$
with a configuration $P$ of $n$ robots,
where $m_P = |\overline{P}|$,
and $\mu_P$ is 
$\mu_P(\bm{o}_P)$ if $k_P \geq 2$;
otherwise, it is $\mu_P(\bm{p})$,
where $\bm{p}$ is the largest point in $\overline{P}$ with respect to $\succ_P$.
Let $<$ be the lexicographic order on {\bf Z}$^3$,
i.e., $(a,b,c) < (a',b',c')$, 
if and only if (1) $a < a'$, (2) $a = a'$ and $b < b'$, 
or (3) $a = a'$, $b = b',$ and $c < c'$ holds,
where {\bf Z} is the set of integers.

For any given initial configuration $P_0$,
let $V$ be the set of configurations reachable from $P_0$,
i.e., those which occur in an execution starting from $P_0$.
We draw a directed graph $DG = (V, \vdash)$,
which represents the transition relation defined by 2GATA starting from $P_0$,
i.e., for any two distinct configurations $P,Q \in V$,
$P \vdash Q$, if and only if $Q$ is directly reached from $P$ 
by activating some robots.

By the proof of Theorem~\ref{T1000},
if $P \vdash Q$, then $\lambda(P) > \lambda(Q)$,
and hence $DG$ is a directed acyclic graph with sinks $P_f$, 
where $\lambda(P_f)$ is either $(0,1,-n)$ or $(2,2,0)$.

Suppose that for any configuration $P$ 
there are (at least) two distinct configurations $Q$ and $Q'$ such that 
$P \vdash Q$ and $P \vdash Q'$.
Let $M(P,P')$ be the set of robots 
which move during the transition $P \vdash P'$.
Since $Q \not= Q'$, $M(P,Q) \not= M(P,Q')$.
We claim that even if a robot, say $r$, has crashed, 
there remains a transition $P \vdash Q''$ in $DG$.
That is, we show that, for any robot $r$,
there is a $Q''$ such that $P \vdash Q''$ and $r \not\in M(P,Q'')$.

If $M(P,Q) \not= \{ r \}$, then there is a robot $r' (\not= r) \in M(P,Q)$.
Let $Q''$ be the configuration reachable from $P$ by activating $r'$ alone.
Then $P \vdash Q''$ and $r \not\in M(P,Q'')$.
Otherwise, if $M(P,Q) = \{ r \}$, since $M(P,Q) \not= M(P,Q')$,
there is a robot $r' (\not= r) \in M(P,Q')$.
By the same reason as above, 
there is a $Q''$ such that $P \vdash Q''$ and $r \not\in M(P,Q'')$.

Consider an acyclic directed graph $DG'$ constructed from $DG$
by removing all transitions $P \vdash Q$ such that $r \in M(P,Q)$.
Since we assume that every configuration $P$ has at least two transitions in $DG$, 
there is a transition $P \vdash Q'$ in $DG$ 
such that $r \not\in M(P,Q)$ for all $P$, 
i.e., $DG'$ still has sinks $P_f$.

However, there exists a configuration $P$ having exactly one transition 
to another configuration.
Consider a configuration $P$ such that $\lambda(P) = (1,2,-n+1)$.
$P$ has exactly one transition;
if $\overline{P} = \{ \bm{p}_1, \bm{p_2} \}$,
$\mu_P(\bm{p}_1) = n-1$, and $\mu_P(\bm{p}_2) = 1$,
then the unique robot at $\bm{p}_2$ moves to $\bm{p}_1$,
and the other robot do not move even if they are activated.
Recall that a configuration $P$ such that $\lambda(P)$ is 
either $(0, 1, -n)$ or $(2, 2, 0)$ has no transition in $DG$.
Except for these configurations, 
i.e., if $\lambda(P) \not\in \{(1,2,-n+1), (0,1,-n), (2, 2, 0) \}$,
there are two transitions from $P$ in $DG$,
which is obvious from the definition of Steps~2 and 3 of $2gat$.  
Thus, $DG'$ has sinks $P'_f$ such that $\lambda(P'_f)$ is 
$(1, 2, -n+1)$, $(0,1,-n)$, or $(2,2,0)$.

If $P$ such that $\lambda(P) = (1,2,-n+1)$ has no transition in $DG'$,
then $P$ satisfies $\overline{P} = \{ \bm{p}_1, \bm{p}_2 \}$,
$\mu_P(\bm{p}_1) = n-1$, $\mu_P(\bm{p}_2) = 1$,
and the robot at $\bm{p}_2$ has crashed.
\end{proof}

\medskip

We have a corollary to Theorem~\ref{T4010}.

\begin{corollary}
\label{C4020}
If $n$ is odd,
\mbox{\rm 2GATA} is an algorithm for the \mbox{\rm FG}.
Thus the MAS for the \mbox{\rm FG} is 1, if $n$ is odd.
\end{corollary}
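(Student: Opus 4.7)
The plan is to read off the corollary directly from Theorem~\ref{T4010}, since the three possible terminal configurations listed there collapse, under the assumption that $n$ is odd, into exactly the configurations that solve FG.

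First I would invoke Theorem~\ref{T4010}: starting from any initial configuration, with at most one crash, every execution of 2GATA eventually reaches a configuration $P$ of one of the three stated forms. Next, I would eliminate the middle case: condition (2) requires $n$ to be even (it forces $\mu_P(\bm p_1) = \mu_P(\bm p_2) = n/2$), so under the hypothesis that $n$ is odd it cannot occur. This leaves only conditions (1) and (3).

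Then I would argue that each of the remaining cases constitutes a legal terminal configuration for the FG problem. In case (1), $|\overline{P}|=1$, so all $n$ robots (in particular all non-faulty ones) share a single point and the problem is solved. In case (3), $\mu_P(\bm p_2)=1$ and the unique robot at $\bm p_2$ has crashed, so all $n-1$ non-faulty robots are located at $\bm p_1$; again the non-faulty robots are gathered at a single point, which is precisely what FG requires. Therefore 2GATA, an algorithm of size $1$, solves FG whenever $n$ is odd.

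Finally, for the MAS claim, I would observe that the MAS of any solvable problem is at least $1$ by definition, and the construction above exhibits a size-$1$ algorithm solving FG for odd $n$, giving $\mathrm{MAS}(\mathrm{FG})=1$. There is no real obstacle here: the work has already been done in Theorem~\ref{T4010}, and the only thing to verify is the straightforward matching between its three terminal patterns and the FG success condition under the parity assumption on $n$.
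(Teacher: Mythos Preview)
Your proposal is correct and matches the paper's approach: the paper presents this as an immediate corollary of Theorem~\ref{T4010} with no explicit proof, and your argument spells out precisely the intended reasoning---ruling out case~(2) by parity and checking that cases~(1) and~(3) are goal configurations for the FG.
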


2GATA is not a gathering algorithm 
(and hence does not solve the FG),
since it does not change an unfavorable configuration.
The case in which $n$ is even remains.
Indeed, we have the following theorem.

\begin{lemma}
\label{L4010} 
The MAS for the \mbox{\rm FG} is at least $3$.
\end{lemma}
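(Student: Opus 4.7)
The plan is to prove that MAS for the \mbox{\rm FG} is at least $3$ by separately ruling out size-$1$ and size-$2$ algorithms, understanding the lemma's implicit restriction to $n$ even (since Corollary~\ref{C4020} already gives MAS $=1$ when $n$ is odd). The size-$1$ lower bound is immediate: the scheduler is permitted to crash zero robots (``at most one''), so the fault-free instance of \mbox{\rm FG} coincides with \mbox{\rm GAT}, and Theorem~\ref{T1010} gives MAS for \mbox{\rm GAT} equal to $2$; hence no single target function solves \mbox{\rm FG}.

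For the size-$2$ lower bound, suppose toward a contradiction that $\Phi = \{\phi_1, \phi_2\}$ solves the \mbox{\rm FG}. The strategy is to exploit the surjection requirement by picking the ``singleton'' assignment ${\cal A}(r_1) = \phi_1$ and ${\cal A}(r_i) = \phi_2$ for $i \geq 2$, and then crashing $r_1$ at time $0$. Since every subsequent move is then computed by a $\phi_2$-robot, the effective algorithm guiding the non-faulty robots becomes the single target function $\phi_2$ acting on the observed $n$-robot configuration; the crashed $r_1$ is indistinguishable from any non-faulty robot sharing its location.

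Next, I invoke the classical single-target-function \mbox{\rm GAT} impossibility for $n$ even (in the spirit of~\cite{DP12,BDT13} and underlying the lower-bound direction of Theorem~\ref{T1010}): for any $\phi_2$, there exist a bivalent configuration $P^*$ with $\overline{P^*} = \{\bm{p}_1, \bm{p}_2\}$ and $\mu_{P^*}(\bm{p}_1) = \mu_{P^*}(\bm{p}_2) = n/2$, an adversarial choice of local compasses (the two clusters related by a $180^{\circ}$ rotation), and an ${\cal SSYNC}$ schedule that preserve the bivalent symmetry forever. Take this $P^*$ (with $\phi := \phi_2$) as the \mbox{\rm FG} initial configuration, place $r_1$ at $\bm{p}_1$, crash it at time $0$, and keep the same compasses and schedule. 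The resulting execution then coincides exactly with the classical single-function execution, so the $n-1$ non-faulty robots remain split at two distinct points forever; the \mbox{\rm FG} requirement is violated, yielding the desired contradiction.

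The main obstacle will be verifying that the adversarial triple (bivalent $P^*$, compasses, schedule) produced by the classical impossibility for $\phi_2$ genuinely survives the modification ``designate one robot at $\bm{p}_1$ as crashed''. Since a crashed robot contributes exactly the same (zero) movement as the ``stay'' response invoked in the symmetry-preservation argument, and since $r_1$ sits at a cluster of the bivalent configuration without altering the observed multiset, the reduction should go through cleanly; formalizing this sanity check, and confirming that the adversary's choice of compasses can be made consistent with the placement of the crashed $r_1$, is the technically delicate step.
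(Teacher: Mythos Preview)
Your size-$1$ case is fine, but the size-$2$ argument has a real gap. You assign $\phi_1$ to the single crashed robot $r_1$ and $\phi_2$ to all non-faulty robots, and then try to reuse the classical bivalent impossibility for $\phi_2$. The crux of your reduction is the sentence ``the resulting execution then coincides exactly with the classical single-function execution, since a crashed robot contributes exactly the same (zero) movement as the `stay' response invoked in the symmetry-preservation argument.'' But in the classical argument both clusters typically \emph{move} (symmetrically about the midpoint); there is no ``stay response'' for $r_1$'s cluster in general. Concretely, if $\phi_2$ on the bivalent input says ``go to $\bm{o}_P$'' (the midpoint), then under $\mathcal{FSYNC}$ the $n-1$ non-faulty $\phi_2$-robots all move to the midpoint and are immediately gathered there, while the crashed $r_1$ stays at $\bm{p}_1$; this is a goal configuration for FG, so your adversary fails for this $\phi_2$. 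Any alternative schedule that keeps $r_1$'s cluster from moving must starve the non-faulty robots in that cluster and is unfair. Hence the reduction does not go through for arbitrary $\phi_2$, and the proof as written does not establish the lower bound.

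The paper's argument exploits the opposite assignment. From the zero-fault instance (which is just GAT) it extracts, via the adversarial scheduler of Lemma~\ref{L1020}, a bivalent configuration $\{\bm{p},\bm{p},\bm{q},\bm{q}\}$ and compasses $Z,Z'$ for which one of the two functions---say $\phi_1$---\emph{stays} at $\bm{p}$ while $\phi_2$ at $\bm{q}$ moves to $\bm{p}$. It then assigns the staying function $\phi_1$ to \emph{three} non-faulty robots (two at $\bm{p}$ with compass $Z$, one at $\bm{q}$ with $Z$ rotated by $\pi$, so by the bivalent symmetry all three see identical local multisets and all stay), and assigns $\phi_2$ to the single crashed robot at $\bm{q}$. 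Now no robot ever moves and the three non-faulty robots remain at two distinct points. The essential idea you are missing is that the crash should be used to neutralize the \emph{moving} function, while the non-faulty majority runs the function that is already known to stay put on this particular bivalent input.
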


\begin{proof}
Proof is by contradiction.
Suppose that there is a fault tolerant gathering algorithm 
$\Phi = \{ \phi_1, \phi_2 \}$ of size 2.

We follow the proof of Lemma~\ref{L1020}.
Consider the case of $n = 4$.
Let ${\cal R} = \{ r_1, r_2, r_3, r_4 \}$.
Then there is a configuration $P = \{ \bm{p}, \bm{p}, \bm{q}, \bm{q} \}$
such that a $\phi_1$ robot $r$ at $\bm{p}$ does not move for some $x$-$y$ 
local coordinate system $Z$
and a $\phi_2$ robot $r'$ at $\bm{q}$ moves to $\bm{p}$  for some $x$-$y$ 
local coordinate system $Z'$.

We consider the following configuration:
Robots $r_1$ and $r_2$ with $Z$ occupy $\bm{p}$.
Robot $r_3$ occupies $\bm{q}$,
whose $x$-$y$ local coordinate system is a one constructed from $Z$
by rotating $\pi$ about the origin.
Robots $r_1$, $r_2$, and $r_3$ take $\phi_1$ as their target functions.
Robot $r_4$ takes $\phi_2$, occupies $\bm{q}$, and has crashed.
Since all of them do not move,
it is a contradiction.
\end{proof}

\medskip

However, there is a symmetric algorithm of size 3 for the FG.

\begin{theorem}
\label{T4040} 
\mbox{\rm SGTA} = $\{ sgat_1, sgat_2, sgat_3\}$ solves the \mbox{\rm FG} for $n\geq 3$.
The MAS for the FG is $3$.
\end{theorem}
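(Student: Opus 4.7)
My plan is to combine the lower bound of Lemma~\ref{L4010} with an upper bound showing that SGTA solves FG for every $n \geq 3$, which together yield that the MAS for FG is $3$. The structural key is that each of $sgat_1$, $sgat_2$, $sgat_3$ coincides with $2gat$ on every configuration that is not unfavorable bivalent, i.e., whenever $m_P \neq 2$ or $k_P \neq 2$, so SGTA follows the same trajectory as 2GATA while the current configuration stays favorable. Invoking Theorem~\ref{T4010} on this favorable regime, every execution of SGTA with at most one crashed robot eventually enters one of three configurations: an all-gathered configuration (FG solved), the $(n-1,1)$-bivalent configuration whose singleton robot is the crashed one (FG solved, since every non-faulty robot is at $\bm{p}_1$), or an unfavorable bivalent configuration with $\mu_P(\bm{p}_1) = \mu_P(\bm{p}_2) = n/2$. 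Because $n/2$ requires $n$ even and Corollary~\ref{C4020} already settles the odd case, only the last configuration with $n$ even requires further work.

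From such a bivalent $P$ with $\overline{P} = \{\bm{p}_1, \bm{p}_2\}$ and $h := n/2$ robots at each point, I would adapt the proof of Theorem~\ref{T1020} to allow one crashed robot. The surjective assignment places all three target functions among the $n$ robots, so the crashed robot (if any) accounts for at most one of the three. If no robot has crashed, the argument of Theorem~\ref{T1020} applies verbatim: the scheduler can keep $P$ bivalent only by simultaneously activating an entire monochromatic group, and the surjection then forces the other group to contain two distinct target functions, whose next simultaneous activation yields $|\overline{P_{t+1}}| \geq 3$. If exactly one robot has crashed, it must reside at $\bm{p}_1$ or $\bm{p}_2$ (otherwise $P$ already has three distinct positions); say it is at $\bm{p}_1$. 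Fairness then forces an eventual activation of at least one healthy robot in $R(\bm{p}_1)$, and because the crashed robot stays at $\bm{p}_1$ any such activation produces at least three distinct occupied points (the crashed singleton, the activated target(s), and the untouched $\bm{p}_2$), making the configuration favorable.

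Once the configuration is favorable I would re-invoke Theorems~\ref{T1000} and~\ref{T4010}. In the no-crash escape we are back in the setting of Theorem~\ref{T1020} and reach $m_P = 1$. In the one-crash escape the multiplicities at $\bm{p}_1$, at the moved target, and at $\bm{p}_2$ are $1$, $a$, and $h$ with $1 \leq a \leq h-1$ and $h \geq 2$, so $\bm{p}_2$ is the unique $\succ_P$-maximum; the $k = 1$ branch of the proof of Theorem~\ref{T1000} then strictly increases $\mu_P(\bm{p}_2)$ at each subsequent activation while the crashed robot remains fixed, and the execution terminates in condition (1) or condition (3) of Theorem~\ref{T4010}, both of which solve FG. Combined with Lemma~\ref{L4010} this establishes the MAS as $3$. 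The principal obstacle is the second paragraph: making the case analysis on the interaction between the adversarial scheduler, the surjective assignment of $\{sgat_1, sgat_2, sgat_3\}$, and the (possibly missing) crashed robot uniform enough to guarantee that the unfavorable bivalent stage is escaped in finitely many activations.
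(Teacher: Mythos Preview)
Your approach matches the paper's: reduce via Lemma~\ref{L4010} and Theorem~\ref{T4010} to the unfavorable bivalent case, then adapt the contradiction argument of Theorem~\ref{T1020} to one crash by exploiting that the crashed robot's fixed position differs from every $sgat_i$ destination. The paper phrases the crash case more tersely than you do---it simply observes that the crashed robot's ``target point'' (its current position) is distinct from the target points of the non-faulty robots, so the Theorem~\ref{T1020} argument carries over without change.

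Two places in your write-up need tightening, though neither breaks the strategy. In your second paragraph, ``the untouched $\bm{p}_2$'' presumes the scheduler activates only robots at $\bm{p}_1$ in that step; it may also activate robots at $\bm{p}_2$. The conclusion survives because each $sgat_i$ sends a robot strictly \emph{away} from the opposite point along the line $\overline{\bm{p}_1\bm{p}_2}$, so the crashed robot at $\bm{p}_1$, the destinations of activated $\bm{p}_1$-robots, and the positions or destinations of $\bm{p}_2$-robots are pairwise distinct---but you should say this rather than assume $\bm{p}_2$ is untouched. In your third paragraph, the multiplicity pattern $(1,a,h)$ describes only one possible escape configuration (no $\bm{p}_2$-robot activated, all healthy $\bm{p}_1$-robots sharing one $sgat_i$); other escapes yield different linear configurations. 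A cleaner and complete finish is to observe, exactly as in the proof of Theorem~\ref{T1000}, that under $2gat$ a favorable configuration cannot transition to an unfavorable one (the inequality $\mu_{t+1}(\bm{o}_t)>n/2$ there is unaffected by a single crashed robot), so once SGTA escapes, Theorem~\ref{T4010} leaves only the FG-solving terminals (1) and (3).
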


\begin{proof}
It suffices to show that SGTA solves the FG,
since the MAS for the FG is at least 3 by Lemma~\ref{L4010}.
The proof is almost the same as that of Theorem~\ref{T1020}.

By Theorem~\ref{T4010}, 
it suffices to show that any execution ${\cal E}: P_0, P_1, \ldots$ 
starting from any unfavorable configuration
eventually reaches a favorable configuration $P_t$,
in the presence of at most one faulty robot.
We follow the proof of Theorem~\ref{T1020}.
If there is a non-faulty robot which takes $sgat_i$ for $i = 1,2,3$,
the proof of Theorem~\ref{T1020} correctly works as a proof of this theorem.

All what we need to show is the case in which
a target function, say $sgat_1$, is taken by a unique robot $r$, and $r$ crashes,
so that there is no non-faulty robot taking $sgat_1$ after the crash.
However, the same argument holds,
since the ''target point'' of $r$, i.e., the current position, 
is different from the target points of the other robots.
\end{proof}

\medskip

Let us next consider the FGP.
The FGP is definitely not easier than the FG by definition.
However, you might consider that the difference of difficulty 
between the FG and the FGP would be subtle,
since the problem of converging all robots (including faulty ones) at a point
in the presence of at most one faulty robot (the counterpart of the FGP in convergence)
is solvable by a simple Goto-Gravity-Center algorithm (CoG) of size 1.\cite{AY23}

\begin{theorem}
\label{T4050}
The \mbox{\rm FGP} is unsolvable.
That is, the MAS for the \mbox{\rm FGP} is $\infty$.
\end{theorem}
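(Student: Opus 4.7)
The proof will be by contradiction: suppose that some algorithm $\Phi$ of any size solves the FGP, and exhibit a fair $\mathcal{SSYNC}$ execution in which $\Phi$ fails. The driving observation is the operational indistinguishability, in $\mathcal{SSYNC}$, of a crashed robot from a robot whose activations the adversary simply omits: in both cases the robot appears as a stationary point in every snapshot. Because target functions are oblivious (they depend only on the current view of the configuration), no non-faulty robot can tell the two situations apart; at the same time, FGP forces the gathering point to be exactly the position of the unique crashed robot. I expect the tension between these two facts to be the core of the impossibility.

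Concretely, I would pick an initial configuration $P_0$ with two distinct occupied positions $\bm{p}_1$ and $\bm{p}_2$ (when $n > 2$, splitting the remaining robots arbitrarily between them). For $i = 1,2$, let $\mathcal{S}_i$ denote the scenario in which some robot $r_i$ at $\bm{p}_i$ crashes at time $0$; by the assumed solvability of FGP, under $\mathcal{S}_i$ every non-faulty robot must eventually stand exactly at $\bm{p}_i$. In a crash-free execution I would then invoke a fair scheduler that alternates two phases: in phase I it suppresses activations of $r_1$, so that step-by-step the views of all other robots coincide with those in $\mathcal{S}_1$; in phase II it does the symmetric thing with $r_2$. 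Fairness is preserved by activating the currently frozen robot at very sparse but infinitely many times. By obliviousness, the robots are thereby forced to make progress toward $\bm{p}_1$ during phase I and toward $\bm{p}_2$ during phase II.

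The contradiction is then obtained by choosing the switching moments carefully: whenever some non-faulty robot is on the verge of landing exactly at $\bm{p}_1$, the adversary jumps to phase II and, in a burst of FSYNC activations that include both $r_1$ and $r_2$, uses an appropriately symmetric pair of local coordinate systems for $r_1$ and $r_2$ to show that the simultaneous oblivious prescriptions ``move toward $\bm{p}_1$'' and ``move toward $\bm{p}_2$'' cancel out — typically by swapping the roles of $r_1$ and $r_2$ and producing a configuration similar to the one just before the swap. Iterating this alternation produces a fair infinite execution in which no configuration with $|\overline{P_t}| = 1$ is ever reached, contradicting the correctness of $\Phi$.

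The main technical obstacle I anticipate is ruling out algorithms that make only small, non-committal moves and thereby inch the configuration toward some third point $\bm{q} \notin \{\bm{p}_1, \bm{p}_2\}$. To close this loophole, I would argue that as soon as the algorithm has moved some non-faulty robot to a new location $\bm{q}$, the adversary freezes that robot and replays the indistinguishability argument with $\bm{q}$ playing the role previously played by $\bm{p}_1$ or $\bm{p}_2$; the oblivious nature of $\Phi$ prevents it from leveraging the fact that $\bm{q}$ is ``new''. Turning this intuition into a genuinely fair adversarial schedule, and verifying that the obliviously-computed target functions are indeed forced into the described conflict at every switch, is the delicate part of the argument and is where both the obliviousness and the $\mathcal{SSYNC}$ hypotheses are essential.
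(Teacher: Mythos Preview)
Your core insight---that in $\mathcal{SSYNC}$ a crashed robot is observationally indistinguishable from one the scheduler simply neglects---is exactly right, and it is what drives the impossibility. But the construction you build on top of it is both more elaborate than needed and genuinely incomplete. Two specific gaps: (i) the claim that during phase~I the non-frozen robots ``make progress toward $\bm{p}_1$'' does not follow from obliviousness or from the requirement that they eventually reach $\bm{p}_1$; an oblivious algorithm need not be monotone, and the robots may wander, converge toward a midpoint, or oscillate before eventually landing at $\bm{p}_1$, so your phase switches cannot be timed against any notion of progress; (ii) the ``symmetric coordinate systems make the moves cancel'' step is not sound in this model, since $r_1$ and $r_2$ may carry \emph{different} target functions (the algorithm size is arbitrary), so mirrored coordinate systems do not force mirrored destinations. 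You yourself flag the third-point loophole as unresolved; it is not a detail but the heart of the matter, and the proposed fix (freeze the robot at $\bm{q}$) just recreates the same difficulty one level down.

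The paper's argument sidesteps all of this with a one-shot trick. Run $\Phi$ with \emph{no} crashes under a \emph{central} schedule (exactly one robot activated per step), letting the scheduler act adversarially: at each step it activates a robot whose move does not yet produce $|\overline{P}|=1$, if such a robot exists. Since only one robot moves per step, the goal can be entered only from a configuration with $\overline{P_t}=\{\bm{q}_1,\bm{q}_2\}$, $\mu_t(\bm{q}_1)=n-1$, $\mu_t(\bm{q}_2)=1$; and because the scheduler would have preferred any non-completing activation, every robot at $\bm{q}_1$ has $\bm{q}_1$ as its destination in $P_t$. Now crash the single robot at $\bm{q}_2$ at time $t$: under any fair schedule thereafter, nobody moves, so $P_{t'}=P_t$ for all $t'\geq t$, contradicting that $\Phi$ solves FGP. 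No alternation, no symmetry arguments, no infinite bookkeeping---the indistinguishability is used once, at the last possible moment.
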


\begin{proof}
To derive a contradiction,
suppose that there is an algorithm $\Phi$ for the FGP.
Consider any execution ${\cal E}: P_0, P_1, \ldots$ with initial configuration
$P_0$ such that $|\overline{P_0}| > 1$ under a central schedule $\cal S$,
which activates exactly one robot at a time.
(Note that any central schedule is an $\cal SSYNC$ schedule.)
Provided that no robots crash,
$\cal S$ tries not to lead $\cal E$ to a goal configuration as long as possible
(i.e., $\cal S$ is an adversary for $\Phi$).

Since $\Phi$ is an algorithm for the FGP,
no matter which $\cal S$ is given,
there is a time $t$ such that $\overline{P_t} = \{ \bm{q}_1, \bm{q}_2 \}$,
$\mu_t(\bm{q}_1) = n-1$, $\mu_t(\bm{q}_2) = 1$,
and the destination point of each robot at $\bm{q}_1$ is $\bm{q}_1$.

If the robot at $\bm{q}_2$ crashes at $t$,
then for all $t' > t$, $P_{t'} = P_t$, a contradiction.
\end{proof}

\subsection{At most $f$ robots crash}
\label{SSffg}

{\em The $f$-fault tolerant $c$-gathering problem} ($f$F$c$G) is the problem of 
gathering {\bf all non-faulty robots} at $c$ (or less) points,
as long as at most $f$ robots have crashed.
{\em The $f$-fault tolerant $c$-gathering problem to $c$ points} ($f$F$c$GP) is 
the problem of gathering {\bf all robots} (including faulty ones) 
at $c$ (or less) points,
as long as at most $f$ robots have crashed.
When $c = 1$, $f$F$c$G is abbreviated as $f$FG,
and $f$F$c$GP is abbreviated as $f$FGP when $c = f$.
The $f$F$c$G is not harder than the $f$F$c$GP by definition.
In general, the $f$F$c$GP is not solvable if $c < f$.
Theorem~\ref{T4050} shows that $f$FGP is unsolvable when $f = 1$.
As a corollary of Theorem~\ref{T4010}, we have:

\begin{corollary}
\label{C4010}
\mbox{\rm 2GATA} solves the \mbox{\rm $1$F$2$GP}.
The MAS for the \mbox{\rm $1$F$2$GP} is $1$.
\end{corollary}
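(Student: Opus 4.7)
The plan is to derive the corollary essentially as an immediate consequence of Theorem~\ref{T4010}. Recall that 2GATA is the singleton algorithm $\{2gat\}$, so it has size~1. Therefore, it suffices to argue (a) that 2GATA solves the 1F2GP, which gives the upper bound MAS~$\leq 1$, and (b) that the MAS of any (solvable) problem is at least~$1$, which is trivial from the definition of MAS.

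For (a), I would invoke Theorem~\ref{T4010} directly. In the presence of at most one crash fault, that theorem guarantees that any execution of 2GATA starting from any initial configuration eventually reaches a terminal configuration $P$ in one of the three listed forms: either $|\overline{P}| = 1$, or $|\overline{P}| = 2$ with $\mu_P(\bm{p}_1) = \mu_P(\bm{p}_2) = n/2$ (in the even case), or $|\overline{P}| = 2$ with multiplicities $n-1$ and $1$ where the singleton point is occupied by the crashed robot. In each of the three cases, $|\overline{P}| \leq 2$; moreover $P$ is a fixed point of 2GATA (for cases~1 and~2 this follows from Step~1 of $2gat$, and for case~3 the unique robot at $\bm{p}_2$ is faulty and hence cannot move, while the $n-1$ robots at $\bm{p}_1$ see a configuration $P$ with $m_P = 2$ and $k_P = 1$, so by Step~2 of $2gat$ they target the largest point with respect to $\succ_P$, namely $\bm{p}_1$, and thus do not move). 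Since all robots (including the faulty one) occupy at most $2$ points in $\overline{P}$, this is exactly the gathering condition required by the 1F2GP.

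For (b), the lower bound MAS~$\geq 1$ is immediate: by the very definition of an algorithm, at least one target function is needed. Combining (a) and (b) yields MAS~$= 1$, as claimed. I do not anticipate any real obstacle here; the corollary is essentially a repackaging of Theorem~\ref{T4010}, whose proof has already handled the hard case analysis of how 2GATA behaves in the presence of a single crash.
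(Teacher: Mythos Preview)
Your proposal is correct and follows exactly the paper's approach: the paper states the result as an immediate corollary of Theorem~\ref{T4010} with no further proof, and you have simply spelled out the obvious details (that all three terminal configurations of Theorem~\ref{T4010} have $|\overline{P}|\le 2$ and are stable under $2gat$, and that a singleton algorithm trivially gives the matching lower bound on the MAS).
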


SGTA transforms any goal configuration of the 1F2G
into a goal configuration of the FG by Theorem~\ref{T4040},
but the transformation of any goal configuration of the 1F2G
into a goal configuration of the FGP is unsolvable by the proof of Theorem~\ref{T4050},
which shows the difference between FG and FGP.
For a configuration $P_t$,
let $m_t =  |\overline{P_t}|$, $k_t = k_{P_t}$, $\mu_t = \mu_{P_t}$,
$C_t = C_{P_t}$, $\bm{o}_t = \bm{o}_{P_t}$, $CH_t = CH(P_t)$,
and $\succ_t = \succ_{P_t}$.

\begin{lemma}
\label{L4020} 
For any $1 \leq f \leq n-1$,
\mbox{\rm 2GATA} solves the \mbox{\rm $f$FG} from any initial configuration $P_0$
satisfying $k_0 = 1$.
\end{lemma}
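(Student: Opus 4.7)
The plan is to adapt the $k_0 = 1$ portion of the proof of Theorem~\ref{T1000} — the part establishing convergence to a single point when the initial symmetricity is trivial — to tolerate up to $f$ crashes. Since $k_0 = 1$, the order $\succ_0$ is a total order on $\overline{P_0}$, so there is a unique $\succ_0$-largest point $\bm{q} \in \overline{P_0}$. I would fix this $\bm{q}$ once and for all and show that it serves as a persistent attractor that every non-faulty robot eventually reaches.

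The induction hypothesis to maintain for all $t \ge 0$ is that $\bm{q} \in \overline{P_t}$ and one of the following holds: (i) $k_t = 1$ with $\bm{q}$ the unique $\succ_t$-largest point of $\overline{P_t}$, or (ii) $k_t \ge 2$ with $\bm{q} = \bm{o}_t$ and $\mu_t(\bm{q}) > \mu_t(\bm{q}')$ for every $\bm{q}' \in \overline{P_t} \setminus \{\bm{q}\}$. In either alternative, the value of $2gat(P_t)$ corresponds, in the global frame, to the target $\bm{q}$: in (i) it is returned by the largest-point branch, and in (ii) it is $\bm{o}_t = \bm{q}$ returned by the $m_t \ge 3, k_t \ge 2$ branch (the $m_t = 2, k_t = 2$ branch is ruled out because its two points would have equal multiplicity, contradicting the strict condition in (ii)). Thus every activated non-faulty robot moves to $\bm{q}$, while faulty and inactive robots stay put; hence $\mu_{t+1}(\bm{q}) \ge \mu_t(\bm{q})$ and $\mu_{t+1}(\bm{q}') \le \mu_t(\bm{q}')$ for $\bm{q}' \ne \bm{q}$. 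If at least one non-faulty robot off $\bm{q}$ is activated at $t$, the first inequality is strict, so $\bm{q}$ becomes (or remains) the unique maximum-multiplicity point of $\overline{P_{t+1}}$, forcing any nontrivial rotation of $G_{P_{t+1}}$ to fix $\bm{q}$; this reinstates either (i) ($k_{t+1} = 1$) or (ii) ($\bm{q} = \bm{o}_{t+1}$). If no non-faulty off-$\bm{q}$ robot is activated, then $P_{t+1} = P_t$ and the invariant is preserved trivially. The base case $t = 0$ is immediate from $k_0 = 1$.

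Once the invariant is in hand, fairness of the $\mathcal{SSYNC}$ scheduler closes the argument: every non-faulty robot is activated infinitely often; at its first activation it computes target $\bm{q}$ and moves there in a single step, and at every subsequent activation it finds $\bm{q}$ at its local origin and stays. After finitely many activations every non-faulty robot is co-located at $\bm{q}$, which is a goal configuration of the fFG.

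The main subtlety I anticipate is preserving the uniqueness of $\bm{q}$ as the maximum-multiplicity point across the possible $k_t = 1 \leftrightarrow k_t \ge 2$ transition: when $\mu_t(\bm{q})$ first surpasses the common multiplicity of an orbit of other points, the rotation group $G_{P_t}$ could in principle enlarge, and one must verify that any new nontrivial rotation still fixes $\bm{q}$ (placing it at $\bm{o}_t$) rather than displacing it. The orbit-fixing observation above settles this, but some care is needed in the pre-first-activation phase, during which $P_t = P_0$ and $\bm{q}$ may be only weakly maximal in multiplicity — there the invariant holds in its $k_t = 1$ form directly by the $k_0 = 1$ hypothesis, and strict uniqueness kicks in from the first non-faulty off-$\bm{q}$ activation onwards.
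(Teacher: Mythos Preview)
Your proof is correct and follows essentially the same approach as the paper's: fix the $\succ_0$-largest point $\bm{q}$ (the paper calls it $\bm{p}_0$), show it is the common destination of every activated non-faulty robot at every step, and invoke fairness. Your treatment is in fact more careful than the paper's, which asserts ``$k_t = 1$ for all $t$'' --- a claim that can fail (e.g., when the departure of an asymmetry-breaking robot restores a rotational symmetry with $\bm{q}$ at the center) --- whereas you correctly split into the cases $k_t = 1$ and $k_t \ge 2$, handling the latter via the orbit-fixing observation that forces $\bm{q} = \bm{o}_t$; this is exactly the invariant used in the $k_0=1$ branch of Theorem~\ref{T1000}, which the paper's proof of Lemma~\ref{L4020} implicitly relies on.
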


\begin{proof}
Consider any initial configuration $P_0$ satisfying $k_0 = 1$,
and let ${\cal E}: P_0, P_1, \ldots$ be any execution,
provided that at most $f$ robots crash.
Let $F_t$ be the set of faulty robots at time $t$.
For any $t$, $F_t \subseteq F_{t+1}$.
Let $F = \lim_{t \rightarrow \infty} F_t$.
Then $|F| \leq f$.

By the definition of 2GATA,
$k_t = 1$ for all $t$,
and every non-faulty robots activated at $t$ moves to $\bm{p}_0$,
where $\bm{p}_0 \in \overline{P_0}$ is the largest point with respect to $\succ_0$.
Note that $\bm{p}_0 \in \overline{P_t}$ and 
it is still the largest point in $\overline{P_t}$ with respect to $\succ_t$.

If $r \not\in F$ and $r$ is not at $\bm{p}_0$,
then eventually $r$ is activated and move to $\bm{p}_0$.
Thus all non-faulty robots eventually gather at $\bm{p}_0$.
\end{proof}

\medskip

Note that even if a configuration $P_t$ such that
$k_t = 1$ and $\mu_t(\bm{p_0}) = n-1$ is reached,
we cannot terminate 2GATA (to solve the $f$FG),
since the robot not at $\bm{p_0}$ may be non-faulty.

\begin{lemma}
\label{L4030} 
For any $1 \leq f \leq n-1$,
provided that at most $f$ robots crash,
\mbox{\rm 2GATA} transforms any configuration $P_0$ 
into a configuration $P$ satisfying one of the following conditions :
\begin{enumerate}
 \item 
$P$ is a goal configuration of the \mbox{\rm $f$FG}, or
\item
$\overline{P} = \{ \bm{p}_1, \bm{p}_2 \}$,
and $\mu_{P}(\bm{p_1}) = \mu_P(\bm{p}_2) = n/2$.
\end{enumerate}
\end{lemma}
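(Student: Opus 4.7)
The plan is to mirror the proof of Theorem \ref{T4010} and generalize it to up to $f$ crashes, while invoking Lemma \ref{L4020} to absorb the subcase $k_t=1$. I associate with every reachable configuration $P_t$ the triple $\lambda(P_t)=(k_t,m_t,-\mu_t)$ ordered lexicographically, and recall from the arguments inside Theorem \ref{T1000} that whenever at least one non-faulty robot actually moves, the resulting transition strictly decreases $\lambda$. The remainder of the proof is then a reachability argument in the transition graph induced by 2GATA.

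First I would observe that if at any moment $k_t=1$, then Lemma \ref{L4020} applied to the suffix starting at $P_t$ already yields a goal of the $f$FG, so outcome (1) holds. I may therefore restrict attention to executions in which $k_t\geq 2$ for every $t$. Within this regime the only non-trivial subcase is $m_t\geq 3$: if $m_t=1$ the configuration is already a goal of the $f$FG, and if $m_t=k_t=2$ then $P_t$ is unfavorable, step 1 of $2gat$ sends no robot anywhere, so $P_{t'}=P_t$ for every $t'\geq t$, giving outcome (2).

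Next, for the surviving subcase $m_t\geq 3$ and $k_t\geq 2$, step 3 of $2gat$ dispatches every activated non-faulty robot to $\bm{o}_t$. Let $F$ denote the (eventually fixed) crashed set, with $|F|\leq f\leq n-1$, so at least one non-faulty robot exists. If every non-faulty robot already sits at $\bm{o}_t$, then all non-faulty robots are at one point and outcome (1) holds immediately. Otherwise, pick any non-faulty $r$ not at $\bm{o}_t$; by fairness $r$ is eventually activated, and the resulting step either strictly increases $\mu_P(\bm{o}_P)$, strictly decreases $m$, or (via Lemma \ref{L1005}, in the non-linear case) drops $k$ to $1$. The first two possibilities strictly decrease $\lambda$; the third puts us in the hypothesis of Lemma \ref{L4020}. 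Since $\lambda$ is well-founded, this cannot continue indefinitely, so the execution must reach a configuration of type (1) or (2) in finite time.

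The main obstacle I anticipate is the linear analogue of part (I) of the proof of Theorem \ref{T1000}. In that setting $k_t=2$ is preserved across many steps, Lemma \ref{L1005} does not directly apply, and the symmetry-breaking argument turns on a delicate case analysis of whether $\bm{o}_t\in \overline{P_t}$ and whether $m_t$ is even or odd. With up to $f$ faulty robots possibly pinned at symmetric positions, I need to verify that fairness still forces a non-faulty activation that either raises $\mu$ at $\bm{o}_t$ or shrinks $m$; the cleanest way I foresee is an induction along the structure of $\overline{P_t}$ paralleling the single-crash analysis of Theorem \ref{T4010}, with the crashed set $F$ playing the role previously played by the lone possibly-crashed robot.
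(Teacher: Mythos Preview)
Your approach is essentially the paper's: use the triple $\lambda(P)=(k_P,m_P,-\mu_P)$, invoke Lemma~\ref{L4020} once $k_t=1$, and for $k_t\geq 2$ argue that either every non-faulty robot already sits at $\bm{o}_t$ (a goal of the $f$FG) or some non-faulty robot off $\bm{o}_t$ can be activated, forcing progress toward a sink of the transition graph.

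The one place you diverge is your final paragraph, where you flag the linear case as an obstacle requiring a fresh $f$-crash case analysis paralleling Theorem~\ref{T1000}(I). The paper sidesteps this entirely. Its key observation is that the directed acyclic graph $DG=(V,\vdash)$ from Theorem~\ref{T4010} is built for the \emph{fault-free} system, and any execution in which up to $f$ robots crash is still a path in that same $DG$. Hence the well-foundedness of $DG$---including the linear subcase---was established once, in the crash-free setting, and needs no rework for general $f$. All that remains is to check that a faulty execution cannot stall before reaching a sink; the paper handles this uniformly by the dichotomy $A_t\subseteq F_t$ versus $A_t\not\subseteq F_t$ (with $A_t$ the set of robots not at $\bm{o}_t$ and $F_t$ the crashed set), which does not depend on whether $P_t$ is linear. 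Your anticipated obstacle therefore does not arise, and once you drop that paragraph your argument coincides with the paper's.
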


\begin{proof}
Consider any execution ${\cal E}:P_0, P_1, \ldots$ starting from any configuration $P_0$.
Proof is by contradiction.
By Lemma~\ref{L4020}, to derive a contradiction,
assume that $k_t \geq 2$ for all $t \geq 0$.

We follow the proof of Theorem~\ref{T4010},
and construct a directed acyclic graph $DG = (V, \vdash)$ for $P_0$,
where $V$ is the set of configurations reachable from $P_0$ by 2GATA.
It represents the transition relation between configurations,
when no robots are faulty.
However, any execution $\cal E$ in which some robots may crash is 
also represented by a path in $DG$ from $P_0$.

Consider any configuration $P_t$ such that $k_t \geq 2$.
Let $A_t$ be the set of robots not at $\bm{o}_t$,
and let $F_t$ be the set of faulty robots at $t$.
If $A_t \subseteq F_t$,
all non-faulty robots are at $\bm{o}_t$,
and hence $P_t$ is a goal configuration of the $f$FG.

If $A_t \not\subseteq F_t$, 
since there is a non-faulty robot $r$ not at $\bm{o}_t$,
there is a configuration $Q$ such that $P_t \vdash Q$
and $\lambda(P_t) > \lambda(Q)$.

Since $DG$ is a directed acyclic graph,
eventually the $f$FG is solved (satisfying $A_t \subseteq F_t$),
or $P_t$ reaches a sink $P_f$ of $DG$ 
(since $(V,>)$ is a well-founded set).
\end{proof}

\begin{theorem}
\label{T4025} 
\mbox{\rm 2GATA} solves both the \mbox{\rm $f$F2G} and the \mbox{\rm $f$F($f+1$)GP},
for all $f = 1, 2, \ldots , n-1$.
The MAS for each of the problems \mbox{\rm $f$F2G} and \mbox{\rm $f$F($f+1$)GP} is 1,
for all $f = 1, 2, \ldots , n-1$.
\end{theorem}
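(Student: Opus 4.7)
The plan is to derive the theorem as a direct corollary of Lemma~\ref{L4030}, which already does the heavy lifting. That lemma establishes that, under at most $f$ crashes, any execution of 2GATA from any initial configuration reaches a configuration $P$ of exactly one of two types: (i) a goal configuration of the $f$FG, where all non-faulty robots occupy a single point, or (ii) a bivalent configuration with $\overline{P}=\{\bm{p}_1,\bm{p}_2\}$ and $\mu_P(\bm{p}_1)=\mu_P(\bm{p}_2)=n/2$. Given this dichotomy, it suffices to check that both output types satisfy the goal conditions of $f$F2G and of $f$F$(f+1)$GP, and that 2GATA keeps the configuration inside the goal set once reached.

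First I would verify the $f$F2G goal, which demands at most $2$ distinct positions for the non-faulty robots. In type~(i) the non-faulty robots occupy a single point; in type~(ii) the whole swarm sits on two points, so the non-faulty robots lie on at most two positions. Both are acceptable. Next I would verify the $f$F$(f+1)$GP goal, which demands at most $f+1$ distinct positions for all robots, crashed ones included. In type~(i), the non-faulty share one point while the at most $f$ crashed robots contribute at most $f$ additional distinct positions, giving a total of at most $1+f=f+1$. In type~(ii) the whole swarm uses exactly two positions, and since $f\geq 1$ we have $2\leq f+1$. Thus both output types solve both problems.

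The remaining issue is stability: once a goal is reached, 2GATA must not step back out of the goal set. In type~(ii), we have $m_P=k_P=2$, so Step~1 of $2gat$ makes every robot's destination its current position, and the configuration is invariant. In type~(i), stability is already embedded in the proofs of Lemmas~\ref{L4020} and~\ref{L4030}: the gathering point is preserved as the $\succ_P$-largest point of $\overline{P}$ (or as $\bm{o}_P$ when $k_P\geq 2$), so no non-faulty robot ever moves off it. I expect this type-(i) stability check to be the most delicate step, since one must confirm that the at most $f$ crashed robots residing at other positions cannot displace the gathering point from being $\succ_P$-maximum; however, the invariants tracked in Lemmas~\ref{L4020} and~\ref{L4030} already supply exactly this guarantee. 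Finally, since 2GATA consists of the single target function $2gat$, its size is $1$, matching the trivial lower bound $1$ on the MAS, so the MAS of each of $f$F2G and $f$F$(f+1)$GP equals $1$.
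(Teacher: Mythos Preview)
Your proposal is correct and follows the same approach as the paper's one-line proof (``Immediate by Lemmas~\ref{L4020} and~\ref{L4030}''); you simply spell out why each of the two alternatives produced by Lemma~\ref{L4030} meets the goal conditions of both $f$F2G and $f$F$(f+1)$GP, and you add an explicit stability check that the paper leaves implicit.
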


\begin{proof}
Immediate by Lemmas~\ref{L4020} and \ref{L4030}. 
\end{proof}

\medskip

We already know that SGTA solves $f$FG if $f = 1$ by Theorem~\ref{T4040}.

\begin{theorem}
\label{T4060}
\mbox{\rm SGTA} solves the \mbox{\rm $f$FG}
for all $f = 2, 3, \ldots, n-1$.
The MAS for the \mbox{\rm $f$FG} is $3$.
\end{theorem}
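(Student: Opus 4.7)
The plan is to extend the argument for Theorem~\ref{T4040} from $f=1$ to general $f\geq 2$. The lower bound is immediate from Lemma~\ref{L4010}, since any algorithm that solves the $f$FG also solves its restriction to executions with at most one crash, namely the FG. For the upper bound I first use Lemma~\ref{L4030} to reduce to unfavorable configurations: 2GATA takes every initial configuration to either a goal of $f$FG or an unfavorable $\overline{P}=\{\bm{p}_1,\bm{p}_2\}$ with $\mu_P(\bm{p}_1)=\mu_P(\bm{p}_2)=n/2$, and because SGTA agrees with 2GATA on every favorable configuration, the same reachability is inherited by SGTA. It then suffices to show that every unfavorable configuration encountered under SGTA is eventually broken, after which the monotone potential $\lambda(P)=(k_P,m_P,-\mu_P)$ of Theorem~\ref{T4010} drives the execution toward a goal of $f$FG.

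The core case analysis on an unfavorable configuration is organized by the counts $(F_i,N_i)$ of faulty and non-faulty robots at $\bm{p}_i$, with $F_i+N_i=n/2$, and exploits the fact that for a robot at $\bm{p}_i$ the three targets assigned by $sgat_1,sgat_2,sgat_3$ on an unfavorable configuration are pairwise distinct and distinct from both $\bm{p}_1$ and $\bm{p}_2$. If $N_1=0$ or $N_2=0$ the non-faulty are already gathered. When $F_1\geq 1$ and $F_2\geq 1$, any non-empty activation of non-faulty robots produces at least three distinct points, because $\bm{p}_1$ and $\bm{p}_2$ both remain occupied by faulty robots while the activated ones land outside $\{\bm{p}_1,\bm{p}_2\}$. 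When exactly one $F_i$ is zero, say $F_1=0$ and $F_2\geq 1$, either the non-faulty at $\bm{p}_1$ carry at least two distinct target functions --- so any activation yields multiple new points or a new point alongside unmoved non-faulty --- or they all share one target, in which case the adversary can temporarily hold unfavorability by activating only $\bm{p}_1$'s robots together; fairness then forces an eventual activation of a non-faulty at $\bm{p}_2$, and since $F_2\geq 1$ keeps $\bm{p}_2$ occupied, the moved robot produces a third distinct point. The residual subcase $F_1=F_2=0$ has no crashes and is handled by Theorem~\ref{T1020}.

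The most delicate point I foresee is the subcase where the goal configuration reached by 2GATA is itself unfavorable, which is possible only when $f=n/2$ and the non-faulty cluster at one point opposite the faulty. There SGTA still tries to break the unfavorability even though the $f$FG goal is already met, and I would handle this by observing that either all non-faulty share a common target function --- in which case SGTA translates the whole cluster rigidly and every subsequent configuration remains a goal of $f$FG --- or their targets differ and the cluster splits transiently, after which the faulty point carries the unique maximum multiplicity in a now-favorable configuration and $2gat$'s largest-point rule pulls every non-faulty onto it, producing an $m=1$ configuration that is stable under SGTA. The main care I expect is in bookkeeping this back-and-forth between favorable and unfavorable phases so that, combined with the fairness-driven breaking step above, it terminates in finitely many rounds.
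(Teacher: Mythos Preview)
Your plan is essentially the paper's own proof: lower bound from Lemma~\ref{L4010}, reduction to the unfavorable case via Lemma~\ref{L4030} (using that SGTA coincides with 2GATA on favorable configurations), a breaking argument for unfavorable configurations, and the DAG/$\lambda$-potential argument on the favorable side. Your $(F_i,N_i)$ case split simply unpacks the paper's compact contradiction ``all of $R_1$ non-faulty with one target, hence $sgat_2,sgat_3\in R_2$, hence a later activation in $R_2$ breaks unfavorability.''

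Two remarks. First, your ``most delicate point'' is real and is actually a case the paper's proof does not cover: when $R_2$ contains no non-faulty robot (i.e., exactly $n/2$ crashes, which requires $f\ge n/2$), the paper's sentence ``let $t$ be the first time a non-faulty robot in $R_2$ is activated'' is vacuous, so the contradiction is never reached. Your handling---either all non-faulty share a target and each full activation rigidly translates them while keeping the $f$FG goal, or any other activation produces a favorable linear configuration in which the faulty point $\bm{p}_2$ has the strict maximum multiplicity and hence becomes the $\succ_P$-largest point, so Lemma~\ref{L4020} drags every non-faulty robot onto it---closes this gap cleanly. (Your ``$f=n/2$'' should read ``exactly $n/2$ crashes, hence $f\ge n/2$''.) Second, the back-and-forth you anticipate never materializes: every transition with crashes is also a legal fault-free SSYNC transition (a crashed robot behaves like an unactivated one), and fault-free 2GATA never sends a favorable configuration to an unfavorable one (proof of Theorem~\ref{T1000}); so once favorable, always favorable, and the $\lambda$-argument terminates without further bookkeeping.
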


\begin{proof}
Since there is no algorithm of size 2 for the 2FG by Lemma~\ref{L4010},
it suffices to show that SGTA solves $f$FG.
Consider any execution ${\cal E}: P_0, P_1, \ldots$
starting from any configuration $P_0$.
We show that eventually $\cal E$ reaches a goal configuration of $f$FG,
provided that at most $f$ robots have crashed.
By Lemma~\ref{L4030},
we may assume that $P_0$ is unfavorable,
i.e., $m_0 = 2$ and $k_0 = 2$ hold.
By the definition of SGTA,
$P_t$ is linear for all $t \geq 0$.

Suppose first that there is a time $t$ such that $P_t$ is 
(linear and) favorable, 
i.e., either $m_t \not= 2$ or $k_t \not= 2$ holds.
Like the proof of Lemma~\ref{L4030},
consider $DG = (V, \vdash)$ for $P_t$.
Using the same argument as the proof of Lemma~\ref{L4030},
eventually the $f$FG is solved (satisfying $A_t \subseteq F_t$),
or $P_t$ reaches a sink $P_f$ of $DG$ 
(since $(V,>)$ is a well-founded set).
By (I) of the proof of Theorem~\ref{T1000}, 
the sink $P_f$ is a goal configuration of $f$FG,
i.e., $m_{P_f} = 1$.

Without loss of generality,
We may assume that $\overline{P_0} = \{ \bm{p}_1, \bm{p}_2 \}$,
and $\mu_0(\bm{p}_1) = \mu_0(\bm{p}_2) = n/2$.
We assume that $m_{P_t} = 2$ and $k_{P_t} = 2$ for all $t \geq 1$,
to derive a contradiction.
We follow the proof of Theorem~\ref{T4040}.
Let $R_i$ be the robots initially at $\bm{p}_i$ for $i = 1,2$.
Suppose that a non-faulty robot at $\bm{p}_1$, say $r_1$, is activated at time $0$.
Then all robots at $\bm{p}_1$ are non-faulty and must be activated simultaneously.
Furthermore, they all take the same target function, say $sgat_1$,
since otherwise $m_1 \geq 3$ holds.
As long as the robots $R_1$ are activated,
a configuration $P_t$ satisfies $\overline{P_t} = \{ \bm{q}_1, \bm{p}_2 \}$
for some $\bm{q}_1 \in R^2$, and $\mu_0(\bm{q}_1) = \mu_0(\bm{q}_2) = n/2$,
where the set of robots in $R_1$ is at $\bm{q}_1$.

Since the scheduler is fair,
let $t$ be the first time instant that a non-faulty robot in $R_2$, 
say $r_2$, is activated.
Then all robots at $\bm{p}_2$ are non-faulty and must be activated simultaneously.
Furthermore, they all take the same target function.
It is a contradiction, 
since both $sgat_2$ and $sgat_3$ are taken by some robots in $R_2$.
Thus there is a time $t$ such that $P_t$ is linear and favorable,
and the proof completes.
\end{proof}

\begin{theorem}
\label{T4070} 
The \mbox{\rm $f$FGP} is unsolvable for $f = 2, \ldots , n-1$.
\end{theorem}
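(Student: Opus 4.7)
My approach would extend the argument of Theorem~\ref{T4050} (the case $f=1$) to arbitrary $f \geq 2$, combining an adversarial central ${\cal SSYNC}$ scheduler with simultaneous crashes of several singleton robots at a single ``forced'' moment. Assume for contradiction that $\Phi$ solves the $f$FGP, and start from an initial configuration $P_0$ with $|\overline{P_0}|=f+1$. Run $\Phi$ under a central scheduler $\cal S$ that activates one robot per step and always prefers an activation keeping $|\overline{P}|>f$ whenever possible. Since $\Phi$ must solve the $f$FGP even under $\cal S$ with no crashes, the count must eventually drop to $\leq f$; let $t$ be the first such time. Because a single-robot activation can change $|\overline{P}|$ by at most one, we have $|\overline{P_{t-1}}|=f+1$, and the robot activated at time $t-1$ is necessarily a singleton whose destination is some already-occupied point.

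Next I would analyze the structure of the forced configuration $P_{t-1}$. Activating a robot at a multi-occupied point cannot reduce $|\overline{P}|$ (its source remains occupied), and activating a singleton whose destination is a new point keeps $|\overline{P}|$ unchanged. Consequently, if $\cal S$ is truly forced to reduce, then every robot at a multi-occupied point is stable (destination equals current position), and every non-stable non-faulty robot sits alone at a distinct singleton whose destination coincides with one of the other $f$ occupied points. Let $k$ be the number of such non-stable singletons, and $s$ the total number of singleton points. Summing multiplicities over the $f+1$ occupied points gives $n \geq s + 2(f+1-s)$, so whenever $n \geq f+2$ at least one point is multi-occupied and therefore $k \leq s \leq f$. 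We can then spend our $f$ crashes on all $k$ non-stable singletons; after these crashes $P_{t-1}$ is unchanged, every remaining non-faulty robot is stable, and no subsequent activation can ever modify the configuration. Hence $|\overline{P_{t'}}|=f+1>f$ holds forever, contradicting that $\Phi$ solves the $f$FGP.

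The main obstacle I expect is the boundary case $n=f+1$ (equivalently $f=n-1$), where every occupied point in a forced configuration may be a singleton, so $k$ can reach $f+1$ and exceed the crash budget. For this case I would argue separately using symmetry: take $P_0$ to be the vertex set of a regular $(f+1)$-gon, so that $|G_{P_0}|=f+1\geq 3$ and $\bm{o}_{P_0}\notin \overline{P_0}$. Any rotation-covariant behavior of $\Phi$ on $P_0$ must send its robots to the unique fixed point $\bm{o}_{P_0}$, which merges no two occupied points and only creates a new one; any non-rotation-covariant behavior can be thwarted by the adversary's freedom in each robot's local coordinate system (as in the proof of Lemma~\ref{L1020}). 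A case analysis along these lines, combined with an adversarial crash pattern that distributes faulty robots across distinct orbits of $G_{P_0}$, should rule out every possible assignment of target functions from $\Phi$ on $P_0$ and complete the proof.
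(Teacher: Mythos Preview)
Your argument has a fatal gap. You claim that under the adversarial central scheduler $\cal S$, ``the count must eventually drop to $\leq f$'' because $\Phi$ solves the $f$FGP; but $\Phi$ is only required to succeed under \emph{fair} schedules, and nothing forces your $\cal S$ to be fair. Worse, in what you treat as the main case ($n\geq f+2$) the ``forced'' moment provably never occurs: whenever $|\overline{P_t}|=f+1$ and $n\geq f+2$, pigeonhole gives a multi-occupied point, and activating any robot there keeps $|\overline{P_{t+1}}|\geq f+1$ regardless of its destination (the source stays occupied). Hence a central scheduler is \emph{never} forced to reduce when $n\geq f+2$; it can stall forever by repeatedly activating a robot at a multi-occupied point, at the cost of starving others. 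Your structural analysis of the ``forced'' configuration is therefore vacuous for this case (and is also mis-stated: at a genuine forced moment every point must be a singleton, so necessarily $n=f+1$). That leaves only the boundary $n=f+1$, for which you offer a symmetry sketch that is far too loose to stand as a proof.

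The paper's proof handles fairness head on rather than waiting for a forced step. It lets the adversary activate at each time a \emph{largest set} $A_t$ of robots whose simultaneous activation still avoids the goal, and then studies the set $U$ of robots activated only finitely often. The key step bounds $|U|\leq f$: robots in $U$ occupy distinct points (else one could be added to $A_{t_0}$), each of their destinations lies among those same points (same reason), and the resulting functional digraph on these points, having outdegree~$1$ everywhere, contains a directed cycle that can be adjoined to $A_{t_0}$ without reaching a goal --- contradicting maximality of $A_{t_0}$. Crashing all of $U$ (at most $f$ robots) then converts the schedule into a fair one under which the execution still never terminates, giving the desired contradiction uniformly for all $2\leq f\leq n-1$. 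Your approach could perhaps be repaired by inserting an analogous starvation-bounding argument, but the forced-moment idea alone does not close the proof.
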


\begin{proof}
A configuration $P$ is a goal configuration if $m_P = |\overline{P}| \leq f$.
Proof is by contradiction.
Suppose that there is an algorithm $\Phi$ for the $f$FGP.

We arbitrarily choose a configuration $P_0$ such that $m_0 > f$,
and consider any execution ${\cal E}: P_0, P_1, \ldots$ from $P_0$,
provided that no crashes occur,
under a schedule $\cal S$ we specify as follows:
For $P_t$, let $A_t$ be a largest set of robots such that 
its activation does not yield a goal configuration.
If there are two or more such largest sets, 
$A_t$ is an arbitrary one.
Then $\cal S$ activates all robots in $A_t$ at time $t$,
and the execution reaches $P_{t+1}$, 
which is not a goal configuration.
($A_t$ may be an empty set.)

Since $\cal E$ does not reach a goal configuration forever,
$\cal S$ violates the fairness,
because $\Phi$ is an algorithm for $f$FGP.
Let $U$ be the set of robots which are not activated 
infinitely many times in $\cal E$.
We assume that $\cal S$ makes $U$ as small as possible.

We show that $|U| \leq f$.
The proof is by contradiction.
Suppose that $f < |U|$ to derive a contradiction.
Let $t_0$ be a time such that any robot in $U$ is not activated thereafter.
The positions of the robots in $U$ at $t_0$ is denoted by 
$Q = \{\bm{x}_{t_0}(r) \mid r \in U\}$.
Then obviously $\overline{Q} \subseteq \overline{P_t}$ for all $t \geq t_0$.
If $\bm{x}_{t_0}(r) = \bm{x}_{t_0}(r') \in Q$ for some $r, r' \in U$,
then a contradiction to the maximality of $A_{t_0}$ is derived;
activating $A_{t_0} \cup \{ r \}$ does not yield a goal configuration.
Thus, exactly one robot in $U$ resides at each $\bm{q} \in \overline{Q}$,
and hence $|\overline{Q}| = |U| > f$.

Let $r$ be any robot in $U$,
and let $\bm{q}$ be its destination at time $t_0$.
If $\bm{q} \not\in \overline{Q}$,
then a contradiction to the maximality of $A_{t_0}$ is derived;
activating $A_{t_0} \cup \{ r \}$ does not yield a goal configuration,
since $|\overline{(Q \setminus \{ \bm{x}_{t_0}(r) \})} \cup \{ \bm{q} \}| 
\geq (|Q| - 1) +1 = |U| > f$.
Thus, $\bm{q} \in \overline{Q}$.

Consider a directed graph $G = (\overline{Q}, E)$,
where $(\bm{p},\bm{q}) \in E$,
if and only if the destination of a robot $r \in U$ at $\bm{p}$ at time $t_0$ is $\bm{q}$.
Recall that there is a bijection between $U$ and $\overline{Q}$.
Since the outdegree of every vertex $\bm{p} \in \overline{Q}$ is 1,
there is a directed loop $L \subseteq \overline{Q}$ (which may be a self-loop) in $G$.
It is a contradiction to the maximality of $A_{t_0}$;
activating $A_{t_0} \cup L$ does not yield a goal configuration.
Thus, $|U| \leq f$.

Suppose that at $t_0$ all robots in $U$ crash,
and consider a schedule ${\cal S}'$ that activates 
$A_t$ for all $0 \leq t \leq t_0-1$,
and $A_t \cup U$ for all $t \geq t_0$.
Then the execution $\mathcal{E}'$ starting from $P_0$ under ${\cal S}'$
is exactly the same as $\mathcal{E}$,
and does not reach a goal configuration, despite that ${\cal S}'$ is fair; 
$\Phi$ is not an algorithm for the $f$FGP.
It is a contradiction.
\end{proof}

\section{Conclusions}
\label{Sconclusion}

We have proposed the minimum algorithm size (MAS) as an essential measure to
measure the complexity of a problem,
and showed an infinite hierarchy of the problems with respect to their MASs 
to justify this proposal;
the set of problems with MAS being $c$ is not empty for each integer $c > 0$ and $\infty$,
and hence target function is a resource irreplaceable, e.g., with time.
Then, under the $\cal SSYNC$ scheduler,
we have established the MASs for the self-stabilizing gathering and related problems.

Main results are summarized in Table~\ref{T0010}.
For example, we have clarified the difference between two similar
fault tolerant gathering problems, the FG and the FGP;
the MAS for the FG is 3, while the one for the FGP is $\infty$.
It might be interesting to observe that, for $2 \leq c \leq n$, 
the MAS is $c$ for the $c$SCT and the one for the $c$GAT is 1.
This fact is counter-intuitive,
since gathering objects (which decreases entropy) seems to be a harder task than
scattering objects (which increases entropy).

An open question asks how the time complexity decreases,
as the number of target functions that an algorithm can use increases.
It may be interesting if a kind of time acceleration theorem holds.

Another question asks the relation between the MAS of a problem $\Pi$
and $\sigma = \min_{G \in {\cal G}} \sigma(G)$,
where $\cal G$ is the set of $\Pi$'s goal configurations
and $\sigma(G)$ is the symmetricity of $G$.
Observe that identifiers are mainly used to break symmetry,
i.e., to decrease the symmetricity.

Finally, we conclude the paper by giving a list of some open problems.

\begin{enumerate}
\item
Complete Table~\ref{T0010} for the problems under the $\cal FSYNC$ scheduler.

\item 
What is the MAS for the gathering problem under the $\cal ASYNC$ scheduler?

\item
For a fixed pattern $G$,
what is the MAS for the pattern formation problem for $G$?

\item
What is the MAS for the $f$-fault tolerant 
convergence problem to $f$ points, for $f \geq 3$?

\item
What is the MAS for the Byzantine fault tolerant gathering problem?

\item
Characterize the problem whose MAS is $2$.

\item
Characterize the problem whose MAS is $\infty$.

\item
For a homonymous distributed network with topology $G$,
consider the number of identifiers necessary and sufficient to 
solve the leader election problem.

\end{enumerate}

\end{document}